\documentclass[prx,twocolumn,floatfix,superscriptaddress,longbibliography,notitlepage]{revtex4-1}

\usepackage{graphicx, color, graphpap}% Include figure files
\usepackage{enumitem}
\usepackage{amssymb}
\usepackage{amsthm}
\usepackage{multirow}
\usepackage[colorlinks=true,citecolor=blue,linkcolor=magenta]{hyperref}
\usepackage[T1]{fontenc}
\usepackage{bbm}
\usepackage{thmtools,thm-restate}
\usepackage{verbatim}
\usepackage{mathtools}
\usepackage{titlesec}
\usepackage{amsmath}
% Algorithms package
%\usepackage[linesnumbered,ruled,vlined]{algorithm2e}
%\SetKwInput{kwInit}{Init}

%\usepackage{graphicx}
%\usepackage{bm}
%\usepackage{amsthm}
%\usepackage{amsfonts}
%\usepackage{xcolor}
%\usepackage{bbm}
%\usepackage{appendix}

%\setcounter{topnumber}{2}
%\setcounter{bottomnumber}{2}
%\setcounter{totalnumber}{4}

%\usepackage[notcite]{showkeys}

%\oddsidemargin 0 cm \evensidemargin 0 cm
%\textwidth 16.5 cm
%\topmargin -1.0 cm \textheight 23 cm
\long\def\ca#1\cb{} %Use for commenting out: \ca...\cb

% Defined commands

%\newcommand{\pat}[1]{\textcolor{red}{[P: #1]}}

\newcommand{\ketbra}[2]{| \hspace{1pt} #1 \rangle \langle #2 \hspace{1pt} |}

\newcommand{\ket}[1]{|#1\rangle}               %ket
              %colon in math with less space
\newcommand{\bra}[1]{\langle #1|}              %bra
\newcommand{\dya}[1]{\ket{#1}\!\bra{#1}}
        %dyad
      %abstract inner product
      %quantum inner product
 %matrix element

%%% Circuits
% Power of one Qubit

% Power of two Qubits

% Hilbert-Schmidt Test

%%%% Complexity classes

%%%% Computational problems

%%% General CS
\newcommand{\poly}{\operatorname{poly}}

\newcommand{\HC}{\mathcal{H}}

\newcommand{\OC}{\mathcal{O}}

\newcommand{\SC}{\mathcal{S}}

\newcommand{\Tr}{{\rm Tr}}

\newcommand{\Var}{{\rm Var}}

               %average
\renewcommand{\geq}{\geqslant}
\renewcommand{\leq}{\leqslant}

\renewcommand{\vec}[1]{\boldsymbol{#1}}  % Bold vectors instead of arrow vectors

\newcommand{\ad}{^\dagger}
 %Frobenius inner product

\newcommand*{\id}{\openone}

\newcommand{\rhoi}{\rho^{\text{in}}}
\newcommand{\rhoo}{\rho^{\text{out}}}

\newcommand{\tout}{{\text{out}}}
\newcommand{\tin}{{\text{in}}}
\newcommand{\thid}{{\text{hid}}}

%Greek Letters

 %Latex \th = thor n

\newcommand{\thv}{\vec{\theta}}

%Theorems, Lemmas, etc.
%\newtheoremstyle{example}{\topsep}{\topsep}%
{}%         Body font
{}%         Indent amount (empty = no indent, \parindent 
\newtheorem{theorem}{Theorem}

\newtheorem{lemma}{Lemma}

%\newtheorem{lemma}[theorem]{Lemma}

%\newtheorem{conjecture}[theorem]{Conjecture}

%\newtheorem{claim}{Claim}

%\theoremstyle{definition}
%\newtheorem{definition}[theorem]{Definition}
%\newtheorem{remark}{Remark}
%\newtheorem{remark}[theorem]{Remark}

%\newenvironment{specialproof}{\paragraph{Proof:}}{\hfill$\square$}

%Updated sentence

\begin{document}
\title{Trainability of Dissipative Perceptron-Based  Quantum Neural Networks}

\author{Kunal Sharma} 
\thanks{The first two authors contributed equally to this work.}
\address{Theoretical Division, Los Alamos National Laboratory, Los Alamos, New Mexico 87545, USA}
\address{Hearne Institute for Theoretical Physics and Department of Physics and Astronomy, Louisiana State University, Baton Rouge, Louisiana 70803, USA}

\author{M. Cerezo}
\thanks{The first two authors contributed equally to this work.}
\affiliation{Theoretical Division, Los Alamos National Laboratory, Los Alamos, New Mexico 87545, USA}
\affiliation{Center for Nonlinear Studies, Los Alamos National Laboratory, Los Alamos, New Mexico 87545, USA
}

\author{Lukasz Cincio}
\affiliation{Theoretical Division, Los Alamos National Laboratory, Los Alamos, New Mexico 87545, USA}

\author{Patrick J. Coles}
\affiliation{Theoretical Division, Los Alamos National Laboratory, Los Alamos, New Mexico 87545, USA}

\begin{abstract}
Several architectures have been proposed for quantum neural networks (QNNs), with the goal of efficiently performing machine learning tasks on quantum data.  Rigorous scaling results are urgently needed for specific QNN constructions to understand which, if any, will be trainable at a large scale. Here, we analyze the gradient scaling (and hence the trainability) for a recently proposed architecture that we call dissipative QNNs (DQNNs), where the input qubits of each layer are discarded at the layer's output. We find that DQNNs can exhibit barren plateaus, i.e., gradients that vanish exponentially in the number of qubits. Moreover, we provide quantitative bounds on the scaling of the gradient for DQNNs under different conditions, such as different cost functions and circuit depths, and show that trainability is not always guaranteed. Our work represents the first rigorous analysis of the scalability of a perceptron-based QNN.
\end{abstract}
\maketitle

{\it Introduction.---}Neural networks (NN) have impacted many fields such as neuroscience, engineering, computer science, chemistry, and physics~\cite{haykin1994neural}. However, their historical development has seen periods of great progress interleaved with periods of stagnation, due to serious technical challenges~\cite{minsky2017perceptrons}. The perceptron was introduced early on as an artificial neuron~\cite{rosenblatt1957perceptron}, but it was only realized later that a multilayer perceptron (now known as a feedforward NN) had much greater power than a single-layer one~\cite{haykin1994neural,minsky2017perceptrons}. Still there was the major issue of how to train multiple layers, and this was eventually addressed by the backpropagation method~\cite{rumelhart1986learning}.

Motivated by the success of NNs and the advent of  noisy intermediate-scale quantum devices~\cite{preskill2018quantum}, there has been tremendous effort to develop quantum neural networks (QNNs)~\cite{schuld2014quest}. The hope is that QNNs will harness the power of quantum computers to outperform their classical counterparts on machine learning tasks~\cite{nielsen2015neural,biamonte2017quantum}, especially for quantum data or tasks that are inherently quantum in nature~\cite{sharma2020reformulation}.

Despite several QNN proposals that have been successfully implemented~\cite{Romero,dunjko2018machine,verdon2018universal,farhi2018classification,ciliberto2018quantum,killoran2019continuous,cong2019quantum,jia2019quantum}, more research is needed on the advantages and limitations of specific architectures. Delving into potential scalability issues of QNNs could help to prevent a ``winter'' for these models, like what was seen historically for classical NNs. This has motivated recent works studying the scaling of gradients in QNNs~\cite{mcclean2018barren,cerezo2020cost}. There, it was shown that variational quantum algorithms~\cite{VQE,bauer2016hybrid,mcclean2016theory,arrasmith2019variational,jones2019variational,Xiaosi,bravo-prieto2019,yuan2019theory,cirstoiu2019variational,cerezo2019variational,cerezo2020variational}, which aim to train QNNs to accomplish specific tasks, may exhibit gradients that vanish exponentially with the system size. This so-called barren-plateau phenomenon, where the parameters cannot be efficiently trained for large implementations, was demonstrated for hardware-efficient QNNs, where quantum gates are arranged in a bricklike structure that 
matches the connectivity of the quantum device~\cite{mcclean2018barren,cerezo2020cost}. 

Analyzing the existence of barren plateaus in QNNs is paramount to determining if they can lead to a quantum  speedup. This is due to the fact that exponentially vanishing gradients imply that the precision needed to estimate such gradients grows exponentially. Since the standard goal of quantum algorithms is polynomial scaling as opposed to the typical exponential scaling of classical algorithms, a QNN with exponentially vanishing gradients has no hope of achieving this goal. On the other hand, a QNN with gradients that vanish polynomially means that the algorithm requires a polynomial precision, and hence that the hope of quantum speedup is preserved.

Here, we analyze the trainability and the existence of barren plateaus in a class of QNNs that we refer to as {\it dissipative QNNs} (DQNNs). In a DQNN each node within the network corresponds to a qubit~\cite{kouda2005qubit}, and the connections in the network are modelled by quantum perceptrons~\cite{altaisky2001quantum,sagheer2013autonomous,siomau2014quantum,torrontegui2019unitary,tacchino2019artificial,beer2020training}. The term dissipative refers to the fact that ancillary qubits form the output layer, while the qubits from the input layer are discarded. This architecture has seen significant recent attention and has been proposed as a scalable approach to QNNs~\cite{beer2020training,bondarenko2020quantum,poland2020no}. In particular, in Ref.~\cite{beer2020training}, based on small scale numerical experiments, it was speculated that dissipative quantum neural networks do not suffer from the barren plateau (vanishing gradient) problem. However, contrary to Ref.~\cite{beer2020training}, we here analytically prove that DQNNs are not immune to barren plateaus. For example, DQNNs with deep global perceptrons are untrainable despite the dissipative nature of the architecture.

Here we study the large-scale trainability of DQNNs. In particular, we focus on tasks where DQNNs are employed to learn a unitary matrix connecting input and output quantum states and for general supervised quantum machine learning tasks where training data consists of quantum states and corresponding classical labels. For these tasks, we show that the barren plateau phenomenon can also arise in DQNNs. We also discuss certain  conditions (e.g., the structure and depth of the DQNN) under which one could avoid a barren plateau and achieve trainability. In particular, our work implies that scalability is not guaranteed, and without careful thought of the  structure of DQNNs, their gradients may vanish exponentially in the system size. As a by-product of our analysis of specific perceptron architectures, we also show that hardware-efficient QNNs are special cases of DQNNs. Therefore, many important results for hardware-efficient QNNs, such as the ones studied in Refs.~\cite{mcclean2018barren,cerezo2020cost} also hold for DQNNs. Finally, we remark that we employ novel analytical techniques in our proofs (different from those used in Refs.~\cite{mcclean2018barren,cerezo2020cost}), which were necessary to develop due to the dissipative nature of DQNNs. Our techniques may be broadly useful in the study of the scaling of other QNN architectures. 

\bigskip

{\it Preliminaries.---} Let us first introduce the DQNN architecture. As schematically shown in Fig.~\ref{fig:f1}, the DQNN is composed of a series of layers (input, hidden, and ouput) where the qubits at each node are connected via perceptrons. A quantum perceptron is defined  as an arbitrary unitary operator with $m$ input and $k$ output qubits. For simplicity, we consider the case when $k=1$, so that each perceptron acts on $m+1$ qubits. The case of arbitrary $k$ is presented in the Supplemental Material.

\begin{figure}[t]
    \centering
    \includegraphics[width=.8\columnwidth]{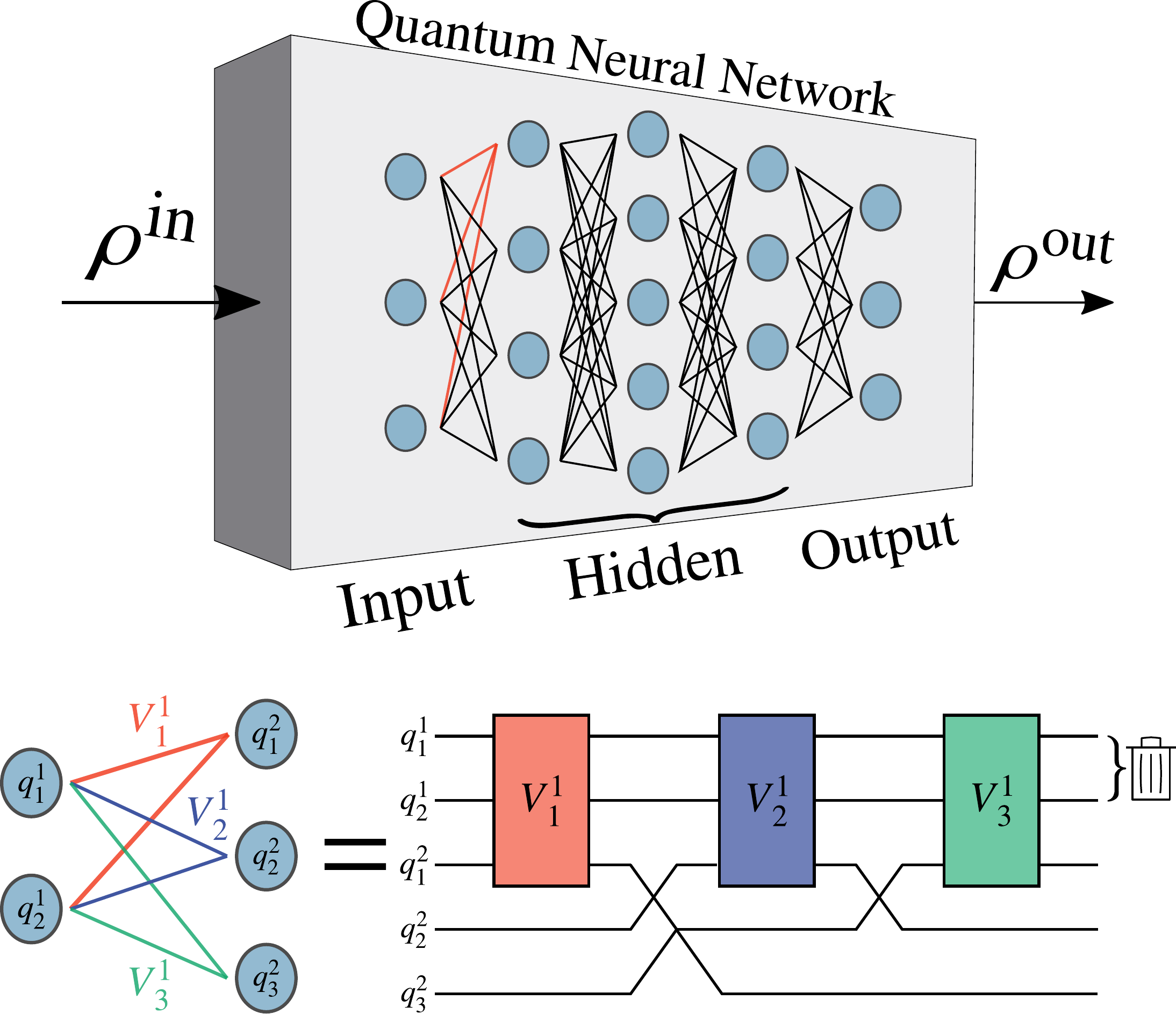}
    \caption{Schematic diagram of a dissipative perceptron-based quantum neural network (DQNN). Top: The DQNN is composed of input, hidden, and output layers. Each node in the network corresponds to a qubit, which can be connected to qubits in adjacent layers via perceptrons (depicted as lines). The input and output of the DQNN are quantum states denoted as $\rho^\tin$ and $\rho^\tout$, respectively. Bottom: Quantum circuit description of the DQNN. The $j$th qubit of the $l$th layer is denoted $q_j^l$. Each perceptron corresponds to a unitary operation on the qubits it connects, with $V_j^l$ denoting the $j$-th perceptron in the $l$-th layer.}
    \label{fig:f1}
\end{figure}

The qubits in the input layer are initialized to a state $\rhoi$, while all qubits in the hidden and output layers are initialized to a fiduciary state such as $\ket{\vec{0}}_{\thid,\tout}=\ket{0\ldots0}_{\thid,\tout}$. Henceforth we employ the  notation ``in'', ``hid'', and ``out'' to indicate operators on qubits in the input, hidden, and output layers, respectively. 
The output state of the DQNN is a quantum state $\rhoo$ (generally mixed) which can be expressed as  
\begin{equation}\label{eq:rhoutgeneral}
    \rhoo\equiv \Tr_{\text{in},\text{hid}}\left[V(\rhoi\otimes\ket{\vec{0}}_{\text{hid},\text{out}}\bra{\vec{0}})V\ad\right]\,,
\end{equation}
with $V=V^{\text{out}}_{n_{\text{out}}}\ldots V^1_{n_1} \ldots V^1_1$, and where $V^l_j$ is the perceptron unitary on the $l$-th layer acting on the $j$-th output qubit. Here  $n_l$ indicates the number of qubits in the $l$-th layer. 

Let us now make two important remarks. First, note that the order in which the perceptrons act is relevant, as in general the unitaries $V_j^l$ will not commute. Second, we remark that for this architecture the perceptrons are applied layer by layer, meaning that once all $V_j^l$ (for fixed $l$) have been applied and the information has propagated forward between layers $l-1$ and $l$, one can discard the qubits in layer $l-1$. This implies that the width of the DQNN depends on the number of qubits in two adjacent layers and not in the total number of qubits in the network.

To train the DQNN, we assume repeatable access to training data in the form of pairs $\{\ket{\phi_x^{\tin}},\ket{\phi_x^{\tout}}\}$, with $x=1,\ldots,N$.  We note that, as discussed in the Supplemental Material, our results also hold more generally for supervised quantum machine learning tasks where the training data is of the form $\{\ket{\phi_x^{\tin}},y_x\}$, with $y_x$ a label assigned to the input state $\ket{\phi_x^{\tin}}$~\cite{havlivcek2019supervised}.

We then define a cost function (or loss function) which quantifies  how well the DQNN reproduces the training data. We assume that the cost is of the form 
\begin{equation}\label{eq:costx}
    C=\frac{1}{N}\sum_{x=1}^N C_x\,, \quad \text{with} \quad C_x =\Tr[O_x \rho^\tout_x]\,.
\end{equation}
As discussed below, in general there are multiple choices for the operator $O_x$ which lead to faithful cost functions, i.e., costs that are extremized if and only if one perfectly learns the mapping on the training data. If the circuit description of output states is provided, one can employ the inverse of the corresponding unitary on the output of a DQNN \cite{sharma2019noise}. Then a measurement in the computational basis estimates the cost function. Otherwise, one can employ a recently developed procedure based on classical shadows to estimate the state overlap \cite{huang2020predicting}.

When $O_x$ acts non-trivially on all qubits of the output layer, we use the term {\it global cost function}, denoted as $C^G$. Here one usually compares objects (states or operators) living in exponentially large Hilbert spaces. For instance, choosing 
\begin{equation}\label{eq:globalop}
    O_x^G=\id - \ketbra{\phi_x^{\tout}}{\phi_x^{\tout}}\,,
\end{equation}
leads to a global cost function that quantifies the average fidelity between each $\rho^\tout_x$ and  $\ket{\phi_x^{\tout}}$.

As shown in Ref.~\cite{cerezo2020cost}, local cost functions do not exhibit a barren plateau for shallow hardware-efficient QNNs. Therefore, it is important to study if local observables can also lead to trainability guarantees  in DQNNs. Henceforth, we use the term {\it local cost function}, denoted as $C^L$, for the cases when the operator $O_x$ acts non-trivially on a small number of qubits in the output layer. 
Since the global cost in Eq.~\eqref{eq:globalop} is a state fidelity function, in general it will not be possible to design a corresponding faithful local cost. Therefore, we restrict ourselves to the case when $\ket{\phi_x^{\tout}}$ is a tensor-product state across $n_\tout$ qubits $\ket{\phi_x^{\tout}}=\ket{\psi^{\tout}_{x,1}}\otimes\ldots \otimes \ket{\psi^{\tout}_{x,n_\tout}}$. Then, we can define the  following local observable:
\begin{equation}\label{eq:localop}
    O_x^L=\id - \frac{1}{n_\tout}\sum_{j=1}^{n_\tout}\dya{\psi^{\tout}_{x,j}}\otimes \id_{\overline{j}}\,,
\end{equation}
where $\id_{\overline{j}}$ denotes the identity over all qubits in the output layer except for qubit $j$.  Equation~\eqref{eq:localop} leads to a faithful local cost that vanishes under the same condition as the global cost defined from Eq.~\eqref{eq:globalop}~\cite{QAQC,sharma2019noise}. 

\begin{figure}[t]
    \centering
    \includegraphics[width=.75\columnwidth]{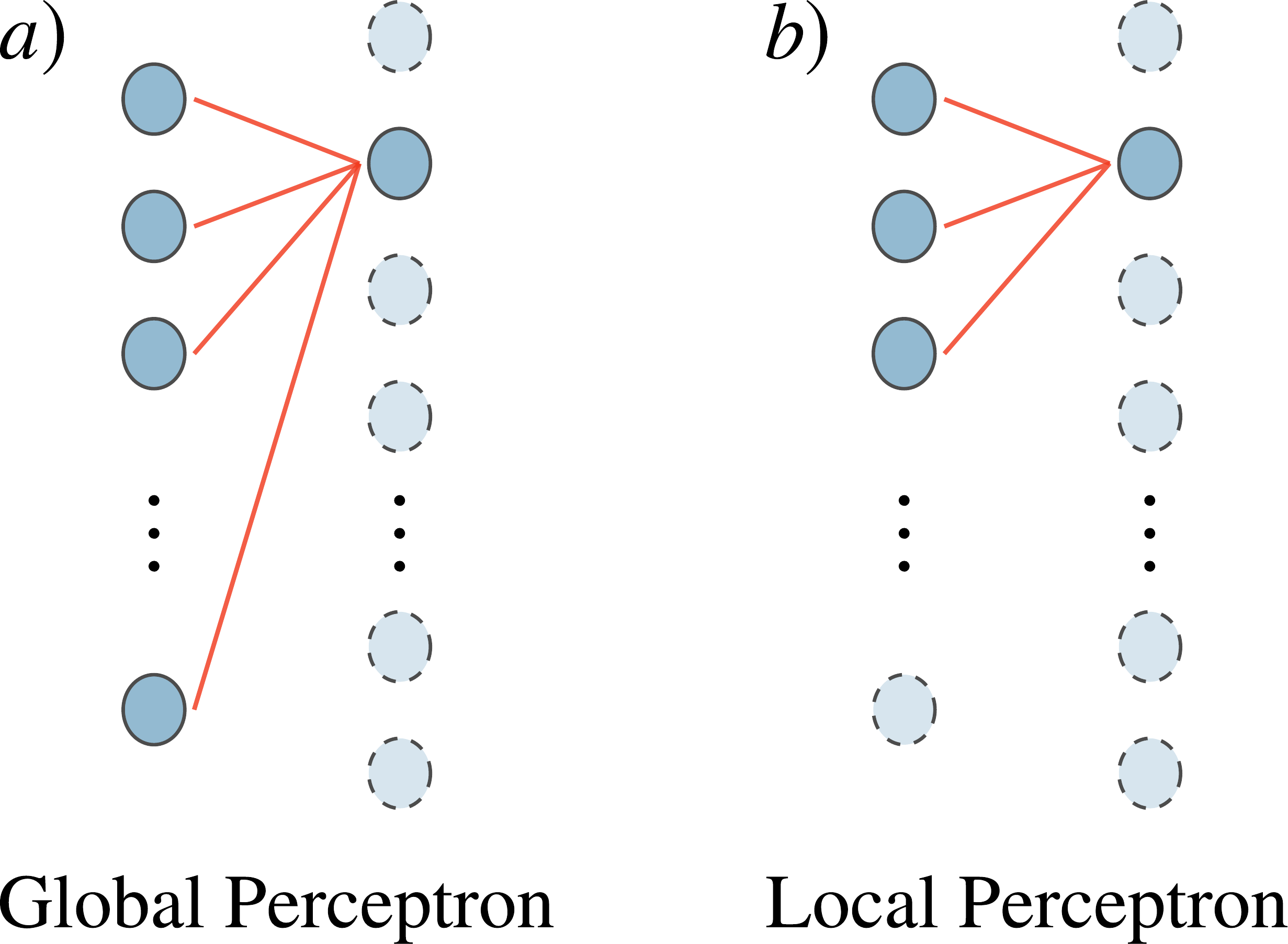}
    \caption{Global and local perceptrons. a) The global perceptron acts non-trivially on all input qubits, i.e.,  $m=n$. b) The local perceptron acts non-trivially only on a small number of input qubits. For the case shown, $m=3$. }
    \label{fig:f2}
\end{figure}

Finally let us introduce the term {\it global perceptron} to refer to the case when the perceptron $V_j^l$ acts non-trivially on {\it all}  qubits in the $l$-th layer, i.e., when $m=n_{l-1}$. On the other hand, a {\it local perceptron} is defined as a unitary  $V_j^l$  acting on a number of qubits $m\in\OC(1)$ which is independent of $n_{l-1}$. Figure~\ref{fig:f2} schematically shows a global and a local perceptron.

To analyze the existence of barren plateaus and the trainability of the DQNN one needs to define an ansatz and a training method for the perceptrons. In what follows we consider two general training approaches.

\bigskip

{\it Random parameterized quantum circuits.---}We first consider the case where the perceptrons are parametrized quantum circuits (i.e., variational circuits) that can be expressed as a sequence of parameterized and unparameterized gates from a given gate alphabet~\cite{mcclean2018barren,du2018expressive}. That is, the perceptrons are of the form
\begin{equation}\label{eq:parametrization}
V_j^l(\thv_j^l)=\prod_{k=1}^{\eta_j^l}R_k(\theta^k)W_k\,,
\end{equation}
with $R_k(\theta^k)=e^{-(i/2) \theta^k \Gamma_k}$,  $W_k$ an unparameterized unitary,  and where $\Gamma_k$ is a Hermitian operator with $\Tr[ \Gamma_k^2]\leq 2^{n+1}$. 
Such parameterization is widely used as it can allow for a straightforward evaluation of the cost function gradients, and since in general its quantum circuit description can be easily obtained~\cite{guerreschi2017practical,mitarai2018quantum,schuld2019evaluating}. 

A common strategy for training random parameterized quantum circuits is to randomly initialize the parameters in~\eqref{eq:parametrization}, and employ a training loop to minimize the cost function. To analyze the trainability of the DQNN we compute the variance of the partial derivative $\partial C/\partial \theta^\nu\equiv \partial_\nu C$, where $\theta^\nu$ belongs to a given  $V_j^l$
\begin{equation}
    \Var[\partial_\nu C]=\left\langle(\partial_\nu C)^2\right\rangle-\left\langle\partial_\nu C\right\rangle^2\,.
\end{equation}
Here the notation $\langle\cdots\rangle$ indicates the average over all randomly initialized perceptrons. 
From~\eqref{eq:parametrization}, we find
\small
\begin{align}\label{eq:parC}
\partial_\nu C= \frac{i}{2N}\sum_{x=1}^{N}&\Tr\Big[ A_j^l\tilde{\rho}^\tin_x (A_j^l)\ad[\id_{\overline{j}}^{\overline{l}}\otimes\Gamma_k,(B_j^l)\ad\tilde{O}_x B_j^l]\Big]\,,
\end{align}
\normalsize
where we have defined
\begin{equation}\label{eq:AandB}
    B_j^l=\id_{\overline{j}}^{\overline{l}}\otimes\prod_{k=1}^{\nu-1}\!\!R_k(\theta^k)W_k\,,\,\,
    A_j^l=\id_{\overline{j}}^{\overline{l}}\otimes\prod_{k=\nu}^{\eta_j^l}\!\!R_k(\theta^k)W_k\,,
\end{equation}
such that $\id_{\overline{j}}^{\overline{l}}\otimes V_j^l=A_j^lB_j^l$, and where $\id_{\overline{j}}^{\overline{l}}$ indicates the identity on all qubits on which $V_j^l$ does not act. Note that the trace in \eqref{eq:parC} is over {\it all} qubits in the DQNN.  In addition, we define
\begin{align}
    \tilde{\rho}^\tin_x&=V_{j-1}^l\ldots V_1^1 (\rho^\tin_x\otimes\dya{\vec{0}}_{\thid,\tout}) (V_1^1)\ad\ldots (V_{j-1}^l)\ad\,,\nonumber\\
    \tilde{O}_x&=(V_{j+1}^l)\ad\ldots (V_{n_\tout}^\tout)\ad (\id_{\tin,\thid}\otimes O_x) V_{n_\tout}^\tout\ldots V_{j+1}^l\,.\nonumber
\end{align}

If the perceptron $V_j^l$ is sufficiently random so that $A_j^l$, $B_j^l$, or both, form independent unitary $1$-designs, then we find that $\langle \partial_\nu C \rangle=0$ (see Supplemental Material). In this case,  $\Var[\partial_s C]$ quantifies (on average) how much the gradient concentrates around zero. Hence, exponentially small $\Var[\partial_s C]$ values would imply that the slope of the cost function landscape is insufficient to provide cost-minimizing directions.

Here we recall that a $t$-design is a set of unitaries $\{V_y\in U(d)\}_{y\in Y}$ (of size $|Y|$) on a $d$-dimensional Hilbert space such that for every polynomial $P_t(V_y)$ of degree at most $t$ in the matrix elements of $V_y$, and of $V_y\ad$ one has~\cite{dankert2009exact}
$
  \langle P_{t}(V)\rangle_{V}= \frac{1}{|Y|}\sum_{y\in Y}P_{t}(V_y)=\int  d\mu(V)P_t(V),
$
where the integral is over the unitary group $U(d)$. 

Let us assume for simplicity the case when the DQNN input and output layers have the same number of qubits ($n_\tin=n_\tout=n$).  As shown in the Supplemental Material, the following theorem holds.
\begin{theorem}\label{theo2}
Consider a DQNN with deep global perceptrons parametrized as in Eq.~\eqref{eq:parametrization}, such that  $A_j^l$, $B_j^l$ in Eq.~\eqref{eq:AandB} and $V_j^l$ $(\forall j,l)$ form independent $2$-designs over $n+1$ qubits. Then, the variance of the partial derivative of the  cost function with respect to $\theta^\nu$ in $V_j^l$ is upper bounded as
\begin{equation}
\Var[\partial_\nu C^G]\leq g(n), \quad \text{with} \quad g(n)\in\OC\left(1/2^{2n}\right)\,,
\end{equation}
if $O_x$ is the global operator of Eq.~\eqref{eq:globalop}, and upper bounded as 
\begin{equation}
\Var[\partial_\nu C^L]\leq h(n), \quad \text{with} \quad h(n)\in\OC\left(1/2^{n}\right)\,,
\end{equation}
when $O_x$ is the local operator in~\eqref{eq:localop}. 
\end{theorem}

Theorem~\ref{theo2} shows that DQNNs with deep global perceptron unitaries that form two-designs~\cite{brandao2016local,harrow2018approximate} exhibit barren plateaus for global and local cost functions. An immediate question that follows is whether barren plateaus still arise for shallow perceptrons, which cannot form $2$-designs on $n+1$ qubits. In what follows we analyze specific cases of shallow local perceptrons for which results can be obtained.  

\begin{figure}[t]
    \centering
    \includegraphics[width=1\columnwidth]{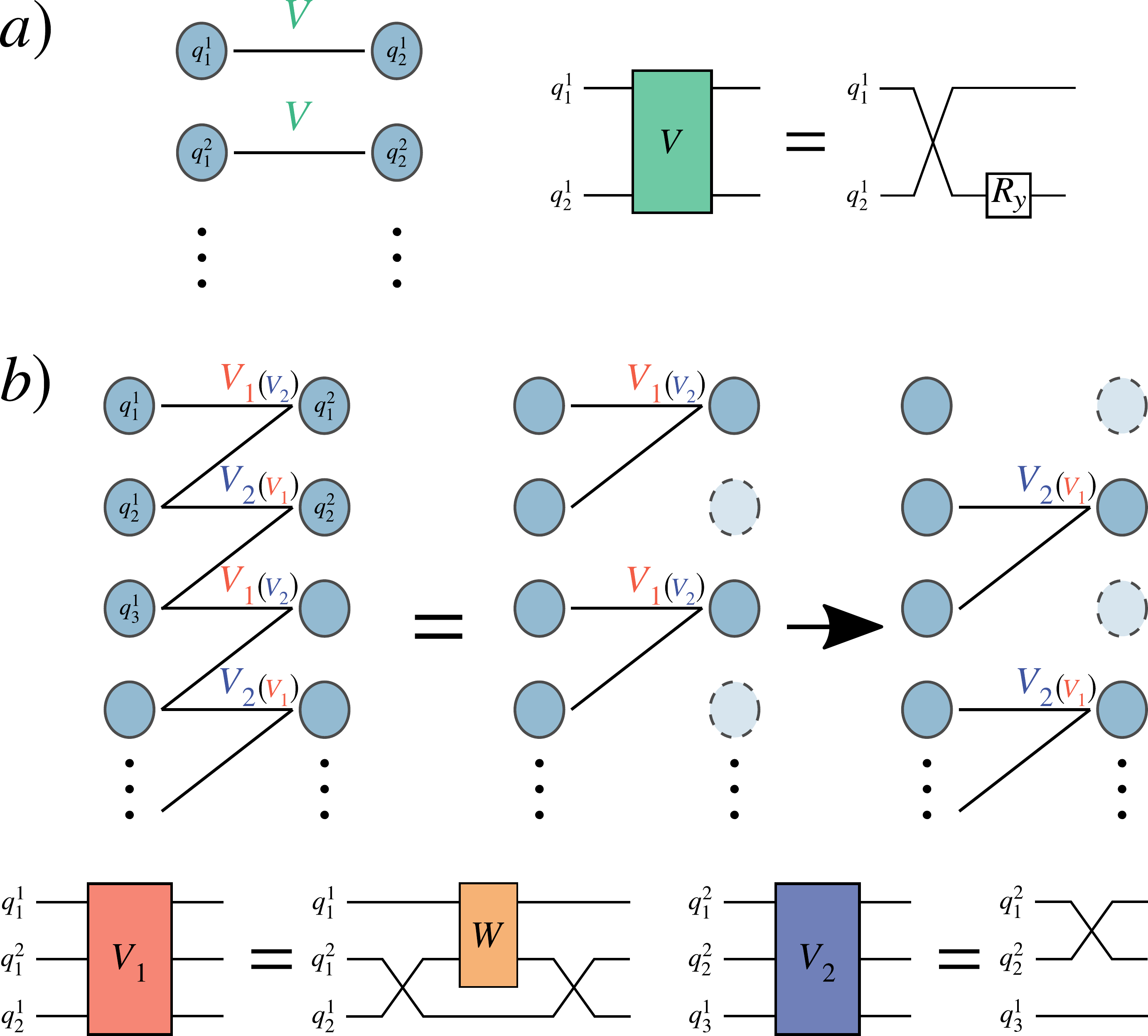}
    \caption{Shallow local perceptrons ansatzes. a) Here $m=1$ so that each perceptron acts on a single input and output qubit. Moreover, for all $j$ and $l$ we have $V_j^l=V$. The unitaries $V$ are simply given by a SWAP operator followed by a single qubit rotation around the $y$ axis. b) Local perceptrons $V_j^l$ with $m=2$. The local perceptrons are given by the unitaries $V_1$, or $V_2$. Specifically, for $l$ odd on $j$ odd (even) $V_j^l=V_1(V_2)$, while for $l$ even and $j$ odd (even) we have $V_j^l=V_2(V_1)$.   Here we also show the order in which the perceptrons are applied so that we first implement the unitaries with $j$ odd, followed by the unitaries with $j$ even.
    The $W$ gate in $V_1$ forms a local $2$-design on two qubits.      }
    \label{fig:f3}
\end{figure}

Let us first consider the simple perceptrons of Fig.~\ref{fig:f3}(a), where $m=1$, and where $R_y$ denotes a single qubit rotation around the $y$ axis: $R_y(\theta^\nu)=e ^{-i\theta^\nu Y/2}$ (with all angles randomly initialized). 
In this case one recovers the toy model example of~\cite{cerezo2020cost}, and we know that if $O_x$ is the global operator of~\eqref{eq:globalop}, then 
$\Var[\partial_\nu C^G]=\frac{1}{8}\left(\frac{3}{8}\right)^{n-1}$. On the other hand, if $O_x$ is the local operator in~\eqref{eq:localop}, then 
$\Var[\partial_\nu C^L]=\frac{1}{8n^2}$.

These results suggest that DQNNs with simple shallow local perceptrons and global cost functions are untrainable when randomly initialized. On the other hand, they also indicate that barren plateaus for DQNNs might be avoided by employing: (1) shallow (local) perceptrons, and (2) local cost functions.

Let us now consider the shallow local perceptron of Fig.~\ref{fig:f3}(b), where each unitary $W$ forms a local $2$-design on two qubits. For this  architecture the ensuing DQNN can be {\it exactly} mapped into a layered hardware-efficient ansatz as in~\cite{cerezo2020cost}, where two layers of the DQNN correspond to a single layer of the hardware-efficient ansatz~\cite{kandala2017hardware}. Note that this mapping is not general, but rather valid for the specific architecture in Fig.~\ref{fig:f3}(b). As shown in Ref.~\cite{cerezo2020cost}, when employing a global cost function, with $O_x$ given by~\eqref{eq:globalop}, one finds that if the number is layers is $\OC(\poly(\log(n)))$, then the DQNN cost function exhibits barren plateaus as
\begin{equation}
    \Var[\partial_\nu C^G]\leq \widehat{f}(n)
    \,, \,\,\, \text{with}\quad \widehat{f}(n)\in \OC\left((\sqrt{3}/4)^n\right)\,.
\end{equation}
On the other hand, for a local cost function with $O_x$ given by~\eqref{eq:localop}, if the number of layers is in $\OC(\log(n))$, then there is no barren plateau~\cite{cerezo2020cost} as
\begin{equation}\label{eq:lowerb}
     \widehat{g}(n)\leq \Var[\partial_\nu C^L]
    \,, \,\,\,\text{with}\quad \widehat{g}(n)\in \Omega\left(1/\poly(n)\right)\,.
\end{equation}
Here we remark that \eqref{eq:lowerb} was obtained following the  same assumptions as those used in Corollary~2 of~\cite{cerezo2020cost}.
Note that obtaining a lower bound for the variance implies that the DQNN trainability is guaranteed.

\bigskip

{\it Parameter matrix multiplication.---}While in random parametrized quantum circuits one optimizes and trains a single gate angle at a time, other optimization approaches can also be considered. In what follows we analyze the trainability for a method introduced in Ref.~\cite{beer2020training} where at each time-step all perceptrons are simultaneously optimized.

In this training approach, which we call parameter matrix multiplication, the perceptrons are not explicitly decomposed into quantum circuits, but rather are treated as unitary matrices. The perceptrons $V_j^l(0)$ are randomly initialized at time-step zero, and at each step $s$ they are updated via 
\begin{equation}\label{eq:update}
    V_j^l(s+\varepsilon)=e^{i\varepsilon H_j^l(s)}V_j^l(s)\,.
\end{equation}
The matrices $H_j^l$ are such that $\Tr[(H_j^l)^2]\leq 2^{n+1}$ and are parametrized as 
$H_j^l(s)=\sum_{\vec{u}\vec{v}}h^l_{j,\vec{u},\vec{v}} X^{\vec{u}}Z^{\vec{v}}$, with $X^{\vec{u}}Z^{\vec{v}}=X^{u_1}_1Z^{v_1}_1\otimes X^{u_2}_{2}Z^{v_2}_{2}\ldots$, and where $X_j$ and $Z_j$ are Pauli operators on qubit $j$. The matrices $K_j^l(s)$ are called {\it parameter matrices}, and at each time-step the coefficients $h^l_{j,\vec{u},\vec{v}}$ need to be optimized. As shown in the Supplemental Material, if at least one perceptron $V_j^l(0)$ is sufficiently random so that it forms a global unitary $1$-design, then we find $\langle \partial C/\partial s \rangle\equiv\langle \partial_s C \rangle=0$.

As proved in the Supplemental Material, the following theorem holds. 
\begin{theorem}\label{theo1}
Consider a DQNN with deep global perceptrons, which are updated via the parameter matrix multiplication of~\eqref{eq:update}. Suppose that for all $j,l$, the $V_j^l(0)$ perceptrons form independent $2$-designs over $n+1$ qubits. Then the variance of the partial derivative of the cost function with respect to the time-step parameter $s$ is upper bounded as
\begin{equation}
\Var[\partial_s C]\leq f(n), \quad \text{with} \quad f(n)\in\OC\left(1/2^{n}\right)\,,
\end{equation}
when $O_x$ is the global operator of~\eqref{eq:globalop}, or the local operator in~\eqref{eq:localop}. 
\end{theorem}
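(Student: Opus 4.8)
The plan is to reduce Theorem~\ref{theo1} to a single second-moment computation over the independent $2$-designs $\{V_j^1(0)\}_{j=1}^{n}$ and then extract the scaling. First I would differentiate. Writing $V=V_n^1\cdots V_1^1$ and applying the update \eqref{eq:update} to every factor, the product rule gives $\partial_s V=i\sum_j V_{>j}H_j^1 V_{\leq j}$ with $V_{>j}=V_n^1\cdots V_{j+1}^1$ and $V_{\leq j}=V_j^1\cdots V_1^1$. Inserting this into $C_x=\Tr[\widetilde O_x V(\rho_x^\tin\otimes\ket{\vec{0}}_\tout\bra{\vec{0}})V\ad]$ with $\widetilde O_x=\id_\tin\otimes O_x$ and using cyclicity, I would bring each contribution to the backpropagated commutator form
\begin{equation}
\partial_s C=\frac{i}{N}\sum_{x=1}^{N}\sum_{j=1}^{n}\Tr\big[[\,\widetilde O_{>j}^{(x)},H_j^1\,]\,V_j^1\rho_{<j}^{(x)}(V_j^1)\ad\big],
\end{equation}
where $\rho_{<j}^{(x)}=V_{<j}(\rho_x^\tin\otimes\ket{\vec{0}}_\tout\bra{\vec{0}})V_{<j}\ad$ and $\widetilde O_{>j}^{(x)}=V_{>j}\ad\widetilde O_x V_{>j}$. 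This is the analogue of \eqref{eq:parC}, except that the sum now runs over \emph{all} $n$ perceptrons, since the update \eqref{eq:update} moves them simultaneously.

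Next I would dispose of the mean. Each $V_j^1$ is a $2$-design, hence a $1$-design, and enters its own term only through the outer factor $V_j^1\rho_{<j}^{(x)}(V_j^1)\ad$. Averaging that single perceptron gives $(\id_{S_j}/d)\otimes\Tr_{S_j}[\rho_{<j}^{(x)}]$ with $d=2^{n+1}$ and $S_j$ the $(n+1)$-qubit support of $V_j^1$; since $H_j^1$ is supported on $S_j$ it commutes with $\id_{S_j}$, so every term, and therefore $\langle\partial_s C\rangle$, vanishes, leaving $\Var[\partial_s C]=\langle(\partial_s C)^2\rangle$.

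For the second moment, expanding the square produces $\tfrac1{N^2}\sum_{x,x'}\sum_{j,j'}\langle T_j^{(x)}T_{j'}^{(x')}\rangle$. The decisive bookkeeping observation is that in every product $T_j^{(x)}T_{j'}^{(x')}$ each perceptron $V_k^1$ occurs exactly twice with its adjoint (once per trace factor), so the independent $2$-design hypothesis is precisely what is needed to integrate out all $n$ perceptrons. I would compute these averages in a doubled-space (replica) representation, rewriting the product of the two traces as a single trace on two copies and twirling the perceptrons one at a time with the standard second-moment identity, whose output lies in the span of $\{\id,\Sbb\}$ ($\Sbb$ the swap of the two replicas on $S_k$). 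Feeding the identity/swap output of each twirl into the next, and using $\Tr[(H_j^1)^2]\leq 2^{n+1}$ together with the explicit $\widetilde O_x$ from \eqref{eq:globalop} and \eqref{eq:localop}, should yield the claimed $\OC(1/2^n)$ bound.

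\textbf{The main obstacle} is that, unlike a hardware-efficient layout, here the $n$-qubit input register belongs to the support $S_k$ of \emph{every} perceptron, so the single-perceptron twirls do not factorize: the $\id$/$\Sbb$ operators produced on the input by integrating one $V_k^1$ are then acted on by all remaining perceptrons. The real work is to organize the integration as a recursion along the perceptron index (transfer-matrix-like) that propagates the identity-versus-swap structure on the doubled input register, and to show that the delicate balance between the $1/(d^2-1)$ suppression of each twirl and the compensating traces of order $d$ over that register nets out to $\OC(1/2^n)$ for both cost functions. A secondary difficulty is that the off-diagonal ($j\neq j'$ or $x\neq x'$) terms do not vanish and must be bounded rather than discarded.
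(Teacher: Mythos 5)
Your setup is faithful to the paper's: the differentiation of the product, the commutator form of $\partial_s C$ (the paper's Eq.~\eqref{eq:partialsC}), and the first-moment argument via the $1$-design property and Eq.~\eqref{eq:lemma4} are all exactly what the paper does. The gap is in the second moment: you correctly identify the central obstacle (every perceptron's support contains the whole input register, so the single-perceptron twirls do not factorize) but you leave its resolution as ``the real work,'' and that work is the entire content of the proof. The paper resolves it not by propagating a general identity-versus-swap structure on the doubled input register, but by a collapse that makes the recursion scalar: after reducing (WLOG) to computational-basis output states, Lemma~\ref{lemma1} expands the trace over the output qubits other than $j$ into a sum over bitstrings $\vec{p},\vec{q}$, and the structure of $\sigma_x^{\tin}$ (which carries $\ket{\vec 0}\bra{\vec 0}$ on the output register) and of $\sigma_x^{\tout}$ forces all but a \emph{single} bitstring $\vec{r}^{(x,j)}$ to vanish. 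This turns the would-be operator-valued transfer matrix into two scalar recursions, for $q^{(x,j)}$ (forward through $V_1,\dots,V_{j-1}$) and $s^{(x,j)}$ (backward through $V_{j+1},\dots,V_n$), each step contributing a factor $2^n(2^n+1/2)/(2^{2(n+1)}-1)\approx 1/4$ via Lemma~\ref{lemma3}; the boundary values $q^{(x,1)}=1$, $s^{(x,n)}=2^n$ and the prefactor $2^{n+2}$ from $\Tr[(H_j)^2]\le 2^{n+1}$ then net out to $\OC(1/2^n)$. Without exhibiting this (or an equivalent) mechanism, your proposal does not establish the claimed scaling.

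Two further points. First, your expectation that the off-diagonal terms with $j\neq j'$ ``must be bounded rather than discarded'' undersells what actually happens: the paper's Lemma~\ref{lemma2} shows that after twirling the single perceptron $V_{j+1}$ these cross terms average to zero, because one factor reduces to the partial trace of a commutator whose Haar integral vanishes. You would need this (or some substitute) since a crude Cauchy--Schwarz bound on all $n^2$ cross terms would cost you polynomial factors you have not accounted for. Second, your counting claim that ``each perceptron occurs exactly twice with its adjoint, so the $2$-design hypothesis is precisely what is needed'' is correct, but note the theorem only assumes the $V_j^1(0)$ are $2$-designs (there is no $A/B$ splitting here as in Theorem~\ref{theo2}), so the sequential one-at-a-time integration order matters: the paper integrates $V_j$ first using Lemma~\ref{lemma3} and then peels off the remaining perceptrons outward, exploiting independence at each step.
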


Although the updating method in \eqref{eq:update} simultaneously updates all perceptrons at each time-step, Theorem  \ref{theo1} implies that barren plateaus also arise when using the parameter matrix multiplication method.

We note that our proof techniques invoke the pure state properties of input and output states. Since the output state of a randomly initialized DQNN will be close to a maximally mixed across any bipartite cut \cite{marrero2020entanglement}, we speculate that our results can be extended to expectation values of the arbitrary Hamiltonian. We leave this question for future work.

% \textcolor{red}{We note that for simplicity, in Theorems \ref{theo2} and \ref{theo1} we considered DQNNs with no hidden layers and perceptrons that act on $n+1$ qubits ($n$ qubits in the input layer connect with one qubit in the output layer). In Supplemental Material we extend these results to the case when perceptrons act on $n+m$ qubits and hidden layers are present in a DQNN. We also note that results presented in our work are based on cost functions of the form in \eqref{eq:costx}. Furthermore, our results can be generalized to other supervised quantum machine learning tasks, which we discuss in the Supplemental Material.}

\bigskip

{\it Conclusions.---}In this Letter, we analyzed the trainability of a special class of QNNs called DQNNs. We first proved that the trainability  of DQNNs is not always guaranteed as they can exhibit  barren plateaus in their cost function landscape. The existence of such barren plateaus was linked to the localities (i.e., the number of qubits they act non-trivially on) of the perceptrons and of the cost function. Specifically, we showed that: (i) DQNNs with deep global perceptrons are untrainable despite the dissipative nature of the architecture, and (ii) for shallow and local perceptrons, employing global cost functions leads to barren plateaus, while using local costs avoids them. We note that our results are completely general for DQNN architectures, e.g., covering arbitrary numbers of hidden layers and general perceptrons acting on any number of qubits.

In addition, we provided a specific architecture for DQNNs with local shallow perceptrons that can be exactly mapped to a layered hardware-efficient ansatz. This result not only indicates that hardware-efficient QNNs can be represented as DQNNs, but it also allows us to derive trainability guarantees for these DQNNs. In this case, since the  perceptrons are local, each neuron only receives information from a small number of qubits in the previous layer. Such architecture is reminiscent of classical convolutional neural networks, which are known to avoid some of the trainability problems of fully connected networks~\cite{aloysius2017review}. 

These results show  that much work needs to be done to understand the trainability of QNNs and guarantee that they can provide a quantum speedup over classical neural networks. For instance,  interesting future research directions are QNN-specific optimizers~\cite{stokes2019quantum,kubler2019adaptive,koczor2019quantum,arrasmith2020operator},  analyzing the resilience of QNNs to noise~\cite{mcclean2016theory,sharma2019noise}, and strategies to prevent barren plateaus~\cite{larose2019variational,grant2019initialization,verdon2019learning,tjvpat20}. Another interesting direction is to extend our results to the case when the input and output states are mixed states, particularly when the goal is to match marginals of the target output state and the output of a DQNN \cite{bolens2021reinforcement}.  Furthermore, exploring architectures beyond DQNNs and hardware-efficient QNNs would be of interest, particularly if such architectures have large-scale trainability.

\bigskip

%\section{Acknowledgements}
%\textit{Acknowledgements.---}
\begin{acknowledgments}
We thank Jarrod McClean, Tobias Osborne, and Andrew Sornborger for helpful conversations. All authors acknowledge support from LANL's Laboratory Directed Research and Development (LDRD) program.  MC was also supported by the Center for Nonlinear Studies at LANL. PJC also acknowledges support from the LANL ASC Beyond Moore's Law project. This work was also supported by the U.S. Department of Energy (DOE), Office of Science, Office of Advanced Scientific Computing Research, under the Accelerated Research in Quantum Computing (ARQC) program.
\end{acknowledgments}

\textit{Supplemental Material.---}The Supplemental Material contains details of our proofs and References~\cite{puchala2017symbolic,Fukuda_2019}.

\textit{Note Added.---}Our work is the first to analyze barren plateaus in the context of data science applications, and also the first to consider perceptron-based quantum neural networks (QNNs). Our work has inspired more recent studies of trainability for other QNN architectures, such as quantum convolutional neural networks~\cite{pesah2020absence}, tree-based architectures~\cite{zhang2020toward}, and others~\cite{zhao2021analyzing,wang2020noise,abbas2020power}. We also note that our results can also be interpreted as a type of entanglement-induced barren plateau. Here, a large amount of entanglement in a parameterized quantum circuit can lead to trainability issues when qubits are discarded, and the output qubits are concentrated around the maximally mixed state. This phenomenon was further studied in~\cite{marrero2020entanglement,patti2021entanglement}.

%\textit{Note Added.---}Our work was the first to analyze  barren plateaus in perceptron-based quantum neural networks (QNNs). Inspired by our work, our results have  studied the trainability of other QNN architectures~\cite{pesah2020absence,zhao2021analyzing,wang2020noise,abbas2020power}. We also note that our results can also be interpreted as a type of entanglement-induced barren plateau. Here, a large amount of entanglement in a parameterized quantum circuit can lead to trainability issues when qubits are discarded, and the output qubits are concentrated around the maximally mixed state. This phenomenon was further studied in~\cite{marrero2020entanglement,patti2021entanglement}.}

\bibliography{ref.bib}

\onecolumngrid

\pagebreak

\appendix

\vspace{0.5in}

\setcounter{theorem}{0}

\begin{center}
	{\Large \bf Supplemental Material for {\it Trainability of Dissipative Perceptron-Based Quantum Neural Networks}}
\end{center}

Here we provide proofs for the main results and theorems of the manuscript {\it Trainability of Dissipative Perceptron-Based Quantum Neural Networks}.   In Section~\ref{sec:preliminaries} we first present useful definitions and lemmas that will be employed to derive the main results. Then in Section~\ref{sec:average} we show that $\langle\partial C\rangle =0$ for the Dissipative Quantum Neural Networks (DQNN) considered in the main text. Finally, in  Sections~\ref{sec:prooftheo2} and~\ref{sec:proof-theo1} we provide proofs for Theorem~\ref{theo2SM} and Theorem~\ref{theo1SM}, respectively. We note that we first provide a proof for Theorem~\ref{theo1SM}, since the proof of Theorem~\ref{theo2SM} can be built from the latter.

We also note that for our proofs we will assume a DQNN such that each perceptron only acts on one output qubit, and where there are no hidden layers. We generalize to DQNNs acting on $m$ output qubits and with $L$  hidden layers in Sections~\ref{sec:dqnnnm} and~\ref{sec:dqnn-hidden}, respectively.

\section{Preliminaries}\label{sec:preliminaries}

\medskip

\noindent \textbf{Properties of the Haar measure.} Let $d\mu_H(V)\equiv d\mu(V)$ be the volume element of the Haar measure, with $V\in U(d)$, and where $U(d)$ denotes the unitary group of degree $d$. Then the following properties  hold:
\begin{itemize}
    \item The volume of the Haar measure is finite:
    \begin{equation}
        \int_{U(d)}d\mu(V)<\infty\,.
    \end{equation}
    \item The Haar measure is left- and right-invariant under the action of the unitary group of degree $d$. That is, for any integrable function g(V) and for any $W\in U(d)$ we have 
    \begin{equation}
        \int_{U(d)}d\mu(V)g(WV)=\int_{U(d)}d\mu(V)g(VW)=\int_{U(d)}d\mu(V)g(V)\,.
    \end{equation}
        \item The Haar measure is uniquely defined up to a multiplicative constant factor. Let $d\omega(V)$ be an invariant measure, then, there existe a constant $c$ such that 
    \begin{equation}
        d\omega(V)= c\cdot d\mu(V)\,.
    \end{equation}        
     
\end{itemize}

\medskip

\noindent \textbf{Definition: $t$-design.} Let be $\{V_y\}_{y\in Y}$,  of size $|Y|$, be a set of unitaries $V_y$ acting on a $d$-dimensional Hilbert space. In addition, let $P_{t}(W)$ be a polinomial of degree at most $t$ in the matrix elements of $V$, and at most $t$ in those of  $V\ad$. Then $\{V_y\in U(d)\}_{y\in Y}$ is a unitary $t$-design if for every $P_{t}(W)$, the following holds~\cite{dankert2009exact}:
\begin{equation}
    \frac{1}{|Y|}\cdot \sum_{y\in Y}P_{t}(V_y)=\int d\mu(V) P_{t}(V)\,, \label{eq:SMt-design}
\end{equation}
where it is implicit that the integral is over $U(d)$.

\medskip

\noindent \textbf{Symbolic integration.} Here we recall formulas which allow for the symbolical integration with respect to the Haar measure on a unitary group~\cite{puchala2017symbolic}. For any $V\in U(d)$ the following expressions are valid for the first two moments:  
\small
\begin{equation}\label{eq:delta}
\begin{aligned}
    \int d\mu(V)v_{\vec{i}\vec{j}}v_{\vec{p}\vec{k}}^*&=\frac{\delta_{\vec{i}\vec{p}}\delta_{\vec{j}\vec{k}}}{d}\,,   \\
\int d\mu(V)v_{\vec{i}_1\vec{j}_1}v_{\vec{i}_2\vec{j}_2}v_{\vec{i}_1'\vec{j}_1'}^{*}v_{\vec{i}_2'\vec{j}_2'}^{*}&=\frac{\delta_{\vec{i}_1\vec{i}_1'}\delta_{\vec{i}_2\vec{i}_2'}\delta_{\vec{j}_1\vec{j}_1'}\delta_{\vec{j}_2\vec{j}_2'}+\delta_{\vec{i}_1\vec{i}_2'}\delta_{\vec{i}_2\vec{i}_1'}\delta_{\vec{j}_1\vec{j}_2'}\delta_{\vec{j}_2\vec{j}_1'}}{d^2-1}
-\frac{\delta_{\vec{i}_1\vec{i}_1'}\delta_{\vec{i}_2\vec{i}_2'}\delta_{\vec{j}_1\vec{j}_2'}\delta_{\vec{j}_2\vec{j}_1'}+\delta_{\vec{i}_1\vec{i}_2'}\delta_{\vec{i}_2\vec{i}_1'}\delta_{\vec{j}_1\vec{j}_1'}\delta_{\vec{j}_2\vec{j}_2'}}{d(d^2-1)}\,,
\end{aligned}
\end{equation}
\normalsize
where $v_{\vec{i}\vec{j}}$ are the matrix elements of $V$. Assuming $d=2^n$, we use the notation $\vec{i} = (i_1, \dots i_n)$ to denote a bitstring of length $n$ such that $i_1,i_2,\dotsc,i_{n}\in\{0,1\}$. 

\medskip

\noindent \textbf{Useful identities.} From Eqs.~\eqref{eq:delta}, the following identities can be readily derived~\cite{cerezo2020cost}
\small
\begin{align}
    \int d\mu(V)\Tr\left[VAV\ad B\right]&=\frac{\Tr\left[A\right]\Tr\left[B\right]}{d}
    \label{eq:lemma1}\\
    \int d\mu(V)\Tr[VAV\ad BVCV\ad D]&=\frac{\Tr[A]\Tr[C]\Tr[BD]+\Tr[AC]\Tr[B]\Tr[D]}{d^2-1}-\frac{\Tr[AC]\Tr[BD]+\Tr[A]\Tr[B]\Tr[C]\Tr[D]}{d(d^2-1)}\label{eq:lemma2}\\
    \int d\mu(V)\Tr[VAV\ad B]\Tr[VCV\ad D]&=\frac{\Tr[A]\Tr[B]\Tr[C]\Tr[D]+\Tr[AC]\Tr[BD]}{d^2-1}-\frac{\Tr[AC]\Tr[B]\Tr[D]+\Tr[A]\Tr[C]\Tr[BD]}{d(d^2-1)}\label{eq:lemma3}
\end{align}
\normalsize
where $A,B,C,D$ are linear operators on a $d$-dimentional Hilbert space. We point readers to \cite{cerezo2020cost} for detailed proofs of the aforementioned identities.

In addition, let us consider a  bipartite Hilbert space $\mathcal{H} \equiv \mathcal{H}_{1}\otimes \mathcal{H}_{2}$ of dimension $\dim(\mathcal{H})=d_1 d_2$. Let $A, B: \HC \to \HC$ be linear operators. 
Then the following equality holds:
\begin{equation}
\int d\mu(V)(\id_{1}\otimes V)A( \id_{1}\otimes V\ad)B= \frac{\Tr_{2}\left[A\right]\otimes\id_2}{d_2}B\,,
    \label{eq:lemma4}
\end{equation}
where $\Tr_2$ indicates the partial trace over $\mathcal{H}_2.$ We note that Eq.~\eqref{eq:lemma4} follows by employing \eqref{eq:delta}.

\medskip

\begin{lemma}\label{lemma1}
Let $\mathcal{H}\equiv \mathcal{H}_{1}\otimes \mathcal{H}_{2}$ denote a bipartite Hilbert space of dimension $\dim(\mathcal{H})=d_1 d_2$, such that $d_1=2^n$ and $d_2=2^{n'}$. Let $A,B:\mathcal{H}\rightarrow \mathcal{H}$ be linear operators, and let $V\in U(d)$. Then we have that 
\begin{align}
\Tr\left[(\id_{1}\otimes V)A(\id_{1}\otimes V\ad) B\right]=\sum_{\vec{p},\vec{q}}\Tr\left[V A_{\vec{q}\vec{p}}V\ad B_{\vec{p}\vec{q}}\right]\,.
    \label{eq:lemma5}   
\end{align}
 Where the summation runs over all bitstrings of length $n'$, and where we define
\begin{equation}
    A_{\vec{q}\vec{p}}=\Tr_{1}\left[(\ketbra{\vec{p}}{\vec{q}}\otimes\id_2)A\right]\,, \quad B_{\vec{p}\vec{q}}=\Tr_{1}\left[(\ketbra{\vec{q}}{\vec{p}}\otimes\id_2)B\right]\, .
\end{equation}
Here $\id_{1}$ ($\id_{2}$) is the identity operator over $\mathcal{H}_{1}$ ($\mathcal{H}_{2}$). 
\end{lemma}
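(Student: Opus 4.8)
The plan is to exploit that the unitary acts only on the second tensor factor, so that conjugation by $\id_1\ot V$ cannot mix the computational-basis blocks living on $\mathcal{H}_1$. Concretely, I would first expand $A$ and $B$ in the operator basis $\{\ketbra{\vec{p}}{\vec{q}}\}$ of $\mathcal{H}_1$, writing
\begin{equation}
A=\sum_{\vec{p},\vec{q}}\ketbra{\vec{q}}{\vec{p}}\ot A_{\vec{q}\vec{p}}\,,\qquad B=\sum_{\vec{p},\vec{q}}\ketbra{\vec{p}}{\vec{q}}\ot B_{\vec{p}\vec{q}}\,,
\end{equation}
where each block is an operator on $\mathcal{H}_2$. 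The first step is therefore to verify that these blocks are precisely the partial traces defined in the statement, i.e.\ that $\Tr_1[(\ketbra{\vec{p}}{\vec{q}}\ot\id_2)A]$ reproduces $A_{\vec{q}\vec{p}}$. This follows immediately by inserting $\id_1=\sum_{\vec{p}}\ketbra{\vec{p}}{\vec{p}}$ and comparing matrix elements, since $(A_{\vec{q}\vec{p}})_{ab}=\bra{\vec{q},a}A\ket{\vec{p},b}$.

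With the block decomposition in hand, the second step is to push $\id_1\ot V$ through. Because $\ketbra{\vec{q}}{\vec{p}}$ is inert under $V$, one obtains
\begin{equation}
(\id_1\ot V)\,A\,(\id_1\ot V\ad)=\sum_{\vec{p},\vec{q}}\ketbra{\vec{q}}{\vec{p}}\ot\big(V A_{\vec{q}\vec{p}}V\ad\big)\,,
\end{equation}
so the $\mathcal{H}_1$ structure is untouched and only the $\mathcal{H}_2$ blocks are conjugated.

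The third step is to multiply by $B$ and collapse the sums using the orthogonality relation $\ketbra{\vec{q}}{\vec{p}}\,\ketbra{\vec{r}}{\vec{s}}=\delta_{\vec{p}\vec{r}}\ketbra{\vec{q}}{\vec{s}}$, which eliminates one pair of $\mathcal{H}_1$ indices, and then to take the full trace. The trace factorizes as $\Tr=\Tr_1\ot\Tr_2$, and $\Tr_1[\ketbra{\vec{q}}{\vec{s}}]=\delta_{\vec{q}\vec{s}}$ forces the remaining $\mathcal{H}_1$ labels to coincide. What survives is exactly $\sum_{\vec{p},\vec{q}}\Tr_2[V A_{\vec{q}\vec{p}}V\ad B_{\vec{p}\vec{q}}]$, which is the claimed identity.

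I expect no genuine obstacle, as this is a bookkeeping identity rather than an analytic estimate. The only place demanding care is matching the index conventions: in particular the transposition of labels between $A_{\vec{q}\vec{p}}$ and its defining partial trace $\Tr_1[(\ketbra{\vec{p}}{\vec{q}}\ot\id_2)A]$, and keeping the $\mathcal{H}_1$ versus $\mathcal{H}_2$ summation ranges straight so that the collapsed Kronecker deltas pair $A_{\vec{q}\vec{p}}$ with $B_{\vec{p}\vec{q}}$ (and not with $B_{\vec{q}\vec{p}}$) in the final trace.
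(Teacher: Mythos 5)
Your proof is correct and follows exactly the route the paper takes: the paper's justification of this lemma is precisely ``expand the operators in the computational basis of $\mathcal{H}_1$,'' and your block decomposition, conjugation of the $\mathcal{H}_2$ blocks by $V$, and collapse of the Kronecker deltas under the trace is that computation carried out explicitly, with the index conventions matching the stated definitions of $A_{\vec{q}\vec{p}}$ and $B_{\vec{p}\vec{q}}$. The only cosmetic discrepancy is with the paper's claim that the sum runs over bitstrings of length $n'$; since $\ketbra{\vec{p}}{\vec{q}}$ acts on $\mathcal{H}_1$ of dimension $2^n$, the labels are length-$n$ bitstrings as your derivation implicitly assumes.
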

Equation~\eqref{eq:lemma5}  can be derived by expanding the operators in the computational basis as derived in~\cite{cerezo2020cost}.

\begin{lemma}\label{lemma2}
Let $\mathcal{H}\equiv \mathcal{H}_{1}\otimes \mathcal{H}_{2}\otimes \mathcal{H}_{3}\otimes \mathcal{H}_{4}$ denote a  Hilbert space of dimension $\dim(\mathcal{H})=d_1 d_2 d_3d_4$. Consider the following linear operators $H:\mathcal{H}_1\otimes\mathcal{H}_2\rightarrow\mathcal{H}_1\otimes\mathcal{H}_2$, $K:\mathcal{H}_1\otimes\mathcal{H}_4\rightarrow\mathcal{H}_1\otimes\mathcal{H}_4$,  $S,S':\mathcal{H}\rightarrow\mathcal{H}$, $P,P':\mathcal{H}_3\otimes\mathcal{H}_4\rightarrow\mathcal{H}_3\otimes\mathcal{H}_4$, $V:\mathcal{H}_1\otimes\mathcal{H}_3\rightarrow\mathcal{H}_1\otimes\mathcal{H}_3$, and $U:\mathcal{H}_1\otimes\mathcal{H}_4\rightarrow\mathcal{H}_1\otimes\mathcal{H}_4$. Then, let us define the following operators acting on  $\mathcal{H}_{1}\otimes \mathcal{H}_{2}$:
\begin{equation}
    M=\Tr_{34}[P[VUSU\ad V\ad,H]]\,,\quad \quad B=\Tr_{34}[P'VU[S',K]U\ad V\ad]\,,
\end{equation}
where $\Tr_{34}$ indicates the trace over subsystems $\mathcal{H}_{3}\otimes \mathcal{H}_{4}$. The following equality holds:
\begin{equation}\label{eq:proof}
\int d\mu(V)\Tr_{12}\left[MB\right]=0\,.
\end{equation}
\end{lemma}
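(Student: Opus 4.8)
The plan is to turn the statement into a single second-moment Haar integral over $V$ and to read off its vanishing from the two commutator structures. First I would simplify $M$: since $H$ is supported on $\mathcal{H}_1\otimes\mathcal{H}_2$ and $P$ on $\mathcal{H}_3\otimes\mathcal{H}_4$, $H$ can be pulled outside the partial trace $\Tr_{34}$, so that $M=[G,H]$ with $G=\Tr_{34}[PVUSU\ad V\ad]$ an operator on $\mathcal{H}_1\otimes\mathcal{H}_2$. Cyclicity of $\Tr_{12}$ then gives $\Tr_{12}[MB]=\Tr_{12}[G\,[H,B]]$, and the same pull-out identity yields $[H,B]=\Tr_{34}[P'[H,VU[S',K]U\ad V\ad]]$. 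After this step the integrand is a product of two blocks, each of the form $\Tr_{34}[\,\cdots V(\cdots)V\ad\cdots]$, one carrying the commutator with $H$ and one carrying the commutator $[S',K]$.

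Second, I would evaluate $\int d\mu(V)$. Each block contains exactly one $V$ and one $V\ad$, so $\Tr_{12}[G[H,B]]$ is quadratic in the entries of $V$ and the second-moment formulas apply. I would first invoke Lemma~\ref{lemma1} (Eq.~\eqref{eq:lemma5}), taking the inert factor to be the spectator space $\mathcal{H}_2\otimes\mathcal{H}_4$ (untouched by $V$) and the active factor to be $\mathcal{H}_1\otimes\mathcal{H}_3$, thereby rewriting every trace as a trace over $U(\mathcal{H}_1\otimes\mathcal{H}_3)$ with $V$ acting on the full active space; then Eqs.~\eqref{eq:lemma2}--\eqref{eq:lemma3} expand the integral into its four Weingarten contributions.

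Third, I would exhibit the cancellation. Setting $\widehat{K}=UKU\ad$ (still supported on $\mathcal{H}_1\otimes\mathcal{H}_4$) and $\widehat{S}'=US'U\ad$, the integrand depends on $S'$ only through the commutator $[\widehat{S}',\widehat{K}]$, so after integration the result can be written as $\Tr[\Xi\,[\widehat{S}',\widehat{K}]]=\Tr[[\widehat{K},\Xi]\,\widehat{S}']$ for a Haar-averaged operator $\Xi$ assembled from $\widehat{S}$, $H$, $P$ and $P'$ alone. Because $S'$ (hence $\widehat{S}'$) is arbitrary, the claim is equivalent to $[\widehat{K},\Xi]=0$. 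This is exactly where the commutator with $H$ enters: the factor $M=[G,H]$ forces the surviving Haar average (which, stripped of $\mathcal{H}_3,\mathcal{H}_4$ and the $[\widehat{S}',\widehat{K}]$ block, has the form $\int d\mu(V)\,V\ad\,(M\otimes Q)\,V$) to vanish, so that $\Xi$ lies in the commutant of $\widehat{K}$ and the trace is zero. Equivalently, each Weingarten term either reduces to the trace of a commutator with $H$, which vanishes identically, or is matched by an opposite-sign partner generated by the $[\widehat{S}',\widehat{K}]$ commutator.

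The step I expect to be hardest is the second-moment bookkeeping, owing to the fact that $\mathcal{H}_1$ is shared by $V$, $U$, $K$ and $H$: the two appearances of $V$ act on the same $\mathcal{H}_1$ but on the two distinct, separately traced copies of $\mathcal{H}_3$ coming from the two partial traces in $M$ and $B$, so the Weingarten contractions do not factorize and the $\mathcal{H}_1$ indices must be tracked explicitly. Showing that the four contributions assemble precisely into $\Tr[[\widehat{K},\Xi]\,\widehat{S}']$ with $[\widehat{K},\Xi]=0$ --- i.e.\ that the two commutators $[\,\cdot\,,H]$ and $[S',K]$ jointly drive an exact, rather than merely exponentially small, cancellation --- is the crux of the argument.
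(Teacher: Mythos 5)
Your setup is sound and matches the paper's (very terse) strategy: the paper proves Lemma~\ref{lemma2} simply by ``explicitly integrating over $V$ using~\eqref{eq:delta}'' (or via RTNI), and your first two steps --- pulling $H$ through $\Tr_{34}$ to get $M=[G,H]$ with $G=\Tr_{34}[PVUSU\ad V\ad]$, using cyclicity to write $\Tr_{12}[MB]=\Tr_{12}[G[H,B]]$, and reducing everything to the second-moment formulas via Lemma~\ref{lemma1} --- are all correct and are exactly the right preparation for that integration. The problem is that you never perform the integration, and the mechanism you offer in its place for why the result vanishes is not sound as stated. You argue that the surviving average ``has the form $\int d\mu(V)\,V\ad(M\otimes Q)V$'' and vanishes because $M$ is a commutator; but that is a first-moment identity, whereas here $M=M(V)$ itself contains $VUSU\ad V\ad$, so the integral is genuinely a second moment and does not reduce to $\Tr[M]\,\id/d$. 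The first-moment picture does work in the degenerate case where $V$ and $P$ have disjoint supports (e.g.\ $d_3=1$): there $V\ad G(V)V=\Tr_4[PUSU\ad]$ is $V$-independent and $\int d\mu(V)V\ad HV\propto\id$ kills the commutator. But in the lemma both $V$ and $P$ act nontrivially on $\mathcal{H}_3$, so $V$ cannot be commuted past $P$ or pulled inside $\Tr_{34}$, and the swap (permutation) term of the Weingarten formula contributes additional pieces --- these are precisely ``the four contributions'' you mention --- whose exact cancellation is the entire content of the lemma and is nowhere established.

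Your reformulation via $\Tr[\Xi\,[\widehat{S}',\widehat{K}]]=\Tr[[\widehat{K},\Xi]\,\widehat{S}']$ is legitimate but only restates the claim: it converts the lemma into the assertion that the Haar-averaged operator $\Xi=\int d\mu(V)\,U\ad V\ad\bigl([G(V),H]\otimes\id_{34}\bigr)P'VU$ commutes with every $K$ supported on $\mathcal{H}_1\otimes\mathcal{H}_4$, and no argument is given for why $\Xi$ should lie in that commutant (note that $\Xi$ manifestly depends on $P'$, which acts on $\mathcal{H}_4$, so this is not obvious). Since you yourself flag this step as ``the crux of the argument,'' the proposal as written is an outline with the decisive step missing rather than a proof; to complete it you would need to expand both partial traces in a product basis using Lemma~\ref{lemma1}, apply \eqref{eq:lemma2}--\eqref{eq:lemma3} to the resulting sums, and verify term by term that the identity and swap contractions cancel --- which is exactly the computation the paper delegates to \eqref{eq:delta} and the RTNI package.
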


\begin{lemma}\label{lemma3}
Let $\mathcal{H}\equiv \mathcal{H}_{1}\otimes \mathcal{H}_{2}\otimes \mathcal{H}_{3}\otimes \mathcal{H}_{4}$ denote a  Hilbert space of dimension $\dim(\mathcal{H})=d_1 d_2 d_3d_4$. Consider the following linear operators
$V:\mathcal{H}_1\otimes \mathcal{H}_2\rightarrow \mathcal{H}_1\otimes \mathcal{H}_2$,   
$U:\mathcal{H}_1\otimes \mathcal{H}_3\otimes \mathcal{H}_4\rightarrow \mathcal{H}_1\otimes \mathcal{H}_3\otimes \mathcal{H}_4$,  
$P,P':\mathcal{H}_2\otimes \mathcal{H}_3\otimes \mathcal{H}_4\rightarrow \mathcal{H}_2\otimes \mathcal{H}_3\otimes \mathcal{H}_4$,  and $Z,Z':\mathcal{H}_4\rightarrow \mathcal{H}_4$. 
%With this notation 1= in, 2=j+1, 3=j+2...i-1, 4=i, and i' (we can have cross terms in measurements).
Let $P$  and $P'$ be tensor product operators of the form $P=\Pi \otimes \widetilde{P}$, and $P'=\Pi' \otimes \widetilde{P}'$,  where $\widetilde{P},\widetilde{P}': \mathcal{H}_3\otimes \mathcal{H}_4\rightarrow \mathcal{H}_3\otimes \mathcal{H}_4$, and where $\Pi, \Pi':\mathcal{H}_2\rightarrow \mathcal{H}_2 $ are rank one projector such that  $\Pi^2=\Pi$, $\Tr[\Pi]=1$, $(\Pi')^2=\Pi'$, and  $\Tr[\Pi']=1$. Then, consider the following operators acting on subsystem $\mathcal{H}_1$:
\begin{align}
\Omega &= \Tr_{2,3,4}[ P V U Z U\ad V\ad]\,, \quad \widetilde{\Omega} = \Tr_{3,4}[ \widetilde{P} U Z U\ad ]\,,\\
\Omega' &= \Tr_{2,3,4}[ P' V U Z' U\ad V\ad]\,, \quad \widetilde{\Omega}' = \Tr_{3,4}[ \widetilde{P}' U Z' U\ad ]\,,
\end{align}
where $\Tr_{2,3,4}$ $(\Tr_{3,4})$ indicates the trace over  $\mathcal{H}_2\otimes\mathcal{H}_3\otimes\mathcal{H}_4$ $(\mathcal{H}_3\otimes\mathcal{H}_4)$. 
Then, the following equalities hold 
\begin{align}\label{eq:proof2}
\int d\mu(V)\Tr_{1}\left[\Omega \Omega' \right]&=\frac{d_1^2 d_2}{d_1^2 d_2^2 -1}\left(\Tr[\Pi \Pi']-\frac{1}{d_1^2d_2}\right)\Tr_1[\widetilde{\Omega}\widetilde{\Omega}']+\frac{d_1 d_2^2}{d_1^2 d_2^2 -1}\left(1-\frac{\Tr[\Pi \Pi']}{d_2}\right)
\Tr_1[\widetilde{\Omega}]\Tr[\widetilde{\Omega}']\,.\\
\int d\mu(V)\Tr_{1}\left[\Omega]\Tr[ \Omega' \right]&=\frac{d_1 d_2}{d_1^2 d_2^2 -1}\left(\Tr[\Pi \Pi']-\frac{1}{d_2}\right)\Tr_1[\widetilde{\Omega}\widetilde{\Omega}']+\frac{d_1^2 d_2^2}{d_1^2 d_2^2 -1}\left(1-\frac{\Tr[\Pi \Pi']}{d_1^2 d_2}\right)
\Tr_1[\widetilde{\Omega}]\Tr[\widetilde{\Omega}']\,.
\end{align}
\end{lemma}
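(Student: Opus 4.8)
\emph{Proof plan.} The plan is to first eliminate the dependence on subsystems $\mathcal{H}_3,\mathcal{H}_4$, so that the claim becomes a pure second-moment computation for the Haar-random $V\in U(d_1d_2)$ acting on $\mathcal{H}_1\otimes\mathcal{H}_2$, and then to invoke the symbolic integration formulas of the Preliminaries. The key reduction I would establish is
\begin{equation}
\Omega=\Tr_2\big[(\id_1\otimes\Pi)\,V(\widetilde{\Omega}\otimes\id_2)V\ad\big]\,,
\end{equation}
together with the analogous expression for $\Omega'$ in terms of $\Pi'$ and $\widetilde{\Omega}'$. To obtain it, write $W=UZU\ad$, an operator on $\mathcal{H}_1\otimes\mathcal{H}_3\otimes\mathcal{H}_4$, and use $P=\Pi\otimes\widetilde{P}$. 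Since $V$ acts only on $\mathcal{H}_1\otimes\mathcal{H}_2$ it commutes with the factor $\widetilde{P}$ on $\mathcal{H}_3\otimes\mathcal{H}_4$, so carrying out the trace over $\mathcal{H}_3\otimes\mathcal{H}_4$ first (neither $V$ nor $\Pi$ touches those subsystems) collapses $\Tr_{34}[\widetilde{P}\,W]$ to precisely $\widetilde{\Omega}$ on $\mathcal{H}_1$. The remaining expression involves only $V$, the projector $\Pi$, and the constant (in $V$) operator $\widetilde{\Omega}$, which is exactly a second moment of $V$.

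For the second identity I would note that $\Tr_1[\Omega]=\Tr_{12}[V(\widetilde{\Omega}\otimes\id_2)V\ad(\id_1\otimes\Pi)]$, and likewise for $\Tr[\Omega']$, so the left-hand side has the form $\int d\mu(V)\Tr[VAV\ad B]\Tr[VCV\ad D]$ with $A=\widetilde{\Omega}\otimes\id_2$, $B=\id_1\otimes\Pi$, $C=\widetilde{\Omega}'\otimes\id_2$, $D=\id_1\otimes\Pi'$, and $d=d_1d_2$. Applying \eqref{eq:lemma3} and substituting $\Tr[A]=d_2\Tr_1[\widetilde{\Omega}]$, $\Tr[B]=d_1$, $\Tr[AC]=d_2\Tr_1[\widetilde{\Omega}\widetilde{\Omega}']$, $\Tr[BD]=d_1\Tr[\Pi\Pi']$ (and the analogous remaining traces, using $\Tr[\Pi]=\Tr[\Pi']=1$), the four terms collect into the stated prefactors after factoring out $d_1^2d_2^2-1$.

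The first identity is the main obstacle, because the two $V$-conjugations sit inside \emph{independent} partial traces over $\mathcal{H}_2$ and therefore cannot be folded into the single-trace pattern required by \eqref{eq:lemma2}. Here I would instead write the rank-one projectors as $\Pi=\ketbra{\pi}{\pi}$ and $\Pi'=\ketbra{\pi'}{\pi'}$ and expand in the computational basis of $\mathcal{H}_1$,
\begin{equation}
\Tr_1[\Omega\Omega']=\sum_{i,j}\bra{i\pi}V(\widetilde{\Omega}\otimes\id_2)V\ad\ket{j\pi}\,\bra{j\pi'}V(\widetilde{\Omega}'\otimes\id_2)V\ad\ket{i\pi'}\,,
\end{equation}
which is a degree-two monomial in the entries of $V$ (two factors of $V$ and two of $V^{*}$). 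Applying the second-moment formula \eqref{eq:delta} then yields four Kronecker-delta terms, and contracting these over the $\mathcal{H}_1$ and $\mathcal{H}_2$ indices—recognizing $\Tr[\Pi\Pi']=|\braket{\pi}{\pi'}|^2$, $\Tr_1[\widetilde{\Omega}\widetilde{\Omega}']$, and $\Tr_1[\widetilde{\Omega}]\Tr[\widetilde{\Omega}']$ as they emerge—gives the claimed result.

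I expect the bulk of the work to lie in the careful bookkeeping of these four contractions and in verifying that the powers of $d_1$ and $d_2$ assemble into the prefactors $\tfrac{d_1^2d_2}{d_1^2d_2^2-1}$ and $\tfrac{d_1d_2^2}{d_1^2d_2^2-1}$. As a consistency check one can confirm that both identities reduce correctly in the trivial case $d_1=1$, where $\Omega=\widetilde{\Omega}$ is a scalar and the two right-hand sides must both collapse to $\Tr_1[\widetilde{\Omega}]\Tr[\widetilde{\Omega}']$.
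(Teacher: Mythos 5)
Your proposal is correct and takes essentially the same route as the paper, whose entire proof of Lemma~\ref{lemma3} is the remark that it follows by explicitly integrating over $V$ with the second-moment formula \eqref{eq:delta}: your reduction $\Omega=\Tr_2[(\id_1\otimes\Pi)V(\widetilde{\Omega}\otimes\id_2)V\ad]$ is valid (since $V$ and $\Pi$ act trivially on $\mathcal{H}_3\otimes\mathcal{H}_4$ the partial trace commutes past them), and the two Haar averages then follow from \eqref{eq:lemma3} and \eqref{eq:delta} exactly as you describe, reproducing the stated prefactors. One small correction: the first identity \emph{can} be folded into the single-trace pattern of \eqref{eq:lemma2}, namely $\Tr_1[\Omega\Omega']=\Tr[V(\widetilde{\Omega}\otimes\id_2)V\ad(\id_1\otimes\dyad{\pi}{\pi'})V(\widetilde{\Omega}'\otimes\id_2)V\ad(\id_1\otimes\dyad{\pi'}{\pi})]$, which is admissible because \eqref{eq:lemma2} holds for arbitrary (not necessarily Hermitian) operators $B$ and $D$; this shortcuts your four-contraction bookkeeping and yields the same coefficients.
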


Lemmas~\ref{lemma2}, and ~\ref{lemma3} can be derived by explicitly integrating over $V$ using~\eqref{eq:delta}. In addition, one can also use the RTNI package for symbolic integrator over Haar-random tensor networks of Ref.~\cite{Fukuda_2019}.

\bigskip

{\bf Remark--} 
We note that for simplicity we derive the proofs of our theorems for the case when the output states $\ket{\phi^\tout_x}$ in the training set are tensor product of computational basis states over $n$ qubits. The proofs can then be trivially generalized for arbitrary tensor-product states over $n$ qubits. Again, for simplicity, we provide rigorous proofs for the case when there are no hidden layers and later argue in Section~\ref{sec:dqnn-hidden} that our results hold for DQNNs with hidden layers. We also note that DQNNs considered in our work have a simple structure where unitaries act on $n+1$ qubits. In Section~\ref{sec:dqnnnm} we generalize our results to the case when unitaries act on $n+m$ qubits.

\section{Generalization of our results to quantum machine learning task}
In the main text we have stated our main results in terms of state preparation, where the training set is of the form $\{\ket{\phi_x^{\tin}},\ket{\phi_x^{\tout}}\}$. As we show here, this result can be used to generalize our result for other supervised learning quantum machine learning tasks.

Consider now the case when the training set is of the form $\{\ket{\psi_x^{\text{in}}},y_x\}$. Here $y_x$ are labels associated with each input quantum state $\ket{\psi_x^{\text{in}}}$, and the goal of the DQNN is to predict a label $\widetilde{y}_x$ that matches the true label. For simplicity we here consider the case of binary classification $y_x\in\{-1,1\}$.

Following the scheme introduced in~\cite{havlivcek2019supervised}, the predicted labels $\widetilde{y}_x$ can be obtained by performing a binary measurement $M_y$ on the states $\rho_x^{\text{out}}$. Without loss of generality, this measurement is taken to be on the $z$-basis, so that the measurement outcomes are bitstrings $\vec{z}$ of length $k$, where $k$ is the number of measured qubits. The measurement operator is then given by 
\begin{equation}
    M_y=\frac{\id+y\vec{h}}{2}\,, \quad \text{with} \quad \vec{h}=\sum_{\vec{z}}h(\vec{z})\dya{\vec{z}}\,.
\end{equation}
Thus, the probability of obtaining label $y$ from input state $\ket{\psi_x^{\text{in}}}$ is
\begin{equation}
    p_y(\ket{\psi_x^{\text{in}}})=\sum_{\vec{z}}h(\vec{z})\Tr[O_{\vec{z}} \rho^{\text{out}}_x]\,,
\end{equation}
where
\begin{equation}\label{eq:cost-QML}
    O_{\vec{z}}=\id\otimes\dya{\vec{z}}\,.
\end{equation}
From these probabilities, the assigned labels are $\widetilde{y}_x=p_y(\ket{\psi_x^{\text{in}}})\geq p_{-y}(\ket{\psi_x^{\text{in}}})$.

Note that Eq.~\eqref{eq:cost-QML} is precisely of the form considered in the main text. Thus, the formalism in the main text can be directly extended for other supervised learning tasks.

\section{Proof of $\langle \partial C\rangle =0$}
\label{sec:average}

In this section, we prove that the average value of the partial derivative of the cost function $\langle \partial C \rangle$ is not biased towards any particular value. In particular, we show that $\langle \partial C \rangle = 0$. We prove this result for both cases: 1) random parameterized quantum circuits and 2) parameter matrix multiplication method.

\bigskip

\noindent \textbf{Random parametrized quantum circuits.} Let us  first consider the case when the perceptrons are random parametrized quantum circuits of the form
\begin{equation}\label{eq:parametrizationSM}
V_j^l(\thv_j^l)=\prod_{k=1}^{\eta_j^l}R_k(\theta^k)W_k\,, 
\end{equation}
with  $W_k$ an unparametrized unitary, $R_k(\theta^k)=e^{-(i/2) \theta^k \Gamma_k}$, and where $\Gamma_k$ is a Hermitian operator with $\Tr[ \Gamma_k^2]\leq 2^{n+1}$.

We recall from the main text that the training set consisting of input and output pure quantum states: $\{\ket{\phi_x^{\tin}},\ket{\phi_x^{\tout}}\}_{x=1}^N$. As mentioned in the Remark in the previous section, we now assume that  $\ket{\phi_x^{\tout}}$ are computational basis states, i.e.,  $\ket{\phi_x^{\tout}}\equiv \ket{\vec{z}^x}=\ket{z^x_1z^x_2\ldots z^x_n}$. The DQNN cost function is defined as
\begin{equation}\label{eq:costxSM}
    C=\frac{1}{N}\sum_{x=1}^N C_x\,, \quad \text{with} \quad C_x =\Tr[O_x \rho^\tout_x]\,,
%\matl{\phi_x^{\tout}}{\rho^\tout_x}{\phi_x^{\tout}}
\end{equation}
where $O_x$ is given by the global observable
\begin{equation}\label{eq:globalopSM}
    O_x^G\equiv\id - \ketbra{\phi_x^{\tout}}{\phi_x^{\tout}}\,
\end{equation}
or by the local operator
\begin{equation}\label{eq:localopSM}
    O_x^L=\id - \frac{1}{n}\sum_{i=1}^n\dya{z^x_i}\otimes \id_{\overline{i}}\,,
\end{equation}
where $\id_{\overline{i}}$ indicates identity on all qubits in the output layer except for qubit $i$.
Then the partial derivative of $C$ with respect to parameter $\theta^\nu$ in a perceptron $V_j^l$ is given by
\begin{align}
\partial_\nu C&= \frac{i}{2N}\sum_{x=1}^{N}\Tr\Big[ A_j^l\tilde{\rho}^\tin_x (A_j^l)\ad[\id_{\overline{j}}^{\overline{l}}\otimes\Gamma_k,(B_j^l)\ad\tilde{O}_x B_j^l]\Big]\,,\\
&= -\frac{i}{2N}\sum_{x=1}^{N}\Tr\Big[ (B_j^l)\ad \tilde{O}_x B_j^l[\id_{\overline{j}}^{\overline{l}}\otimes\Gamma_k,A_j^l\tilde{\rho}^\tin_x (A_j^l)\ad]\Big]\,,\label{eq:costparderA}
\end{align}
where the trace is taken all qubits in the QNN, and where $\id_{\overline{j}}^{\overline{l}}$ indicates the identity on all qubits on which $V_j^l$ does not act on. Moreover, we have defined 
\begin{align}
A_j^l&=\id_{\overline{j}}^{\overline{l}}\otimes\prod_{k=\nu}^{\eta_j^l}R_k(\theta^k)W_k,\label{eq:AandB1}\\
    B_j^l&=\id_{\overline{j}}^{\overline{l}}\otimes\prod_{k=1}^{\nu-1}R_k(\theta^k)W_k\,,\label{eq:AandB2}\\
\tilde{\rho}^\tin_x&=V_{j-1}^l\ldots V_1^1 (\rho^\tin_x\otimes\dya{\vec{0}}_{\thid,\tout}) (V_1^1)\ad\ldots (V_{j-1}^l)\ad\,,\\
\tilde{O}_x&=(V_{j+1}^l)\ad\ldots (V_{n_\tout}^\tout)\ad (\id_{\tin,\thid}\otimes O_x) V_{n_\tout}^\tout\ldots V_{j+1}^l\,.
\end{align}

Let us first consider the case when $B_j^l$ is a $1$-design. Assuming that all the perceptrons are independent, and that $A_j^l$ and $B_j^l$ are also independent, we can  express
\begin{equation}\label{eq:averages}
    \langle\cdots\rangle=\langle\cdots\rangle_{V_1^1,\ldots,A_j^l,B_j^l,\ldots,V_{n_\tout}^\tout}=\langle\langle\cdots\rangle_{B_j^l}\rangle_{V_1^1,\ldots,A_j^l,\ldots,V_{n_\tout}^\tout}\,.
\end{equation}
Hence, we can first compute the average of $\partial_\nu C$ over $B_j^l$ as
\begin{align}
   \langle \partial_\nu C\rangle_{B_j^l}=& \frac{i}{2N}\sum_{x=1}^{N}\Tr\Big[ A_j^l\tilde{\rho}^\tin_x (A_j^l)\ad[\id_{\overline{j}}^{\overline{l}}\otimes\Gamma_k,\int d\mu(B_j^l)(B_j^l)\ad\tilde{O}_x B_j^l]\Big]\label{eq:SM_gradWAzero0}\\
    =&\frac{i}{2N}\sum_{x=1}^{N}   \Tr\left[A_j^l\tilde{\rho}^\tin_x (A_j^l)\ad[\id_{\overline{j}}^{\overline{l}}\otimes\Gamma_k,\frac{1}{2^{m+1}}\Tr_{jl}[\tilde{O}_x]\otimes\id_j^l\right]\Big]\nonumber\\
    =&0\,.
    \label{eq:SM_gradWAzero}
\end{align}
Here, $\Tr_{jl}$ indicates the trace over the $m+1$ qubits on which $B_j^l$ acts, and  $\id_j^l$ ($\id_{\overline{j}}^{\overline{l}}$) is the identity operator  over the qubits on which $B_j^l$ acts (does not act). 
For the second equality we used \eqref{eq:lemma4}, and  in the third equality we used the fact that a commutator inside the trace is always zero. With a similar argument it is straightforward to show from~\eqref{eq:costparderA} that    $\langle \partial_\nu C\rangle_{A_j^l}=0$ if $A_j^l$ is a $1$-design. Finally, from Eq.~\eqref{eq:averages} we know that   $\langle \partial_\nu C\rangle_{A_j^l}=0$ ( $\langle \partial_\nu C\rangle_{B_j^l}=0$) leads to  $\langle \partial_\nu C\rangle=0$.

\medskip

\noindent \textbf{Parameter matrix multiplication.} In this case the perceptrons are updated with the parameter multiplication matrix method of  Ref.~\cite{beer2020training}. We recall that  the perceptrons $V_j^l(0)$ are randomely initialized, and then at each step $s$ they are updated via 
\begin{equation}\label{eq:updateSM}
    V_j^l(s+\varepsilon)=e^{i\varepsilon H_j^l(s)}V_j^l(s)\,,
\end{equation}
where $H^l_j(s)$ is a Hermitian operator such that $\Tr[(H_j^l)^2]\leq 2^{n+1}$.

As shown in~\cite{beer2020training}, the derivative of $C(s)$ with respect to $s$ is given by,
\begin{align}
    \partial_s C(s)&= \lim_{\varepsilon \rightarrow 0}\frac{C(s+\varepsilon)-C(s)}{\varepsilon}\nonumber\\ &=\frac{i}{N}\sum_{x=1}^N\left[\sum_{l=1}^{\tout}\sum_{j=1}^{n_l} \Tr\Big[\id_{\overline{j}}^{\overline{l}}\otimes H^l_{j}(s)[V_j^l(s) \tilde{\rho}^\tin_x (s) (V_j^l(s))\ad,\tilde{O}_x(s)]\Big]\right]\label{eq:partialsC}\\
    &=\frac{i}{N}\sum_{x=1}^N\left[\sum_{l=1}^{\tout}\sum_{j=1}^{n_l} \Tr\Big[\tilde{O}_x(s)[ \id_{\overline{j}}^{\overline{l}}\otimes H^l_{j}(s),V_j^l(s) \tilde{\rho}^\tin_x(s)  (V_j^l(s))\ad]\Big]\right]\,,
\end{align}
where now
\begin{align}
\tilde{\rho}^\tin_x(s)&=V_{j-1}^l(s)\ldots V_1^1(s) (\rho^\tin_x\otimes\dya{\vec{0}}_{\thid,\tout}) (V_1^1(s))\ad\ldots (V_{j-1}^l(s))\ad\,,\label{eq:rhotildin}\\
    \tilde{O}_x(s)&=(V_{j+1}^l(s))\ad\ldots (V_{n_\tout}^\tout(s))\ad (\id_{\tin,\thid}\otimes O_x) V_{n_\tout}^\tout(s)\ldots V_{j+1}^l(s)\,.\label{eq:rhotildout}
\end{align}
Let us now consider time-step $s=0$. If we assume that all the perceptrons are independently initialized we have 
\begin{equation}\label{eq:averagesmult}
    \langle\cdots\rangle=\langle\cdots\rangle_{V_1^1(0),\ldots,V_{n_\tout}^\tout(0)}=\langle\langle\cdots\rangle_{V_j^l(0)}\rangle_{V_1^1(0),\ldots,V_{j-1}^l(0),V_{j+1}^l(0),\ldots,V_{n_\tout}^\tout(0)}\,.
\end{equation}
Therefore, if $V_j^l(0)$ is a $1$-design, from \eqref{eq:SM_gradWAzero0}--\eqref{eq:SM_gradWAzero} it follows that $\langle \partial_s C \rangle_{V_j^l(0)}=\langle \partial_s C \rangle=0$.

\setcounter{theorem}{1}

\section{Proof of Theorem \ref{theo1SM}}\label{sec:proof-theo1}
In this section, we provide a proof of Theorem \ref{theo1SM}. In what follows we will assume a DQNN with no hidden layers, and such that each perceptron only acts on one output qubit. We generalize to DQNNs acting on $m$ output qubits and with $L$  hidden layers in sections ~\ref{sec:dqnnnm} and~\ref{sec:dqnn-hidden}, respectively.  
 
 We first recall Theorem \ref{theo1SM} for convenience: 
\begin{theorem}\label{theo1SM}
Consider a DQNN with deep global perceptrons, which are updated via the parameter matrix multiplication of~\eqref{eq:updateSM}, and such that  $V_j^l(0)$ form independent $2$-designs over $n+1$ qubits. Then the variance of the partial derivative of the cost function with respect to the time-step paramater $s$ is upper bounded as
\begin{equation}\label{eq:theo1eq}
\Var[\partial_s C]\leq f(n), \quad \text{with} \quad f(n)\in\OC\left(1/2^{n}\right)\,,
\end{equation}
when $O_x$ is the global operator as in \eqref{eq:globalopSM} or the local operator as  in~\eqref{eq:localopSM}. 
\end{theorem}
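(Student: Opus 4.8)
The plan is to start from the identity $\Var[\partial_s C]=\langle(\partial_s C)^2\rangle$, which holds because Section~\ref{sec:average} already establishes $\langle\partial_s C\rangle=0$ whenever each perceptron forms a $1$-design (and hence, a fortiori, a $2$-design). Taking $\partial_s C$ from~\eqref{eq:partialsC} at $s=0$, I would expand the square into a quadruple sum over training labels $x,x'$ and perceptron indices $j,j'$, so that each summand is a product of two traces: one assembled from $V_j$, $\tilde{\rho}^\tin_x$, $\tilde{O}_x$ and the generator $H_j$, and the other from the corresponding primed objects. Since the perceptrons are independent $2$-designs, the Haar average of each summand factorizes over the perceptrons and may be evaluated one perceptron at a time using the identities~\eqref{eq:lemma1}--\eqref{eq:lemma4} together with Lemmas~\ref{lemma2} and~\ref{lemma3}.

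The first substantive step is to discard every off-diagonal summand with $j\neq j'$. Taking without loss of generality $j<j'$, the higher-index perceptron $V_{j'}$ enters the factor labelled $j'$ only through the conjugation $V_{j'}\tilde{\rho}^\tin_{x'}(V_{j'})\ad$ sitting inside the commutator with $\tilde{O}_{x'}$, while in the factor labelled $j$ it enters only inside $\tilde{O}_x$, which is itself enclosed in a commutator. This is exactly the configuration covered by Lemma~\ref{lemma2}, whose conclusion $\int d\mu(V)\,\Tr_{12}[MB]=0$ forces the $V_{j'}$-average, and therefore the entire summand, to vanish. Because this integration can be performed first and independently of the remaining perceptrons, it removes all $j\neq j'$ terms at once, leaving only the diagonal contributions $j=j'$.

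For the surviving diagonal terms I would carry out the remaining averages as a single sweep of the perceptrons. Integrating the output-side perceptrons $V_{j+1},\dots,V_n$, which appear conjugating $\tilde{O}_x$ and $\tilde{O}_{x'}$, with Lemma~\ref{lemma3} progressively contracts the observable; here the hypothesis that $\ket{\phi_x^\tout}$ is a computational-basis state is what supplies the rank-one projectors $\Pi,\Pi'$ demanded by that lemma. Each such step sends the pair of scalars $(\Tr_1[\Omega\Omega'],\Tr_1[\Omega]\Tr[\Omega'])$ to the analogous pair for the contracted operators through the fixed $2\times2$ transfer relation read off from~\eqref{eq:proof2}, whose entries carry dimensional suppression factors of order $1/d_2$ and $1/(d_1^2 d_2)$. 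The input-side perceptrons $V_1,\dots,V_{j-1}$, which appear inside $\tilde{\rho}^\tin_x$, are handled the same way, using $\Tr[(\rho^\tin_x)^2]=1$ and the fact that the hidden and output registers start in $\dya{\vec{0}}$, while the active perceptron $V_j$ is integrated last, which is where the generator $H_j$ is brought in.

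The final step is to multiply out the accumulated transfer matrices and bound the boundary data. Using $\Tr[(H_j^l)^2]\leq 2^{n+1}$ together with the norms of $O_x^G$ and $O_x^L$ from~\eqref{eq:globalopSM} and~\eqref{eq:localopSM}, the telescoped product of dimensional factors collapses to $\OC(1/2^n)$ for both cost operators, with the local operator saturating the bound since each perceptron is a full $2$-design on $n+1$ qubits and so leaves no room for the locality of $O_x^L$ to improve the scaling. I expect the main obstacle to be precisely this recursive sweep: tracking how the dissipative partial traces over the input and hidden registers interleave with the nested commutators, and verifying that the coupled $(\Tr[\Omega\Omega'],\Tr[\Omega]\Tr[\Omega'])$ recursion genuinely contracts at the stated exponential rate rather than generating compensating $d$-dependent factors, is where the delicate bookkeeping concentrates.
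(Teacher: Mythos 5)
Your overall architecture matches the paper's proof: since $\langle\partial_s C\rangle=0$, expand $\langle(\partial_s C)^2\rangle$ into a sum over $x,x',j,j'$, kill the $j\neq j'$ cross terms with Lemma~\ref{lemma2}, and bound the diagonal terms by reducing each factor to a single surviving bitstring and sweeping the perceptrons one at a time with Lemma~\ref{lemma3}, accumulating the contraction factor $2^n(2^n+1/2)/(2^{2(n+1)}-1)$ at each step and closing with $\Tr[(H_j)^2]\leq 2^{n+1}$. However, your cross-term argument misapplies Lemma~\ref{lemma2}. For $j<j'$ you propose to integrate over the \emph{higher}-index perceptron $V_{j'}$, but the hypotheses of Lemma~\ref{lemma2} require the averaged unitary $V$ to be the \emph{outermost} conjugation in both trace factors: in $M=\Tr_{34}[P[VUSU\ad V\ad,H]]$ the commutator with $H$ encloses the $V$-conjugated object, and in $B=\Tr_{34}[P'VU[S',K]U\ad V\ad]$ the commutator sits inside the $V$-conjugation. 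Neither holds for $V_{j'}$: in the factor labelled $j$ it sits in the \emph{middle} of the chain $V_{j+1}\ad\cdots V_{j'}\ad(\cdots)V_{j'}\cdots V_{j+1}$ (and you cannot absorb the outer unitaries by Haar invariance, since they also appear in the other factor), and in the factor labelled $j'$ it conjugates the input state with the commutator $[\tilde O_{x'},H_{j'}]$ sitting entirely \emph{outside} the conjugation. The correct choice, as in the paper, is to average over $V_{j+1}$, the perceptron immediately above the \emph{smaller} index: there $V_{j+1}\ad$ is the outermost conjugation in $\tilde O_x$ (giving the $M$-structure with $H=H_j$) and simultaneously the outermost conjugation of $V_{j+2}\ad\cdots V_{j'}\ad[\,\cdot\,,H_{j'}]V_{j'}\cdots V_{j+2}$ in the other factor (giving the $B$-structure with $K=H_{j'}$). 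Without this identification the cross terms are a genuine second-moment integral that is not manifestly zero, so as written this step fails.

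Two smaller points. First, for the local cost the paper cannot simply reuse the global argument: it must split into the cases $i<j$ (where the commutator $[\,\id_{\tin,\overline{i}}\otimes\dya{z^x_i},\id_{\overline{j}}\otimes H_j]$ vanishes identically, or the $\Delta(\Omega)$ factor is exactly zero), $i=j$, and $i>j$ (which requires showing the reduced operator $\upsilon_j\leq\id_{\tin,j}$ before the commutator bound applies); your remark that full $2$-designs ``leave no room'' for locality to help is a plausible heuristic for why the exponent does not improve, but it is not a substitute for this case analysis. Second, the order in which you integrate ($V_j$ last rather than first) is immaterial by independence, but the reduction to the single bitstring $\vec r^{(x,j)}$ must be done before invoking Lemma~\ref{lemma3}, since it is that reduction (from the $\dya{\vec 0}$ initialization on the output register and the computational-basis output states) that supplies the rank-one projectors $\Pi,\Pi'$ the lemma demands.
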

\begin{proof}

We divide our proof in several subsections consisting of different cases. We also analyze the global and local cost functions separately. For simplicity we consider a DQNN with no hidden layers, and where both input and output layers consist of $n$ qubits. 
Moreover, we  denote the randomly initialized perceptrons at time step $s=0$ as $V_j$, where we also henceforth drop the superscript index that denotes the layer.  

Let
\begin{align}
    \sigma_x^{\tin} &= \ketbra{\phi_x^{\tin}}{\phi_x^{\tin}}\otimes \ketbra{\vec{0}}{\vec{0}}_{\text{out}}\,,\quad \quad 
    \sigma_x^{\tout}  =  \id_{\tin}\otimes O_x\,, \label{eq:sigmas}\\
    G_j^x &= [V_j \cdots V_1 \sigma_x^{\tin} V_1\ad \cdots V_j\ad, V_{j+1}\ad \cdots V_n\ad \sigma_x^{\tout}V_n \cdots V_{j+1}]. 
\end{align}
Then \eqref{eq:partialsC} can be rewritten as 
\begin{align}\label{eq:partialCsum}
    \partial_s C = \frac{i}{N}\sum_{x=1}^N\left[\sum_{j=1}^n \Tr[G^x_j H_j]\right].
\end{align}
Note that from the cyclicity of the trace, each term $\Tr[G^x_j H_j]$ can also be expressed as
\begin{align}
    \Tr[G^x_j H_j] = \Tr\left[V_j \cdots V_1 \sigma_x^{\tin}V_1\ad \cdots V_j\ad \left[V_{j+1}\ad\cdots V_n\ad \sigma_x^{\tout}V_n \cdots V_{j+1}, H_j\right]\right]. \label{eq:gxj-hxj-gen-form}
\end{align}

The proof of Theorem~\ref{theo1SM} is constructed as follows. We first note that $\langle \partial C \rangle =0$ and thus, $\text{Var}[\partial_s C]$ only depends on the second moment of partial derivatives. Moreover, $\langle (\partial_s C)^2\rangle$ depends on terms of the following form $\Tr[G^x_j H_j]\Tr[G^{x^{\prime}}_{j^{\prime}} H_{j^{\prime}}]$. We first consider a single term in the summation over $j$ and $x$ in~\eqref{eq:partialCsum}, and show that $\langle (\Tr[G^x_j H_j])^2 \rangle \leq f(n)$ with $f(n)$ as in~\eqref{eq:theo1eq}.  Then we prove that the cross terms in $x$ and $j$ also satisfy $\langle \Tr[G^x_j H_j] \Tr[G^{x'}_{j'} H_{j'}] \rangle \leq f(n)$.

\subsection{Global Cost}\label{sec:gc-thm1}

\subsubsection{Fixed $j$ and fixed $x$}\label{sec:fixed-jx-thm1-gc}

Let us first consider the case when the cost function is defined in terms of the global operator in~\eqref{eq:globalopSM}. As previously discussed, we analyze the scaling of the variance of a single term in~\eqref{eq:partialCsum} with fixed $x$ and $j$.  By invoking Lemma~\ref{lemma1}, $\Tr[G^x_j H_j]$ can be expressed as
\begin{align}
    \Tr[G^x_j H_j] = \sum_{\vec{p},\vec{q}}\Tr[V_j A^{(x,j)}_{\vec{q}\vec{p}} V_j\ad B^{(x,j)}_{\vec{p}\vec{q}}],\label{eq:gxj-hxj}
\end{align}
where 
\begin{align}
    A^{(x,j)}_{\vec{q}\vec{p}} & =\Tr_{\overline{j}}[(\id_{\tin, j}\otimes \ket{\vec{p}}\bra{\vec{q}}  )A^{(x,j)}] ,\quad A^{(x,j)} = (V_{j-1}\dots V_1)\sigma_x^{\text{in}}(V_1\ad\dots V_{j-1}\ad),\label{eq:Aqp-j}\\
    B^{(x,j)}_{\vec{p}\vec{q}} & = \Tr_{\overline{j}}[( \id_{\tin, j}\otimes \ket{\vec{q}}\bra{\vec{p}} )B^{(x,j)}],\quad B^{(x,j)} = [V_{j+1}\ad\ldots V_n\ad\sigma_x^{\text{out}} V_n \ldots V_{j+1},H_j]\,.\label{eq:Bpq-j}
\end{align}
Here $\Tr_{\overline{j}}$ indicates the trace over all qubits in the output layer except for qubit $j$. Moreover, we remark  that the summation in~\eqref{eq:gxj-hxj} runs over all bitstrings  $\vec{p}$, and $\vec{q}$ of lenght $n-1$, and we recall  that the operator $\ket{\vec{q}}\bra{\vec{p}}$  acts on all qubits in the output layer except on qubit $j$.
From the definition of $\sigma_x^\tin$ in~\eqref{eq:sigmas} and from \eqref{eq:Aqp-j}, it follows that $A^{(x,j)}_{\vec{q}\vec{p}}$ is nonzero when
\begin{align}
    p_k = q_k =0, \forall k \in \{j+1, \dots, n\}\,.\label{eq:pq-n}
\end{align}
Similarly, from $\sigma_x^\tout$ and~\eqref{eq:Bpq-j}, it follows that $B^{(x,j)}_{\vec{p}\vec{q}}$ is nonzero when
\begin{equation}
    p_k = q_k = z^{x}_k, \forall k \in \{1, \dots, j-1\}\,.  \label{eq:pq-1}
\end{equation}

Equations~\eqref{eq:pq-n} and \eqref{eq:pq-1} follow from the fact that each $V_j$ acts on $n+1$ number of qubits, and these equations imply that there is a single nonzero term in the summation~\eqref{eq:gxj-hxj}. This term can be identified by defining the following bitstring  of length $n-1$:
\begin{align}\label{eq:rxj}
\vec{r}^{(x, j)} \equiv (z^x_1,z^x_2,\ldots,z^x_{j-1},0,\ldots,0)\,.
\end{align} 
Then, by invoking Lemma~\eqref{eq:lemma3} we get  
\begin{align}
\langle (\Tr[G^x_j H_j])^2\rangle_{V_j}  &=    \int d\mu(V_j)  \Tr[V_j A^{(x,j)}_{\vec{r}^{(x, j)}\vec{r}^{(x, j)}} V_j\ad B^{(x,j)}_{\vec{r}^{(x, j)}\vec{r}^{(x, j)}}]\Tr[V_j A^{(x,j)}_{\vec{r}^{(x, j)} \vec{r}^{(x, j)}} V_j\ad B^{(x,j)}_{\vec{r}^{(x, j)}\vec{r}^{(x, j)}}] \\
&= \frac{1}{2^{2(n+1)}-1}\left(\Tr[( A^{(x,j)}_{\vec{r}^{(x, j)}\vec{r}^{(x, j)}})^2] - \frac{1}{2^{n+1}} \Tr[ A^{(x,j)}_{\vec{r}^{(x, j)}\vec{r}^{(x, j)}}]^2\right)\Tr[(B^{(x,j)}_{\vec{r}^{(x, j)}\vec{r}^{(x, j)}})^2]\\
& \leq \frac{1}{2^{2(n+1)}-1}\Tr[ (A^{(x,j)}_{\vec{r}^{(x, j)}\vec{r}^{(x, j)}})^2]\Tr[(B^{(x,j)}_{\vec{r}^{(x, j)}\vec{r}^{(x, j)}})^2]\,. \label{eq:gxj-hxj-vj}
\end{align}
where in the inequality we used the fact that $\Tr[ A^{(x,j)}_{\vec{r}^{(x, j)}\vec{r}^{(x, j)}}]>0$ as $A^{(x,j)}_{\vec{r}^{(x, j)}\vec{r}^{(x, j)}}$ is a positive semidefinite operator, and therefore, we can drop the term with the negative sign.

Since $A^{(x,j)}_{\vec{r}^{(x, j)}\vec{r}^{(x, j)}}$ and $B^{(x,j)}_{\vec{r}^{(x, j)}\vec{r}^{(x, j)}}$ are functions of different preceptrons $V_i$,  we can compute an upper bound on the expectation value of $(\Tr[G^x_j H_j])^2$ as follows:
\begin{equation}\label{eq:Abaverage}
    \langle\Tr[ (A^{(x,j)}_{\vec{r}^{(x, j)}\vec{r}^{(x, j)}})^2]\Tr[(B^{(x,j)}_{\vec{r}^{(x, j)}\vec{r}^{(x, j)}})^2]\rangle=\langle \Tr[(A^{(x,j)}_{\vec{r}^{(x, j)}\vec{r}^{(x, j)}})^2]\rangle_{V_1,\ldots V_{j-1}}\langle \Tr[(B^{(x,j)}_{\vec{r}^{(x, j)}\vec{r}^{(x, j)}})^2]\rangle_{V_{j+1},\ldots V_{n}}\,.
\end{equation}

We note that since $H_j$ only acts on all input qubits and on output qubit $j$, then   $B^{(x,j)}_{\vec{r}^{(x, j)}\vec{r}^{(x, j)}}$ can be expressed in the following compact form \begin{align}
B^{(x,j)}_{\vec{r}^{(x, j)}\vec{r}^{(x, j)}} = s_{\vec{r}^{(x, j)}}^{(x,j)} [\omega_{\vec{r}^{(x, j)}}^{(x,j)}, H_j],\label{eq:bppj}
\end{align}
where 
\begin{align}
\omega_{\vec{r}^{(x, j)}}^{(x,j)} &= \frac{1}{s_{\vec{r}^{(x, j)}}^{(x,j)}}\Tr_{\overline{j}}[( \id_{\tin, j}\otimes \ket{\vec{r}^{(x, j)}}\bra{\vec{r}^{(x, j)}} )V_{j+1}\ad\ldots V_n\ad\sigma_x^{\text{out}} V_n \ldots V_{j+1}],\label{eq:taux-j}\\
s_{\vec{r}^{(x, j)}}^{(x,j)} &= \Tr[( \id_{\tin, j}\otimes \ket{\vec{r}^{(x, j)}}\bra{\vec{r}^{(x, j)}} )V_{j+1}\ad\ldots V_n\ad\sigma_x^{\text{out}} V_n \ldots V_{j+1}].\label{eq:sx-j}
\end{align}
It is straightforward to note that $\omega_{\vec{r}^{(x, j)}}^{(x,j)}$ is a quantum state on all qubits in the input layer plus qubit $j$ in the output layer. Let us now consider the following chain of inequalities: 
\begin{align}
    \Tr[([\omega_{\vec{r}^{(x, j)}}^{(x,j)}, H_j])^2] 
    &= 2\left(\Tr[\omega_{\vec{r}^{(x, j)}}^{(x,j)}H_j\omega_{\vec{r}^{(x, j)}}^{(x,j)}H_j]-\Tr[(\omega_{\vec{r}^{(x, j)}}^{(x,j)})^2(H_j)^2]\right)\label{eq:ineqtauH-init}\\
    &\leq 2\Tr[\omega_{\vec{r}^{(x, j)}}^{(x,j)}H_j\omega_{\vec{r}^{(x, j)}}^{(x,j)}H_j] \nonumber\\
    & \leq 2\Tr[H_j\omega_{\vec{r}^{(x, j)}}^{(x,j)}H_j]\nonumber\\
    & =2\Tr[\omega_{\vec{r}^{(x, j)}}^{(x,j)}(H_j)^2] \nonumber\\
    & \leq 2\Tr[(H_j)^2]  \nonumber\\
    & \leq 2^{n+2}. \label{eq:ineqtauH}
\end{align}
The first inequality follows from the fact that  $\omega_{\vec{r}^{(x, j)}}^{(x,j)}(H_j)^2\omega_{\vec{r}^{(x, j)}}^{(x,j)}$ is a positive semidefinite operator, so that  $\Tr[(\omega_{\vec{r}^{(x, j)}}^{(x,j)})^2(H_j)^2]\geq 0 $. The second inequality follows by noting that  $\omega_{\vec{r}^{(x, j)}}^{(x,j)}\leq \id$ and $H_j\omega_{\vec{r}^{(x, j)}}^{(x,j)}H_j\geq 0$. The third inequality follows from the fact that $\omega_{\vec{r}^{(x, j)}}^{(x,j)}\leq \id$, and that $(H_j)^2\geq 0$. The last inequality holds from the assumption that $\Tr[(H_j)^2]\leq 2^{n+1}$. Finally, by combining Eqs.~\eqref{eq:bppj} and~\eqref{eq:ineqtauH}, we get that
\begin{align}
    \Tr[(B^{(j)}_{\vec{r}^{(x, j)}\vec{r}^{(x, j)}})^2] &= (s_{\vec{r}^{(x, j)}}^{(x,j)})^2\Tr[([\omega_{\vec{r}^{(x, j)}}^{(x,j)}, H_j])^2]\nonumber\\
    & \leq 2^{n+2} (s_{\vec{r}^{(x, j)}}^{(x,j)})^2.\label{eq:bound-bppj}
\end{align}

Let us now evaluate the term  $(s_{\vec{r}^{(x, j)}}^{(x,j)})^2$. By invoking Lemma~\ref{lemma1}, we get
\begin{align}\label{eq:sxj}
    s_{\vec{r}^{(x, j)}}^{(x,j)} &= \Tr[V_{j+1}( \id_{\tin, j}\otimes \ket{\vec{r}^{(x, j)}}\bra{\vec{r}^{(x, j)}})V_{j+1}\ad V_{j+2}\ad\cdots V_n\ad \sigma_x^{\tout}V_n\cdots V_{j+2}]\\
    & =\sum_{\vec{p}'\vec{q}'} \Tr[V_{j+1} C_{\vec{q}'\vec{p}'}^{(x,j+1)} V_{j+1}\ad D^{(x,j+1)}_{\vec{p}'\vec{q}'}].
\end{align}
Here the summation is over all  bitstrings $\vec{p}'$ and $\vec{q}'$  of length $n+1$, and 
\begin{align}
    C_{\vec{q}'\vec{p}'}^{(x,j+1)} & = \Tr_{\overline{j+1}}[( \id_{\tin, j+1}\otimes \ket{\vec{p}'}\bra{\vec{q}'})( \id_{\tin, j}\otimes \ket{\vec{r}^{(x, j)}}\bra{\vec{r}^{(x, j)}} )], \\
     D^{(x,j+1)}_{\vec{p}'\vec{q}'} & = \Tr_{\overline{j+1}}[( \id_{\tin, j+1}\otimes \ket{\vec{q}'}\bra{\vec{p}'})V_{j+2}\ad \cdots V_n\ad \sigma_x^{\tout}V_n\cdots V_{j+2}]\,,
\end{align}
where $\Tr_{\overline{j+1}}$ indicates the trace over all qubits in the output layer except qubit $j+1$.

Then from arguments similar to those used to deriving \eqref{eq:pq-n} and~\eqref{eq:pq-1}, we find 
\begin{align}
    q'_j &= p'_j=z^x_j, \\
    q'_k & = p'_k = r^{(x, j)}_k, \forall k \in \{1,2, \dots, j-1, j+2,\dots, n\}.
\end{align}
We now point at a recursive relation. Let \begin{align}
\vec{r}^{(x,j+1)}\equiv (r^{(x, j)}_1, \dots, r^{(x, j)}_{j-1},z^x_j, r^{(x, j)}_{j+2}, \dots r^{(x, j)}_{n}).
\end{align}
Then $s_{\vec{r}^{(x, j)}}^{(x,j)}$ further simplifies  to 
\begin{align}
    s_{\vec{r}^{(x, j)}}^{(x,j)} & =  \Tr\left[V_{j+1}(\id_{\tin}\otimes \ket{r^{(x,j)}_{j+1}}\bra{r^{(x,j)}_{j+1}})V_{j+1}\ad  \Tr_{\overline{j+1}}\left( \id_{\tin, j+1}\otimes (\ket{\vec{r}^{(x,j+1)}}\bra{\vec{r}^{(x,j+1)}}) V_{j+2}\ad\cdots V_n\ad \sigma_x^{\tout}V_n \cdots V_{j+2} \right)  \right]\\
    & = s_{\vec{r}^{(x,j+1)}}^{(x,j+1)}\Tr\left[V_{j+1}(\id_{\tin}\otimes \ket{r^{(x,j)}_{j+1}}\bra{r^{(x,j)}_{j+1}})V_{j+1}\ad  \omega_{\vec{r}^{(x,j+1)}}^{(x,j+1)} \right].\label{eq;recursion}
\end{align}
Here we defined
\begin{align}
    s_{\vec{r}^{(x, j+1)}}^{(x,j+1)} &= \Tr\left[( \id_{\tin, j+1}\otimes \ket{\vec{r}^{(x, j+1)}}\bra{\vec{r}^{(x, j+1)}} ) V_{j+2}\ad\cdots V_n\ad \sigma_x^{\tout}V_n \cdots V_{j+2} \right],\\
    \omega_{\vec{r}^{(x, j+1)}}^{(x,j+1)} & = \frac{1}{s_{\vec{r}^{(x, j+1)}}^{(x,j+1)}}\Tr_{\overline{j+1}}\left[( \id_{\tin, j+1}\otimes \ket{\vec{r}^{(x, j+1)}}\bra{\vec{r}^{(x, j+1)}} ) V_{j+2}\ad\cdots V_n\ad \sigma_x^{\tout}V_n \cdots V_{j+2} \right].  
\end{align}

An upper bound on the average of $(s_{\vec{r}^{(x,j)}}^{(x, j)})^2$ over $V_{j+1}$ can be obtained as follows, provided that $V_{j+1}$ forms a $2$-design:
\begin{align}
    &\langle (s_{\vec{r}^{(x,j)}}^{(x,j)})^2 \rangle_{V_{j+1}} \nonumber \\
    & = \int d\mu(V_{j+1}) (s_{\vec{r}^{(x,j)}}^{(x,j)})^2\\
    & = (s_{\vec{r}^{(x, j+1)}}^{(x,j+1)})^2 \int d\mu(V_{j+1}) \Tr\left[V_{j+1}(\id_{\tin}\otimes \ket{r^{(x,j)}_{j+1}}\bra{r^{(x,j)}_{j+1}})V_{j+1}\ad  \omega_{\vec{r}^{(x, j+1)}}^{(x,j+1)} \right]\Tr\left[V_{j+1}(\id_{\tin}\otimes \ket{r^{(x,j)}_{j+1}}\bra{r^{(x,j)}_{j+1}})V_{j+1}\ad  \omega_{\vec{r}^{(x, j+1)}}^{(x,j+1)} \right]\nonumber\\
    & = \frac{(s_{\vec{r}^{(x, j+1)}}^{(x,j+1)})^2}{2^{2(n+1)}-1}\left(2^{2n}+2^{n}\Tr[(\omega_{\vec{r}^{(x, j+1)}}^{(x,j+1)})^2]  - \frac{1}{2^{n+1}}(2^{n}+ 2^{2n}\Tr[(\omega_{\vec{r}^{(x, j+1)}}^{(x,j+1)})^2])  \right)\\
    & \leq \frac{2^n (2^n+1/2)(s_{\vec{r}^{(x, j+1)}}^{(x,j+1)})^2}{2^{2(n+1)}-1},\label{eq:sxj-avg}
\end{align}
where we employed~\eqref{eq:lemma3} and used the fact that $\Tr[(\omega_{\vec{r}^{(x, j+1)}}^{(x,j+1)})^2]\leq 1$ as $\omega_{\vec{r}^{(x, j+1)}}^{(x,j+1)}$ is a quantum state. 

Here we remark that from   $s_{\vec{r}^{(x, j+1)}}^{(x,j+1)}$ we can always define an operator $s_{\vec{r}^{(x, j+2)}}^{(x,j+2)}$ according to Eqs.~\eqref{eq:sxj}--\eqref{eq;recursion}. Moreover, by using the assumption that all randomly initialized perceptrons form $2$-designs, we can recursively average over $V_{j+2}, \dots, V_n$. Therefore, from \eqref{eq:bound-bppj}, we get  
\begin{align}
    \langle\Tr[(B^{(x,j)}_{\vec{r}^{(x, j)}\vec{r}^{(x, j)}})^2] \rangle_{V_{j+1},\ldots V_{n}} & \leq (s^{(x, n)}_{\vec{r}^{(x,n)}})^22^{n+1}\left(\frac{2^n (2^n+1/2)}{2^{2(n+1)}-1}\right)^{n-j}\\
& = 2^{3n+1}\left(\frac{2^n (2^n+1/2)}{2^{2(n+1)}-1}\right)^{n-j},    \label{eq:bj-avg}
\end{align}
where we used that fact that $s^{(x, n)}_{\vec{r}^{(x,n)}} = \Tr[\sigma_x^\tout]= 2^n$. 

We now compute the average of $\Tr[ (A^{(x,j)}_{\vec{r}^{(x, j)}\vec{r}^{(x, j)}})^2]$ over $V_1, \dots, V_{j-1}$. By following a similar procedure to the one previously employed, and from \eqref{eq:Aqp-j}, we have
\begin{align}
    A^{(x,j)}_{\vec{r}^{(x, j)}\vec{r}^{(x, j)}} & =\Tr_{\overline{j}}\left[( \id_{\tin, j}\otimes \ket{\vec{r}^{(x, j)}}\bra{\vec{r}^{(x, j)}} )(V_{j-1}\dots V_1)\sigma_x^{\text{in}}(V_1\ad\dots V_{j-1}\ad)\right]\\
    & = q_{\vec{r}^{(x, j)}}^{(x,j)} \varphi_{\vec{r}^{(x, j)}}^{(x,j)},\label{eq:appj}
\end{align}
where 
\begin{align}
    q_{\vec{r}^{(x, j)}}^{(x,j)} & =\Tr\left[( \id_{\tin, j}\otimes \ket{\vec{r}^{(x, j)}}\bra{\vec{r}^{(x, j)}} )(V_{j-1}\dots V_1)\sigma_x^{\text{in}}(V_1\ad\dots V_{j-1}\ad)\right] ,\\
    \varphi_{\vec{r}^{(x, j)}}^{(x,j)} & =\frac{1}{q_{\vec{r}^{(x, j)}}^{(x,j)}}\Tr_{\overline{j}}\left[( \id_{\tin, j}\otimes \ket{\vec{r}^{(x, j)}}\bra{\vec{r}^{(x, j)}} )(V_{j-1}\dots V_1)\sigma_x^{\text{in}}(V_1\ad\dots V_{j-1}\ad)\right] . 
\end{align}
Moreover, if $V_{j-1}$ forms a $2$-design, we can compute the expectation value of  $\Tr[ (A^{(x,j) }_{\vec{r}^{(x, j)}\vec{r}^{(x, j)}})^2]$ with respect $V_{j-1}$ as 
\begin{align}
\int d\mu(V_{j-1})\Tr[ (A^{(j) }_{\vec{r}^{(x, j)}\vec{r}^{(x, j)}})^2] 
&= \int d\mu(V_{j-1}) (q_{\vec{r}^{(x, j)}}^{(x,j)})^2 \Tr[(\varphi_{\vec{r}^{(x, j)}}^{(x,j)})^2]\\
& \leq \int d\mu(V_{j-1}) (q_{\vec{r}^{(x, j)}}^{(x,j)})^2\\
& = \int d\mu(V_{j-1}) \left(\Tr\left[V_{j-1}\ad( \id_{\tin, j}\otimes \ket{\vec{r}^{(x, j)}}\bra{\vec{r}^{(x, j)}} )V_{j-1}(V_{j-2}\dots V_1)\sigma_x^{\text{in}}(V_1\ad\dots V_{j-2}\ad)\right]\right)^2\nonumber \\
& \leq \frac{2^n(2^n+1/2)(q_{
\hat{\vec{r}}^{(x,j-1)}}^{(x,j-1)})^2}{2^{2(n+1)}-1},
\end{align}
where we used arguments similar to those used in deriving \eqref{eq:sxj}--\eqref{eq:sxj-avg}. Here, 
\begin{align}
    q_{\hat{\vec{r}}^{(x,j-1)}}^{(x, j-1)} &=\Tr\left[( \id_{\tin, j-1}\otimes\ket{\hat{\vec{r}}^{(x,j-1)}}\bra{\hat{\vec{r}}^{(x,j-1)}} )(V_{j-2}\dots V_1)\sigma_x^{\text{in}}(V_1\ad\dots V_{j-2}\ad)\right],  \\
    \hat{\vec{r}}^{(x,j-1)} & = (r^{(x,j)}_1, r^{(x,j)}_2, \dots, r^{(x,j)}_{j-2},0 , r^{(x,j)}_{j+1}, \dots, r^{(x,j)}_n),
\end{align}
where $ \hat{\vec{r}}^{(x,j-1)}$ denotes a bitstring of length $n-1$, and where $j-1$ in the superscript implies that $\ket{ \hat{\vec{r}}^{(x,j-1)}}$ is a state on all qubits in the output layer, except the $(j-1)$-th qubit. Then, since all randomly initialized perceptrons form $2$-designs, we can  recursively compute the average over $V_{j-2}, \dots V_1$. We get 
\begin{align}
\langle(\Tr[ (A^{(j) }_{\vec{r}^{(x, j)}\vec{r}^{(x, j)}})^2])^2 \rangle_{V_{1},\ldots V_{j-1}}  &\leq
   (q_{\hat{\vec{r}}^{(x,1)}}^{(x, 1)})^2\left(\frac{2^n (2^n+1/2) }{2^{2(n+1)}-1}\right)^{j-1}\\
   & = \left(\frac{2^n (2^n+1/2) }{2^{2(n+1)}-1}\right)^{j-1}, \label{eq:aj-avg}
\end{align}
where we used the fact that $q_{\hat{\vec{r}}^{(x,1)}}^{(x, 1)} = \Tr[\sigma_x^\tin]= 1$. 
Then from \eqref{eq:gxj-hxj-vj}, \eqref{eq:bj-avg}, and \eqref{eq:aj-avg}, it follows that 
\begin{align}
\langle (\Tr[G^x_j H_j])^2 \rangle &\leq   \frac{1}{2^{2(n+1)}-1}\langle\Tr[ (A^{(x,j)}_{\vec{r}^{(x, j)}\vec{r}^{(x, j)}})^2]\Tr[(B^{(x,j)}_{\vec{r}^{(x, j)}\vec{r}^{(x, j)}})^2]\rangle\\
& \leq \frac{2^{3n+2}}{{2^{2(n+1)}-1}} \left(\frac{2^n (2^n+1/2) }{2^{2(n+1)}-1}\right)^{n-1}\\
& \leq f(n) \in \mathcal{O}(1/2^n). 
\end{align}

\subsubsection{Fixed $j$ and different $x$}\label{sec:fixedjdiffx-prev}

We now establish an upper bound on the cross terms with equal $j$ but different $x$, i.e., on terms of the form: $\langle \Tr[G^x_j H_j]\Tr[G^{x'}_j H_j]\rangle$. Following \eqref{eq:gxj-hxj}--\eqref{eq:gxj-hxj-vj}, we find that 
\begin{align}
\langle \Tr[G^x_j H_j]\Tr[G^{x'}_j H_j]\rangle_{V_j} &\leq \frac{1}{2^{2(n+1)}-1}\left\vert \left(\Delta(A^{(x,x',j)})_{\vec{r}^{(x, j)}\vec{r}^{(x, j)}}^{\vec{r}^{(x', j)}\vec{r}^{(x', j)}}\right)\right\vert \left\vert\Tr[(B^{(x,j)}_{\vec{r}^{(x, j)}\vec{r}^{(x, j)}}B^{(x',j)}_{\vec{r}^{(x', j)}\vec{r}^{(x', j)}})]\right\vert. 
\end{align}
Then, from   \eqref{eq:bppj}--\eqref{eq:bound-bppj}, and by invoking the Cauchy-Schwarz inequality, we find that 
\begin{align}
\left|\Tr[(B^{(x,j)}_{\vec{r}^{(x, j)}\vec{r}^{(x, j)}}B^{(x',j)}_{\vec{r}^{(x', j)}\vec{r}^{(x', j)}})]\right| & \leq  \sqrt{\Tr[(B^{(x,j)}_{\vec{r}^{(x, j)}\vec{r}^{(x, j)}})^2]}\sqrt{\Tr[(B^{(x',j)}_{\vec{r}^{(x', j)}\vec{r}^{(x', j)}})^2]}\\
& \leq 2^{n+2} s_{\vec{r}^{(x, j)}}^{(x,j)}s^{(x',j)}_{\vec{r}^{(x', j)}}. 
\end{align}

From \eqref{eq:lemma3}, we compute the average of $s_{\vec{r}^{(x, j)}}^{(x,j)}s^{(x',j)}_{\vec{r}^{(x', j)}}$ with respect to $V_{j+1}$ as follows:
\begin{align}
     &\langle s_{\vec{r}^{(x, j)}}^{(x,j)}s^{(x',j)}_{\vec{r}^{(x', j)}} \rangle_{V_{j+1}} \nonumber \\ 
    & = \int d\mu(V_{j+1}) s_{\vec{r}^{(x, j)}}^{(x,j)}s^{(x',j)}_{\vec{r}^{(x', j)}} \label{eq:sx-sx'-vj-init}  \\
    & = s_{\vec{r}^{(x, j+1)}}^{(x,j+1)}s^{(x',j+1)}_{\vec{r}^{(x', j+1)}} \int d\mu(V_{j+1}) \Tr\left[V_{j+1}(\id_{\tin}\otimes \ket{r^{(x,j)}_{j+1}}\bra{r^{(x,j)}_{j+1}})V_{j+1}\ad  \omega_{\vec{r}^{(x,j+1)}}^{(x,j+1)} \right]\Tr\left[V_{j+1}(\id_{\tin}\otimes \ket{r^{(x',j)}_{j+1}}\bra{r^{(x',j)}_{j+1}})V_{j+1}\ad  \omega_{\vec{r}^{(x', j+1)}}^{(x',j+1)} \right]\nonumber\\
    & = \frac{s_{\vec{r}^{(x, j+1)}}^{(x,j+1)}s^{(x',j+1)}_{\vec{r}^{(x', j+1)}}}{2^{2(n+1)}-1}\left(2^{2n}+2^{n}\Tr[\omega_{\vec{r}^{(x,j+1)}}^{(x,j+1)}\omega_{\vec{r}^{(x',j+1)}}^{(x',j+1)}]  - \frac{1}{2^{n+1}}(2^{n}+ 2^{2n}\Tr[\omega_{\vec{r}^{(x,j+1)}}^{(x,j+1)}\omega_{\vec{r}^{(x',j+1)}}^{(x',j+1)}])  \right)\\
    & \leq \frac{2^n (2^n+1/2)s_{\vec{r}^{(x, j+1)}}^{(x,j+1)}s^{(x',j+1)}_{\vec{r}^{(x', j+1)}}}{2^{2(n+1)}-1}, \label{eq:sx-sx'-vj-final}
\end{align}
where we used the fact that $\Tr[\omega_{\vec{r}^{(x,j+1)}}^{(x,j+1)}\omega_{\vec{r}^{(x',j+1)}}^{(x',j+1)}]\leq 1$ for quantum states $\omega_{\vec{r}^{(x,j+1)}}^{(x,j+1)}$ and $\omega_{\vec{r}^{(x',j+1)}}^{(x',j+1)}$ defined as in~\eqref{eq:taux-j}. Then by recursively integrating over each perceptron we find 
\begin{align}
    \left\langle \left|\Tr[(B^{(x,j)}_{\vec{r}^{(x, j)}\vec{r}^{(x, j)}}B^{(x',j)}_{\vec{r}^{(x', j)}\vec{r}^{(x', j)}})]\right| \right\rangle_{V_{j+1},\ldots V_{n}} \leq 2^{3n+2}\left(\frac{2^n (2^n+1/2)}{2^{2(n+1)}-1}\right)^{n-j}\,.\label{eq:avg-bxj-bxprimej}
\end{align}

We now establish an upper bound on $\left| \left(\Delta(A^{(x,x',j)})_{\vec{r}^{(x, j)}\vec{r}^{(x, j)}}^{\vec{r}^{(x', j)}\vec{r}^{(x', j)}}\right)\right|$. Consider that 
\begin{align}
    \left| \left(\Delta(A^{(x,x',j)})_{\vec{r}^{(x, j)}\vec{r}^{(x, j)}}^{\vec{r}^{(x', j)}\vec{r}^{(x', j)}}\right)\right| &= \left| \Tr[ A^{(x,j)}_{\vec{r}^{(x, j)}\vec{r}^{(x, j)}} A^{(x',j)}_{\vec{r}^{(x', j)}\vec{r}^{(x', j)}}] - \frac{1}{2^{n+1}} \Tr[ A^{(x,j)}_{\vec{r}^{(x, j)}\vec{r}^{(x, j)}}]\Tr[ A^{(x',j)}_{\vec{r}^{(x', j)}\vec{r}^{(x', j)}}] \right|\\
    & = \left|q^{(x,j)}_{\vec{r}^{(x,j)}}q^{(x',j)}_{\vec{r}^{(x',j)}} \Tr[\varphi^{(x,j)}_{\vec{r}^{(x,j)}}\varphi^{(x',j)}_{\vec{r}^{(x',j)}}] - \frac{q^{(x,j)}_{\vec{r}^{(x,j)}}q^{(x',j)}_{\vec{r}^{(x',j)}}}{2^{n+1}} \Tr[\varphi^{(x,j)}_{\vec{r}^{(x,j)}}] \Tr[\varphi^{(x',j)}_{\vec{r}^{(x',j)}}] \right|\\
    & \leq q^{(x,j)}_{\vec{r}^{(x,j)}}q^{(x',j)}_{\vec{r}^{(x',j)}}\,. \label{eq:ax-xprime-j}
\end{align}
The second equality follows from \eqref{eq:appj}. The inequality follows from the fact that $\Tr[\varphi^{(x,j)}_{\vec{r}^{(x,j)}}\varphi^{(x',j)}_{\vec{r}^{(x',j)}}] \leq 1$ and $\Tr[\varphi^{(x,j)}_{\vec{r}^{(x,j)}}] =1$. Then by following arguments similar to \eqref{eq:sx-sx'-vj-init}--\eqref{eq:sx-sx'-vj-final} we find that 
\begin{align}
\left\langle \left|\left(\Delta(A^{(x,x',j)})_{\vec{r}^{(x, j)}\vec{r}^{(x, j)}}^{\vec{r}^{(x', j)}\vec{r}^{(x', j)}}\right)  \right| \right\rangle_{V_{1},\ldots V_{j-1}}  & \leq   \langle q^{(x,j)}_{\vec{r}^{(x,j)}}q^{(x',j)}_{\vec{r}^{(x',j)}} \rangle \\
&\leq q^{(x,1)}_{\vec{r}^{(x,1)}}q^{(x',1)}_{\vec{r}^{(x',1)}} \left(\frac{2^n (2^n+1/2)}{2^{2(n+1)}-1}\right)^{j-1}\\
& = \left(\frac{2^n (2^n+1/2)}{2^{2(n+1)}-1}\right)^{j-1}. \label{eq:aj-pp-bound-final}
\end{align}
Therefore, combining \eqref{eq:avg-bxj-bxprimej} and \eqref{eq:aj-pp-bound-final} leads to
\begin{align}\label{eq:gxprime-khk-thm1-sec-c}
    \langle  \Tr[G^x_j H_j]\Tr[G^{x'}_j H_j]\rangle &\leq \frac{2^{3n+2}}{2^{2(n+1)}-1}\left(\frac{2^n (2^n+1/2)}{2^{2(n+1)}-1}\right)^{n-1}\\
    & \leq f(n) \in \mathcal{O}(1/2^n)\,.
\end{align}

\subsubsection{Different $j$ and different $x$}\label{sec:diff-jx-thm1-gc}
In this subsection, we establish an upper bound on the average of cross terms of the form $\langle \Tr[G^x_j H_j]\Tr[G^{x'}_k H_k]\rangle$. Without loss of generality we assume that $j<k$. From~\eqref{eq:gxj-hxj-gen-form} we get
\begin{align}\label{eq:gxkhk}
        \Tr[G^{x'}_k H_k] = \Tr\left[V_j \cdots V_1 \sigma_x^{\tin}V_1\ad \cdots V_j\ad\cdot \left(V_{j+1}\ad \cdots V_k\ad   \left[V_{k+1}\ad\cdots V_n\ad \sigma_{x'}^{\tout}V_n \cdots V_{k+1}, H_k\right]V_{k}\cdots V_{j+1}\right) \right].
\end{align}

Then by following \eqref{eq:gxj-hxj}--\eqref{eq:gxj-hxj-vj}, we find that 
\begin{align}
\langle \Tr[G^x_j H_j]\Tr[G^{x'}_k H_k]\rangle_{V_j} 
 \leq \frac{1}{2^{2(n+1)}-1} \left(\Delta(A^{(x,x',j)})_{\vec{r}^{(x, j)}\vec{r}^{(x, j)}}^{\vec{r}^{(x', j)}\vec{r}^{(x', j)}}\right) \left(\Tr[(B^{(x,j)}_{\vec{r}^{(x, j)}\vec{r}^{(x, j)}}M^{(x',j)}_{\vec{r}^{(x', j)}\vec{r}^{(x', j)}})]\right),
\end{align}
where 
\begin{align}
   M^{(x',j)}_{\vec{r}^{(x', j)}\vec{r}^{(x', j)}} &= \Tr_{\overline{j}}[( \id_{\tin, j} \otimes \ket{\vec{r}^{(x', j)}}\bra{\vec{r}^{(x', j)}} )M^{(x',j)}],\label{eq:mrxj}\\
   M^{(x',j)} &=V_{j+1}\ad \cdots V_k\ad   \left[V_{k+1}\ad\cdots V_n\ad \sigma_{x'}^{\tout}V_n \cdots V_{k+1}, H_k\right]V_{k}\cdots V_{j+1}. \label{eq:mxj}
\end{align}
Moreover, from \eqref{eq:Bpq-j} it follows that 
\begin{align}\label{eq:brxj-sec-c}
    B^{(x,j)}_{\vec{r}^{(x, j)}\vec{r}^{(x, j)}} = \Tr_{\overline{j}}[( \id_{\tin, j}\otimes \ket{\vec{r}^{(x, j)}}\bra{\vec{r}^{(x, j)}} )[V_{j+1}\ad\ldots V_n\ad\sigma_x^{\text{out}} V_n \ldots V_{j+1},H_j]].
\end{align}

We now argue using Lemma \ref{lemma2} that the average of $\Tr[(B^{(x,j)}_{\vec{r}^{(x, j)}\vec{r}^{(x, j)}}M^{(x',j)}_{\vec{r}^{(x', j)}\vec{r}^{(x', j)}})]$ is zero. Here, $H_j$ in \eqref{eq:brxj-sec-c} and $H_k$ in \eqref{eq:mxj} correspond to $H$ and $K$ in Lemma \ref{lemma2}, respectively.
Moreover, $( \id_{\tin, j} \otimes \ket{\vec{r}^{(x, j)}}\bra{\vec{r}^{(x, j)}} )$ and $( \id_{\tin, j} \otimes \ket{\vec{r}^{(x', j)}}\bra{\vec{r}^{(x', j)}} )$ correspond to $P$ and $P'$, respectively. Furthermore, $V_{j+1}\ad$ corresponds to $V$, while $V_{j+2}\ad\ldots V_k\ad$ corresponds to $U$. Finally, $V_{k+1}\ad\ldots V_n\ad \sigma^{\tout}_{x} V_n \ldots V_{k+1}$ and   $V_{k+1}\ad\ldots V_n\ad \sigma^{\tout}_{x'} V_n \ldots V_{k+1}$  correspond to $S$ and $S'$, respectively in Lemma \ref{lemma2}. Hence from Lemma \ref{lemma2} it follows that 
\begin{align}
    \left\langle\left(\Tr[(B^{(x,j)}_{\vec{r}^{(x, j)}\vec{r}^{(x, j)}}M^{(x',j)}_{\vec{r}^{(x', j)}\vec{r}^{(x', j)}})]\right) \right\rangle_{V_{j+1}} = 0,
\end{align}
which implies  
\begin{align}
    \langle \Tr[G^x_j H_j]\Tr[G^{x'}_k H_k]\rangle \leq 0. 
\end{align}

Therefore, by combining results from Sections \ref{sec:fixed-jx-thm1-gc}--\ref{sec:diff-jx-thm1-gc}, it follows that
\begin{align}
    \langle (\partial_s C)^2 \rangle \leq f(n), 
\end{align}
 with $f(n)$ as in \eqref{eq:theo1eq}. 
 
 \subsection{Local Cost}
 We now estimate the scaling of the variance of the partial derivative of the local cost function. We first note that for a local cost function, \eqref{eq:partialCsum} gets transformed as follows: 
 \begin{align}\label{eq:costdoublesum}
    \partial_s C^L = \frac{i}{N}\sum_{x=1}^N\left[\frac{1}{n}\sum_{i=1}^n\bigg(\sum_{j=1}^n \Tr[G^x_{(i,j)} H_j]\bigg)\right],
\end{align}
where 
\begin{align}
\sigma_{(x,i)}^{\tout} &= \id_{\tin, \overline{i}}\otimes \dya{z^x_{i}},\label{eq:sigma-xi-out} \\
G^x_{(i,j)} &= [V_j \cdots V_1 \sigma_x^{\tin} V_1\ad \cdots V_j\ad, V_{j+1}\ad \cdots V_n\ad \sigma_{(x,i)}^{\tout}V_n \cdots V_{j+1}]. 
\end{align}

Moreover, from the cyclicity of trace, each term $\Tr[G^x_{(i,j)} H_j]$ can always be expressed as follows:
\begin{align}
    \Tr[G^x_{(i,j)} H_j] = \Tr\left[V_j \cdots V_1 \sigma_x^{\tin}V_1\ad \cdots V_j\ad \left[V_{j+1}\ad\cdots V_n\ad \sigma_{(x,i)}^{\tout}V_n \cdots V_{j+1}, H_j\right]\right].
\end{align}

We now provide a proof of Theorem \ref{theo1SM} for local cost functions in the following subsections. Similarly to the proof for the global cost in Section~\ref{sec:gc-thm1}, here we individually consider all different cases than can arise from the tripple summation in~\eqref{eq:costdoublesum}.

\subsubsection{Fixed j, fixed i, and fixed x}\label{sec:lc-sec-a-thm1}

Let us consider first the case $i <j$.  From \eqref{eq:Bpq-j} it follows that 
\begin{align}\label{eq:bxj-zero}
    B^{(x,i,j)} = [\id_{\tin, \overline{i}}\otimes \dya{z^x_i}, \id_{\overline{j}}\otimes H_j] = 0,
\end{align}
which can be combined with \eqref{eq:gxj-hxj} to imply that $\Tr[G^x_{i,j}H_j]=0$.

Let us consider the case when $i=j$. From \eqref{eq:Bpq-j} it follows that 
\begin{align}
B^{(x, i, j)}_{\vec{r}^{(x,j)}\vec{r}^{(x,j)}} = [\id_{\tin} \otimes \dya{z^x_j}, H_j], 
\end{align}
which further implies that 
\begin{align}\label{eq:bxj-id-zxj-hj}
    \Tr[(B^{(x,i, j)}_{\vec{r}^{(x,j)}\vec{r}^{(x,j)}})^2] \leq 2^{n+2}, 
\end{align}
where we used arguments similar to \eqref{eq:ineqtauH}. Then, combining the previous inequality with \eqref{eq:gxj-hxj-vj} and \eqref{eq:aj-avg}, we get 
\begin{align}
    \langle (\Tr[G^x_{i,j}, H_j])^2\rangle &   \leq \frac{2^{n+2}}{{2^{2(n+1)}-1}} \left(\frac{2^n (2^n+1/2) }{2^{2(n+1)}-1}\right)^{j-1}\\
& \leq f(n) \in \mathcal{O}(1/2^{n}). 
\end{align}

We now consider the case when $i>j$. By following \eqref{eq:Bpq-j} again, we get 
\begin{align}
    B^{(x,i,j)} &= [V_{j+1}\ad\ldots V_i\ad (\id_{\tin, j+1, \dots, i-1}\otimes \dya{z^x_i})V_i\ldots V_{j+1}\otimes \id_{j}, H_j]\otimes \id_{1, \dots,j-1,i+1,i+2, \dots, n},\\
    B^{(x, i,j)}_{\vec{r}^{(x,j)}} &= [\upsilon_{j}, H_j], 
\end{align}
where 
\begin{align}
    \upsilon_{j} &= \Tr_{\overline{\tin, j}}[(\id_{\tin,j}\otimes \dya{\vec{0}}_{j+1, \dots, i})\Upsilon],\\
    \Upsilon & = (V_{j+1}\ad\ldots V_i\ad (\id_{\tin,j, j+1, \dots, i-1}\otimes \dya{z^x_i})V_i\ldots V_{j+1}),
\end{align}
with $\upsilon_j$ acting on all input qubits and on the $j$-th output qubit, and where the subscript in  $\dya{\vec{0}}_{j+1, \dots, i}$ indicates that the projector acts on qubits $j+1, \dots, i$ in the output layer.

We note that $\Upsilon \leq \id_{\tin, j+1,\dots, i}$ since $(\id_{\tin,j, j+1, \dots, i-1}\otimes \dya{z^x_i}) \leq \id_{\tin,j, \dots i}$, and since unitary transformations do not change the spectrum of an operator. Then, the following inequality holds:
\begin{align}
  \Tr_{\overline{\tin, j}}[(\id_{\tin,j}\otimes \dya{\vec{0}}_{j+1, \dots, i})(\id_{\tin, j, \dots, i} - \Upsilon)(\id_{\tin,j}\otimes \dya{\vec{0}}_{j+1, \dots, i})] \geq 0,
\end{align}
which implies that $\upsilon_{j} \leq \id_{\tin, j}$. 
Using arguments similar to the ones used in deriving \eqref{eq:ineqtauH-init}--\eqref{eq:ineqtauH}, we find 
\begin{align}\label{eq:bxij-i-greater-j-thm1}
    \Tr[(B^{(x,i, j)}_{\vec{r}^{(x,j)}\vec{r}^{(x,j)}})^2] \leq 2^{n+2}. 
\end{align}
Again by combining this with \eqref{eq:gxj-hxj-vj} and \eqref{eq:aj-avg}, we get 
\begin{align}
    \langle (\Tr[G^x_{(i,j)}, H_j])^2\rangle &   \leq \frac{2^{n+2}}{{2^{2(n+1)}-1}} \left(\frac{2^n (2^n+1/2) }{2^{2(n+1)}-1}\right)^{j-1}\\
& \leq f(n) \in \mathcal{O}(1/2^{n}). 
\end{align}

\subsubsection{Fixed j, different i, and different x}

In this subsection, we establish a bound on $\langle (\Tr[G^x_{(i,j)}H_j]\Tr[G^{x'}_{(i',j)}H_j])\rangle$. We first note that if either of $i$ or $i'$ are smaller  than $j$, then from \eqref{eq:bxj-zero} we have 
\begin{align}
    \langle (\Tr[G^x_{(i,j)}H_j]\Tr[G^{x'}_{(i',j)}H_j)]\rangle = 0. 
\end{align}

Let $i = j$ and $i'>j$. By means of the Cauchy-Schwarz inequality and invoking both \eqref{eq:bxj-id-zxj-hj} and \eqref{eq:bxij-i-greater-j-thm1}, we find 
\begin{align}
\vert \Tr[B^{(x,i, j)}_{\vec{r}^{(x,j)}\vec{r}^{(x,j)}} B^{(x',i', j)}_{\vec{r}^{(x',j)}\vec{r}^{(x',j)}}]\vert &\leq \sqrt{\Tr[(B^{(x,i, j)}_{\vec{r}^{(x,j)}\vec{r}^{(x,j)}})^2]}\sqrt{(\Tr[B^{(x',i', j)}_{\vec{r}^{(x',j)}\vec{r}^{(x',j)}}])^2}\\
& =  2^{n+2}. 
\end{align}

Again by combining this with  \eqref{eq:aj-pp-bound-final}, we get 
\begin{align}
    \langle \Tr[G^x_{(i,j)}H_j] \Tr[G^{x'}_{(i',j)}H_j]\rangle 
& \leq f(n) \in \mathcal{O}(1/2^{n}). 
\end{align}

\subsubsection{Different j, different i, and different x}\label{sec:lc-sec-c-thm1}
In this subsection, we find an upper bound on $   \langle (\Tr[G^x_{(i,j)}H_j]\Tr[G^{x'}_{(i',k)}H_{k})]\rangle$. We first note that $\Tr[G^{x'}_{i',k} H_k]$ can be expressed as \eqref{eq:gxprime-khk-thm1-sec-c}, where $\sigma^{out}_{x'}$ is replaced by $    \sigma_{(x',i')}^{\tout}$ as in \eqref{eq:sigma-xi-out}. Without loss of generality we assume that $j<k$. 
Since the proof in Section \ref{sec:diff-jx-thm1-gc} holds for any form of $\sigma^{\tout}_x$ and $\sigma_{x'}^{\tout}$, it follows that 
\begin{align}
     \langle (\Tr[G^x_{(i,j)}H_j]\Tr[G^{x'}_{(i',k)}H_{k})]\rangle_{V_{j+1}} = 0,
\end{align}
and therefore, by combining results from Sections \ref{sec:lc-sec-a-thm1}--\ref{sec:lc-sec-c-thm1}, we find that
\begin{align}
\langle (\partial_s C^L)^2 \rangle \leq f(n),     
\end{align}
 with $f(n)$ as in \eqref{eq:theo1eq}.

\end{proof}

\setcounter{theorem}{0}

\section{Proof of Theorem \ref{theo2SM}}\label{sec:prooftheo2}

In this section, we provide a proof of Theorem \ref{theo2SM}, which we recall for covinience.  

\begin{theorem}\label{theo2SM}
Consider a DQNN with deep global perceptrons parametrized as in~\eqref{eq:parametrizationSM}, such that  $A_1^1$, $B_1^1$ in~\eqref{eq:AandB1}--~\eqref{eq:AandB2}, and $V_j^1$ ($\forall j$) form independent $2$-designs over $n+1$ qubits. Then, the variance of a partial derivative of the  cost function with respect to $\theta^\nu$ is upper bounded as
\begin{equation}
\Var[\partial_\nu C^G]\leq g(n), \quad \text{with} \quad g(n)\in\OC\left(1/2^{2n}\right)\,,
\end{equation}
if $O_x$ is the global operator of~\eqref{eq:globalopSM}, and upper bounded as 
\begin{equation}
\Var[\partial_\nu C^L]\leq h(n), \quad \text{with} \quad h(n)\in\OC\left(1/2^{n}\right)\,,
\end{equation}
when $O_x$ is the local operator in Eq.~\eqref{eq:localopSM}. 
\end{theorem}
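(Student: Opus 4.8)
The plan is to mirror the architecture of the proof of Theorem~\ref{theo1SM}, which I take as given, while exploiting that differentiating a single angle $\theta^\nu$ inside the first perceptron $V_1^1$ yields a structurally simpler object than the fully summed gradient of the matrix-multiplication scheme. Since $\langle\partial_\nu C\rangle=0$ has already been established under the stated $2$-design hypotheses, it suffices to control $\langle(\partial_\nu C)^2\rangle$. Writing $\partial_\nu C=\tfrac{i}{2N}\sum_x T_x$ with
\begin{equation}
T_x=\Tr\big[A_1^1\,\sigma_x^{\tin}\,(A_1^1)\ad\,[\id_{\overline{1}}^{\overline{1}}\otimes\Gamma,\,(B_1^1)\ad\,\tilde{O}_x\,B_1^1]\big],
\end{equation}
where $\tilde\rho_x^{\tin}=\sigma_x^{\tin}$ and $\tilde{O}_x=(V_2^1)\ad\cdots(V_n^1)\ad(\id_{\tin}\otimes O_x)V_n^1\cdots V_2^1$, I would expand $\langle(\partial_\nu C)^2\rangle=-\tfrac{1}{4N^2}\sum_{x,x'}\langle T_xT_{x'}\rangle$ and bound the diagonal ($x=x'$) and off-diagonal ($x\neq x'$) contributions separately. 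Crucially, unlike Theorem~\ref{theo1SM} there is no sum over the perceptron index and hence no ``different~$j$'' case to annihilate through Lemma~\ref{lemma2}: the differentiated perceptron is pinned to $V_1^1$, so only the two training-index sums survive and the combinatorics is lighter.

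The engine of the proof is a sequence of Haar averages performed from the junction outward. First I would integrate over the two independent halves $A_1^1$ and $B_1^1$, each a $2$-design on the $n+1$ active qubits (all input qubits together with output qubit~$1$), acting as the identity on the $n-1$ spectator output qubits. As in Theorem~\ref{theo1SM}, Lemma~\ref{lemma1} expands each conjugation as a sum over spectator bitstrings that collapses to the single surviving string $\vec{r}^{(x,1)}$ fixed by the fiducial and boundary constraints, after which Lemma~\ref{lemma3} evaluates the resulting second moments. The $A_1^1$ average scrambles the fixed input state, while the $B_1^1$ average supplies an \emph{additional} scrambling of the back-propagated observable inside the commutator with $\Gamma$, an average that is absent from the $j=1$ term of Theorem~\ref{theo1SM}, where $V_1^1$ touches only the input side. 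The commutator itself is tamed exactly as in~\eqref{eq:ineqtauH-init}--\eqref{eq:ineqtauH}, using $\Tr[\Gamma^2]\leq 2^{n+1}$ and the subnormalization of the scrambled states, and the remaining perceptrons $V_2^1,\dots,V_n^1$ are then integrated recursively by the same $s_{\vec{r}^{(x,j)}}$-recursion of~\eqref{eq:sxj}--\eqref{eq:sxj-avg}, each step contributing a factor $2^n(2^n+\tfrac12)/(2^{2(n+1)}-1)\approx\tfrac14$. The off-diagonal terms would be reduced to this diagonal estimate by a Cauchy--Schwarz step in the spirit of~\eqref{eq:avg-bxj-bxprimej} and~\eqref{eq:ax-xprime-j}.

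The two cost functions differ only through the seed $O_x$ of $\tilde{O}_x$. For the local operator~\eqref{eq:localopSM} the surviving single-qubit projector carries a trace of order $2^{n-1}$, and the estimate reproduces the $\OC(1/2^n)$ scaling of Theorem~\ref{theo1SM}. For the global operator~\eqref{eq:globalopSM} the identity part drops out of every commutator, leaving only the rank-one projector $\dya{\vec{z}^x}$, whose unit trace is squeezed once more by the extra $B_1^1$ average, producing one further factor of $1/2^n$ and hence $\Var[\partial_\nu C^G]\in\OC(1/2^{2n})$. I expect the principal obstacle to be precisely the bookkeeping that certifies this additional $1/2^n$ for the global cost while denying it to the local one: one must follow, through both junction integrations and the entire $V_2^1,\dots,V_n^1$ recursion, exactly which subsystem each scrambled projector inhabits, and verify that the rank-one global projector is compressed by \emph{both} the $A_1^1$ and $B_1^1$ averages whereas the locally supported observable, living partly on qubits untouched by $B_1^1$, escapes one of them. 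Confirming that the two independent junction designs genuinely decouple, so that the $A_1^1$ and $B_1^1$ integrals factor analogously to~\eqref{eq:Abaverage}, is the remaining technical point.
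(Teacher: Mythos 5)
Your overall architecture --- junction Haar averages via Lemmas~\ref{lemma1} and~\ref{lemma3}, recursive integration over the remaining perceptrons, Cauchy--Schwarz for the cross terms in $x$ --- is indeed the paper's route, and the factorization of the $A_j^1$ and $B_j^1$ integrals that you flag as ``the remaining technical point'' goes through exactly as in~\eqref{eq:Abaverage}. The genuine gap is in the mechanism you propose for separating the $\OC(1/2^{2n})$ global scaling from the $\OC(1/2^{n})$ local scaling, which is where the proof actually lives. You claim the global rank-one projector is ``compressed by both the $A_1^1$ and $B_1^1$ averages'' while the local observable ``escapes one of them'' because it lives on qubits untouched by $B_1^1$. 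But $A_j^1$ and $B_j^1$ act on the \emph{same} $n+1$ qubits, and in $\Tr\big[A\,\tilde{\rho}\,A\ad[\Gamma,B\ad\tilde{O}B]\big]$ the input state is scrambled only by $A$ and the observable only by $B$: each operator sees exactly one junction average in either case, so no ``extra'' $1/2^n$ can originate there. In the paper both costs share the same junction prefactor $2^n\Tr[\Gamma_k^2]/(2^{2n+2}-1)^2$; the separation comes entirely from the quantities $\Delta(\Omega)$ and $\Delta(\Psi)$ of~\eqref{eq:del-psi-cx-thm2}--\eqref{eq:del-k-cx-thm2}. For the global cost the spectator-bitstring sum of Lemma~\ref{lemma1} collapses to the single string $\vec{r}^{(x,j)}$ and each recursive average over $V_{j+1},\dots,V_n$ and $V_1,\dots,V_{j-1}$ contributes a factor $\approx 1/4$, so that $\langle\Delta(\Omega)\rangle\langle\Delta(\Psi)\rangle=\OC\big(2^{2n}\,4^{-(n-1)}\big)=\OC(1)$ and the total is $\OC(1/2^{2n})$. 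For the local cost that collapse \emph{fails}: the bits of $\vec{p}$ on output qubits not constrained by $\dya{z^x_i}$ remain free, and one must (i) split into the cases $i<j$ (which vanishes identically), $i=j$, and $i>j$, and (ii) tame the free-bit sums with the normalization $\sum_{\vec{p}}q^{(x,i,j,\vec{p})}_{\hat{\vec{r}}^{(x,i,j,\vec{p})}\hat{\vec{r}}^{(x,i,j,\vec{p})}}=1$ of~\eqref{eq:qxijp-1-thm2-c2}, after which $\Tr[(\Omega^{(x,i,j)})^2]=\OC(2^n)$ because $\Omega$ is only a sub-identity operator on $n+1$ qubits rather than ($s$ times) a normalized state. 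Your proposal asserts the bitstring collapse unconditionally and contains none of this case analysis; the heuristic you use instead (observable trace $2^{n-1}$ versus $1$) predicts the right exponent without supplying the argument.

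A secondary point: by pinning the differentiated angle to $V_1^1$ you would prove only the $j=1$ instance. The paper's proof takes $\theta^\nu$ in an arbitrary $V_j^1$, which is precisely what forces the $i$-versus-$j$ case analysis above (for $j=1$ the identically-vanishing case $i<j$ never occurs), and the cross terms you must control are those in $x$ and, for the local cost, in the pair $(i,i')$ --- not merely the training index.
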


\begin{proof}
Here we first analyze the global cost function, and then we consider the case of local cost functions. Similarly to the proofs of the previous sections, we divide our derivations in several subsections consisting of different cases.

\subsection{Global cost}
Similar to Section \ref{sec:proof-theo1}, we consider a DQNN with $n$ input and $n$ output qubits, and with no hidden layer.  Then, we recall that we can compute the partial derivative of the cost function with respect to a parameter $\theta^\nu$ in a given $V_j$, i.e., $\partial C/\partial \theta^\nu\equiv \partial_\nu C$,  as
\begin{align}\label{eq:csumx}
    \partial_\nu C= \frac{i}{2N}\sum_{x=1}^{N}\partial_\nu C_x\,, \quad \text{with} \quad \partial_\nu C_x \equiv \Tr\Big[A_j^1\widetilde{\sigma}^\tin_x(A_j^1)\ad [\id_{\overline{j}}\otimes \Gamma_k,(B_j^1)\ad\widetilde{\sigma}^\tout_xB_j^1]\Big]\,,
\end{align}
with $B_j^1=\id_{\overline{j}}\otimes \prod_{\nu=1}^{k-1}R_k(\theta^k)W_\nu$,  $A_j^1=\id_{\overline{j}}\otimes \prod_{\nu=k}^{\eta_j}R_k(\theta^k)W_\nu$, and where 
\begin{align}
    \widetilde{\sigma}^\tin_x=V_{j-1}\ldots V_1 \sigma^\tin_x V_1\ad\ldots V_{j-1}\ad\,,\label{eq:tilde-sigma-xin-thm2}\\
    \widetilde{\sigma}^\tout_x=V_{j+1}\ad\ldots V_{n}\ad \sigma^\tout_x V_{n}\ldots V_{j+1}\,.\label{eq:tilde-sigma-xout-thm2}
\end{align}

\subsubsection{Fixed x}

We now establish an upper bound on a single term in~\eqref{eq:csumx}. That is, we consider a term  $\langle (\partial_\nu C_x)^2 \rangle$ with fixed $x$. From \eqref{eq:lemma3} it follows that 
\begin{align}\label{eq:delk-cx}
    \langle (\partial_\nu C_x)^2 \rangle_{A_j^1, B_j^1} =\frac{2^{n}\Tr[\Gamma_k^2]}{(2^{2n+2}-1)^2}\sum_{\substack{\vec{p}\vec{q}\\\vec{p}'\vec{q}'}}\Delta(\Omega^{(x,j)})_{\vec{q}\vec{p}}^{\vec{q}'\vec{p}'}\Delta(\Psi^{(x,j)})_{\vec{p}\vec{q}}^{\vec{p}'\vec{q}'}\,,
\end{align}    
where the summation runs over all bitstrings $\vec{p}$, $\vec{q}$, $\vec{p}'$, $\vec{q}'$ of length $n-1$. In addition, we defined 
\begin{align}
        \Delta(\Omega^{(x,j)})_{\vec{q}\vec{p}}^{\vec{q}'\vec{p}'} &= \Tr[ \Omega_{\vec{q}\vec{p}}^{(x,j)}\Omega_{\vec{q}'\vec{p}'}^{(x,j)}] -\frac{\Tr[ \Omega_{\vec{q}\vec{p}}^{(x,j)}]\Tr[\Omega_{\vec{q}'\vec{p}'}^{(x,j)}]}{2^{n+1}},\label{eq:del-psi-cx-thm2}\\
    \Delta(\Psi^{(x,j)})_{\vec{p}\vec{q}}^{\vec{p}'\vec{q}'} &=  \Tr[\Psi_{\vec{p}\vec{q}}^{(x,j)}\Psi_{\vec{p}'\vec{q}'}^{(x,j)}]-\frac{\Tr[\Psi_{\vec{p}\vec{q}}^{(x,j)}]\Tr[\Psi_{\vec{p}'\vec{q}'}^{(x,j)}]}{2^{n+1}}\,,\label{eq:del-k-cx-thm2}
\end{align}
 where  $\Omega_{\vec{q}\vec{p}}^{(x,j)}$ and $\Psi_{\vec{q}\vec{p}}^{(x,j)}$ are operators on $n+1$ qubits (all qubits in the input layer plus the $j$-th qubit in the output layer) defined as 

\begin{align}
    \Omega_{\vec{q}\vec{p}}^{(x,j)}&=\Tr_{\overline{j}}\left[(\id_{\tin,j}\otimes \ket{\vec{p}}\bra{\vec{q}}_{\overline{j}}) \widetilde{\sigma}^\tout_x \right]\label{eq:omega-qp}\\
    \Psi_{\vec{p}\vec{q}}^{(x,j)}&=\Tr_{\overline{j}}\left[(\id_{\tin,j}\otimes \ket{\vec{q}}\bra{\vec{p}}_{\overline{j}})\widetilde{\sigma}^\tin_x\right]\,,\label{eq:psi-pq},
\end{align} 
and where $\Tr_{\overline{j}}$ indicates the trace over  the subsystem of  all qubits in the output layer except for the $j$-th qubit. 

Similar to Section \ref{sec:proof-theo1},  we assume that the output state is in the computational basis  $\ket{\phi_x^{\tout}}\equiv \ket{\vec{z}^x}=\ket{z^x_1z^x_2\ldots z^x_n}$. Then  following arguments similar to the ones employed in Eq.~\eqref{eq:pq-n} and \eqref{eq:pq-1}, we find that
\begin{align}
    p_k &= q_k = 0, \forall k \in \{j+1, \dots, n\},\label{eq:pqk-1}\\
    p_k &= q_k = z_k^x, \forall k \in \{1, \dots, j-1\}\,, \label{eq:pqk-n}
\end{align}
which leads to a bitstring  $\vec{r}^{(x,j)}$ as in \eqref{eq:rxj}.

By recursively integrating over each randomly initialized perceptron as in  \eqref{eq:appj}--\eqref{eq:aj-avg}, we find that 
\begin{align}
\langle \Delta(\Psi^{(x,j)})_{\vec{r}^{(x,j)}\vec{r}^{(x,j)}}^{\vec{r}^{(x,j)}\vec{r}^{(x,j)}}\rangle_{V_1,\ldots,V_{j-1}} & \leq \langle \Tr[(\Psi^{(x,j)}_{\vec{r}^{(x,j)}\vec{r}^{(x,j)}})^2] \rangle \\
& \leq \left(\frac{2^n (2^n+1/2) }{2^{2(n+1)}-1}\right)^{j-1} \,.
\end{align}

On the other hand, using \eqref{eq:taux-j} and \eqref{eq:sx-j}, the following inequality holds:
\begin{align}
    \Delta(\Omega^{(x,j)})_{\vec{r}^{(x, j)}\vec{r}^{(x, j)}}^{\vec{r}^{(x, j)}\vec{r}^{(x, j)}}  &\leq (s_{\vec{r}^{(x, j)}}^{(x,j)})^2 \Tr[(\omega_{\vec{r}^{(x, j)}}^{(x,j)})^2]\\
    & \leq (s_{\vec{r}^{(x, j)}}^{(x,j)})^2\label{eq:new-m1}
\end{align}
where $s_{\vec{r}^{(x, j)}}^{(x,j)}$ is given by \eqref{eq:sx-j}, and where we again used the fact that $\Tr[(\omega_{\vec{r}^{(x, j)}}^{(x,j)})^2]\leq 1$ as $\omega_{\vec{r}^{(x, j)}}^{(x,j)}$ is a quantum state. Finally,  following \eqref{eq:sxj}--\eqref{eq:bj-avg}, we find  
\begin{align}
\langle  (s_{\vec{r}^{(x, j)}}^{(x,j)})^2\rangle_{V_{j+1},\ldots,V_n} \leq  2^{2n}\left(\frac{2^n (2^n+1/2)}{2^{2(n+1)}-1}\right)^{n-j}.\label{eq:srxj-thm2}
\end{align}

Therefore, combining \eqref{eq:psi-pq}, \eqref{eq:srxj-thm2}, and \eqref{eq:del-k-cx-thm2}, leads to
\begin{align}
     \langle (\partial_\nu C_x)^2 \rangle &\leq \frac{2^{3n}\Tr[\Gamma_k^2]}{(2^{2n+2}-1)^2}\left(\frac{2^n (2^n+1/2)}{2^{2(n+1)}-1}\right)^{n-1}\\
     & \leq \frac{2^{4n+1}}{(2^{2n+2}-1)^2}\left(\frac{2^n (2^n+1/2)}{2^{2(n+1)}-1}\right)^{n-1}\\
     & \leq g(n) \in \mathcal{O}(1/2^{2n}).
\end{align}

\subsubsection{Different x}
We now establish a bound on cross terms with different $x$, i.e., on terms of the form $\langle \partial_\nu C_x \partial_\nu C_{x'}  \rangle $.  From \eqref{eq:lemma3}, it follows that 
\begin{align}\label{eq:delk-cx-delk-cxprime-a1b1}
        \langle \partial_\nu C_x\partial_\nu C_{x'} \rangle_{A_j^1, B_j^1} =\frac{2^{n}\Tr[\Gamma_k^2]}{(2^{2n+2}-1)^2}\sum_{\substack{\vec{p}\vec{q}\\\vec{p}'\vec{q}'}}\Delta(\Omega^{(x,x',j)})_{\vec{q}\vec{p}}^{\vec{q}'\vec{p}'}\Delta(\Psi^{(x,x',j)})_{\vec{p}\vec{q}}^{\vec{p}'\vec{q}'},
\end{align}
where 
\begin{align}
        \Delta(\Omega^{(x,x',j)})_{\vec{q}\vec{p}}^{\vec{q}'\vec{p}'} &= \Tr[ \Omega_{\vec{q}\vec{p}}^{(x,j)}\Omega_{\vec{q}'\vec{p}'}^{(x',j)}] -\frac{\Tr[ \Omega_{\vec{q}\vec{p}}^{(x,j)}]\Tr[\Omega_{\vec{q}'\vec{p}'}^{(x',j)}]}{2^{n+1}},\label{eq:del-k-cx-thm2211}\\
    \Delta(\Psi^{(x,x,'j)})_{\vec{p}\vec{q}}^{\vec{p}'\vec{q}'} &=  \Tr[\Psi_{\vec{p}\vec{q}}^{(x,j)}\Psi_{\vec{p}'\vec{q}'}^{(x',j)}]-\frac{\Tr[\Psi_{\vec{p}\vec{q}}^{(x,j)}]\Tr[\Psi_{\vec{p}'\vec{q}'}^{(x',j)}]}{2^{n+1}}\,,\label{eq:del-k-cx-thm22}
\end{align}
and where $\Omega_{\vec{q}\vec{p}}^{(x,j)}$, and $\Psi_{\vec{q}\vec{p}}^{(x,j)}$ are defined according to Eqs.~\eqref{eq:omega-qp}, and ~\eqref{eq:psi-pq}, respectively.

From arguments similar to those used in deriving \eqref{eq:ax-xprime-j}--\eqref{eq:aj-pp-bound-final}, we find that
\begin{align}
 \left\langle    \Delta(\Psi^{(x,x',j)})_{\vec{r}^{(x,j)}\vec{r}^{(x,j)}}^{\vec{r}^{(x',j)}\vec{r}^{(x',j)}} \right\rangle_{V_1,\ldots,V_{j-1}}  \leq\left(\frac{2^n (2^n+1/2)}{2^{2(n+1)}-1}\right)^{j-1}\,.
\end{align}

Similarly, from arguments similar to \eqref{eq:bppj}--\eqref{eq:sx-j}, we obtain 
\begin{align}
\Omega^{(x, j)}_{\vec{r}^{(x, j)}\vec{r}^{(x,j)}} = s^{(x,j)}_{\vec{r}^{(x,j)}} \omega^{(x,j)}_{\vec{r}^{(x,j)}}, 
\end{align}
where $\omega^{(x,j)}_{\vec{r}^{(x,j)}}$ and $s^{(x,j)}_{\vec{r}^{(x,j)}}$  are given by \eqref{eq:sx-j} and \eqref{eq:taux-j}, respectively. Then, it is straightforward to show that 
\begin{align}
  \left\langle \Delta(\Omega^{(x,x',j)})_{\vec{r}^{(x, j)}\vec{r}^{(x, j)}}^{\vec{r}^{(x', j)}\vec{r}^{(x', j)}} \right\rangle_{V_{j+1},\ldots,V_n}  &\leq \langle s_{\vec{r}^{(x, j)}}^{(x,j)} s^{(x',j)}_{\vec{r}^{(x'{\tiny }, j)}}\rangle_{V_{j+1},\ldots,V_n} \\
    & \leq 2^{2n}\left(\frac{2^n (2^n+1/2)}{2^{2(n+1)}-1}\right)^{n-j}. \label{eq:delta-omega-x-xprime-j-thm2}
\end{align}

Therefore, by combing \eqref{eq:delk-cx-delk-cxprime-a1b1}--\eqref{eq:delta-omega-x-xprime-j-thm2}, we find 
\begin{align}
\langle \partial_\nu C_x \partial_\nu C_{x'} \rangle &\leq \frac{2^{4n+1}}{(2^{2n+2}-1)^2}    \left(\frac{2^n (2^n+1/2)}{2^{2(n+1)}-1}\right)^{n-1}\\
& \leq g(n) \in \mathcal{O}(1/2^{2n}). 
\end{align}
Hence, recalling that  $\partial_\nu C = (i/2N)\sum_{x=1}^N \partial_\nu C_x$, we get  
\begin{align}
    \left\langle (\partial_\nu C)^2 \right\rangle \leq g(n) \in \mathcal{O}\left(\frac{1}{2^{2n}}\right). 
\end{align}

\subsection{Local Cost}
 We now estimate the scaling of the variance of the partial derivative of the local cost function. We define the local cost function as follows:
\begin{align}\label{eq:sumlocalSM}
    C^L&=\frac{1}{nN}\sum_{x=1}^N \sum_{i=1}^{n}C^L_{x,i},\quad \text{with}\quad 
   C_{x,i}^L= \Tr\left[\sigma^{\tout}_{(x,i)} V_n\ldots V_1\sigma^\tin_x V_1\ad\ldots U_n\ad\right]\,,
\end{align}
and where $\sigma^{\tout}_{(x,i)}=\id_{\tin,\overline{i}}\otimes\dya{z^x_i}$. Here $\id_{\tin,\overline{i}}$ indicates the identity over all qubits in the input layer plus all the qubits in the output layer except for qubit $i$.

In what follows we first consider a single term in the summation over $x$ in~\eqref{eq:sumlocalSM}. Here we have to consider the three following cases $i<j$, $i>j$, and $i=j$. Moreover, we remark that  \eqref{eq:tilde-sigma-xin-thm2}--\eqref{eq:psi-pq} remain the same, except for the fact that $\sigma_x^{\tout}$ is replaced by $\sigma^{\tout}_{(x,i)}$.

\subsubsection{Fixed  $x$ and  $i<j$.}\label{sec:case1-thm2}

We first consider the case $i < j$. From \eqref{eq:omega-qp} and \eqref{eq:psi-pq}, we find that 
\begin{align}
    p_k &= q_k = 0, \forall k \in \{j+1, \dots, n\},\\
    p_i &= q_i = z^x_i,\\
    p_k &= q_k, \forall k \in \{1, \dots, i-1, i+1, \dots j-1\}. 
\end{align}
Then, we define the following set of bitstrings of length $n-1$: 
\begin{align}
    \vec{r}^{(x, i, j, \vec{p})} = (p^{(x, i, j)}_1, p^{(x, i, j)}_2, \dots, p^{(x, i, j)}_{i-1}, z^x_i, p^{(x, i, j)}_{i+1}, \dots p^{(x, i, j)}_{j-1}, 0, \dots, 0),\label{eq:rxij}
\end{align}
where $j$ in the superscript implies that $ \vec{r}^{(x, i, j,\vec{p})}$ is a bitstring over all qubits in the output layer, except the $j$-th qubit. The bold notation $\vec{p}$ in~\eqref{eq:rxij} indicates that  each $p^{(x, i, j)}_k \in \{0, 1\}$.  Then from \eqref{eq:omega-qp}, we find that
\begin{align}\label{eq:omega-rxijp-thm2-c1}
\Omega^{(x,i,j,\vec{p})}_{\vec{r}^{(x,i,j,\vec{p})}\vec{r}^{(x,i,j,\vec{p})}}&=\Tr_{\overline{j}}\left[(\id_{\tin,j}\otimes \ket{\vec{r}^{(x,i,j,\vec{p})}}\bra{\vec{r}^{(x,i,j,\vec{p})}}) (\id_{\tin,\overline{i}}\otimes\dya{z^x_i})\right]\\
& = \id_{\tin, j},
\end{align}
which further implies  
\begin{align}
  \Delta(\Omega^{(x,i,j,\vec{p}, \vec{p}')})_{\vec{r}^{(x,i,j,\vec{p})}\vec{r}^{(x,i,j,\vec{p})}}^{\vec{r}^{(x,i,j,\vec{p}')}\vec{r}^{(x,i,j,\vec{p}')}} &= \Tr[\id_{\tin, j}\id_{\tin, j}]-\frac{\Tr[\id_{\tin, j}]\Tr[\id_{\tin, j}]}{2^{n+1}}\\
  & = 2^{n+1}-\frac{2^{2(n+1)}}{2^{n+1}}\\
  & = 0. \label{eq:delta-omega-rxijp-thm2-c1}
\end{align}
Therefore, from \eqref{eq:delk-cx} we get
\begin{align}
        \langle (\partial_\nu C_x)^2 \rangle_{A_j^1, B_j^1}  = 0.   
\end{align}

\subsubsection{Fixed  $x$ and  $i>j$.}\label{sec:case2-thm2}
Let us consider the case when $i>j$. We note that this case is different from the one studied in the previous section due to the fact that the perceptron unitaries do not commute with each other. We now have 
\begin{align}
    p_k &= q_k = 0, \forall k \in \{j+1, \dots, n\},\\
    p_k &= q_k, \forall k \in \{1, \dots, j-1\}. 
\end{align}
Similarly to \eqref{eq:rxij}, we define  here the bitstrings
\begin{align}\label{eq:rbxijp}
    \vec{\hat{r}}^{(x, i, j, \vec{p})} = (p^{(x, i, j)}_1, p^{(x, i, j)}_2, \dots, p^{(x, i, j)}_{j-1},  0, \dots, 0).
\end{align}

In this case, from~\eqref{eq:omega-qp} we obtain the operator
\begin{align}\label{eq:omega-xij-secb-lc-thm2}
        \Omega^{(x,i,j)} &= \Tr_{j+1,\dots,i}[(  \id_{\tin} \otimes \ket{\vec{0}}\bra{\vec{0}}_{j+1, \dots i})V_{j+1}\ad\ldots V_i\ad (\id_{\tin,j+1, \dots, i-1}\otimes\ket{z^x_i}\bra{z^x_i}) V_i\ldots V_{j+1}]\otimes \id_j
\end{align}
where $\id_j$ is the identity over qubit $j$ in the output layer, and where $\ket{\vec{0}}\bra{\vec{0}}_{j+1, \dots i}$ is the projector onto the all-zero state on qubits $j+1,\ldots,i$ in the output layer. Note that now $\Omega^{(x,i,j)}$ in \eqref{eq:omega-xij-secb-lc-thm2}, and $\Delta(\Omega^{(x,i,j)})$ in~\eqref{eq:del-psi-cx-thm2} are independent of the  bitstring $\vec{p}$. 

Similarly, by using \eqref{eq:del-k-cx-thm2} we define 
\begin{align}
    \Psi^{(x,i,j,\vec{p})}_{\vec{\hat{r}}^{(x, i, j, \vec{p})}\vec{\hat{r}}^{(x, i, j, \vec{p})}} &= \Tr_{\overline{j}}[(\id_{\tin,j}\otimes \ket{\vec{\hat{r}}^{(x, i, j, \vec{p})}}\bra{\vec{\hat{r}}^{(x, i, j, \vec{p})}})V_{j-1}\ldots V_{1} (\sigma_x^{\tin}) V_1\ad\ldots V_{j-1}\ad]\label{eq:proof1}\\
    & = q^{(x,i,j,\vec{p})}_{\vec{\hat{r}}^{(x, i, j, \vec{p})}\vec{\hat{r}}^{(x, i, j, \vec{p})}}\varphi^{(x, i, j, \vec{p})}_{\vec{\hat{r}}^{(x, i, j, \vec{p})}\vec{\hat{r}}^{(x, i, j, \vec{p})}},
\end{align}
where 
\begin{align}
    q^{(x,i,j,\vec{p})}_{\vec{\hat{r}}^{(x, i, j, \vec{p})}\vec{\hat{r}}^{(x, i, j, \vec{p})}} & =  \Tr[(\id_{\tin,j}\otimes \ket{\vec{\hat{r}}^{(x, i, j, \vec{p})}}\bra{\vec{\hat{r}}^{(x, i, j, \vec{p})}})V_{j-1}\ldots V_{1} (\sigma_x^{\tin}) V_1\ad\ldots V_{j-1}\ad] ,\\
    \varphi^{(x, i, j, \vec{p})}_{\vec{\hat{r}}^{(x, i, j, \vec{p})}\vec{\hat{r}}^{(x, i, j, \vec{p})}} & = \frac{1}{q^{(x,i,j,\vec{p})}_{\vec{\hat{r}}^{(x, i, j, \vec{p})}\vec{\hat{r}}^{(x, i, j, \vec{p})}}} \Tr_{\overline{j}}[(\id_{\tin,j}\otimes \ket{\vec{\hat{r}}^{(x, i, j, \vec{p})}}\bra{\vec{\hat{r}}^{(x, i, j, \vec{p})}})V_{j-1}\ldots V_{1} (\sigma_x^{\tin}) V_1\ad\ldots V_{j-1}\ad].
\end{align}

We now note that 
\begin{align}
    \Delta( \Psi^{(x,i,j,\vec{p})})_{\vec{\hat{r}}^{(x, i, j, \vec{p})}\vec{\hat{r}}^{(x, i, j, \vec{p})}}^{\vec{\hat{r}}^{(x, i, j, \vec{p})}\vec{\hat{r}}^{(x, i, j, \vec{p})}}  \leq  q^{(x,i,j,\vec{p})}_{\vec{\hat{r}}^{(x, i, j, \vec{p})}\vec{\hat{r}}^{(x, i, j, \vec{p})}} q^{(x,i,j,\vec{p}')}_{\vec{\hat{r}}^{(x, i, j, \vec{p}')}\vec{\hat{r}}^{(x, i, j, \vec{p}')}},\label{eq:delta-psi-xijp-lc-secb-thm2}
\end{align}
which follows from the fact that $\Tr[(\varphi^{(x, i, j, \vec{p})}_{\vec{\hat{r}}^{(x, i, j, \vec{p})}\vec{\hat{r}}^{(x, i, j, \vec{p})}})^2] \leq 1$ as $\varphi^{(x, i, j, \vec{p})}_{\vec{\hat{r}}^{(x, i, j, \vec{p})}\vec{\hat{r}}^{(x, i, j, \vec{p})}}$ is a quantum state. 

Then by combining \eqref{eq:omega-xij-secb-lc-thm2} and \eqref{eq:delta-psi-xijp-lc-secb-thm2}, we get the following inequality: 
\begin{align}
       \langle (\partial_\nu C^L_{x,i})^2 \rangle_{A_j^1, B_j^1}&\leq \frac{2^{2n+1}}{(2^{2n+2}-1)^2} \Tr[(\Omega^{(x,i,j)})^2]\sum_{\vec{p}, \vec{p}'}q^{(x,i,j,\vec{p})}_{\vec{\hat{r}}^{(x, i, j, \vec{p})}\vec{\hat{r}}^{(x, i, j, \vec{p})}} q^{(x,i,j,\vec{p}')}_{\vec{\hat{r}}^{(x, i, j, \vec{p}')}\vec{\hat{r}}^{(x, i, j, \vec{p}')}}\\
       &= \frac{2^{2n+1}}{(2^{2n+2}-1)^2} \Tr[(\Omega^{(x,i,j)})^2]\left(\sum_{\vec{p}}q^{(x,i,j,\vec{p})}_{\vec{\hat{r}}^{(x, i, j, \vec{p})}\vec{\hat{r}}^{(x, i, j, \vec{p})}}\right)\left(\sum_{\vec{p}'} q^{(x,i,j,\vec{p}')}_{\vec{\hat{r}}^{(x, i, j, \vec{p}')}\vec{\hat{r}}^{(x, i, j, \vec{p}')}}\right)\\
       & = \frac{2^{2n+1}}{(2^{2n+2}-1)^2} \Tr[(\Omega^{(x,i,j)})^2], 
\end{align}
where we used the fact that 
\begin{align}\label{eq:qxijp-1-thm2-c2}
    \sum_{\vec{p}}q^{(x,i,j,\vec{p})}_{\vec{\hat{r}}^{(x, i, j, \vec{p})}\vec{\hat{r}}^{(x, i, j, \vec{p})}} =\Tr[(\id_{\tin,1,\ldots,j-1}\otimes \ket{\vec{0}}\bra{\vec{0}}_{j+1,\ldots,n})V_{j-1}\ldots V_{1} (\sigma_x^{\tin}) V_1\ad\ldots V_{j-1}\ad] = 1. 
\end{align}

Finally, by recursively applying Lemma \ref{lemma3}, we find that 
\begin{align}
\langle (\partial_\nu C^L_{x,i})^2 \rangle
 & \leq \frac{2^{2n+1}}{(2^{2n+2}-1)^2}\left\langle \Tr[(\Omega^{(x,i,j)})^2] \right\rangle_{V_{j+1},\ldots,V_{i}} \label{eq:d56} \\
     & \leq h(n) \in \mathcal{O}(1/2^{n})\,.
\end{align}

\subsubsection{Fixed  $x$ and  $i=j$.}\label{sec:case3-thm2}

In this case it can be easily shown that 
\begin{align}\label{eq:omega-xij-thm2-c3}
     \Omega^{(x,j,j)} = \id_{\tin} \otimes \dya{z^x_j},
\end{align}
and hence from~\eqref{eq:del-psi-cx-thm2} we have
\begin{equation}
    \Delta(\Omega^{(x,j)})\leq 2^n.
\end{equation}
Then, following a similar procedure to the one employed in the previous section (see Eqs.\eqref{eq:proof1}--\eqref{eq:qxijp-1-thm2-c2}), we obtain
\begin{align}
     \langle (\partial_\nu C^L_{x,i})^2 \rangle &\leq \frac{2^{3n+1}}{(2^{2n+2}-1)^2}\\
     & \leq h(n) \in \mathcal{O}(1/2^{n}). 
\end{align}

\subsubsection{Different  $x$ and different $i$.}
In this section we consider the cross terms of the form: $\langle \partial_\nu C^L_{x,i}\partial_\nu C^L_{x',i'}\rangle$. We first consider the case when either of $i$ or $i'$ is smaller than $j$. Then from arguments similar to those used in deriving \eqref{eq:omega-rxijp-thm2-c1}--\eqref{eq:delta-omega-rxijp-thm2-c1}, it can be shown that 
\begin{align}
      \Delta(\Omega^{(x,x',i,i',j,\vec{p}, \vec{p}')})_{\vec{r}^{(x,i,j,\vec{p})}\vec{r}^{(x,i,j,\vec{p})}}^{\vec{r}^{(x',i',j,\vec{p}')}\vec{r}^{(x',i',j,\vec{p}')}} = 0,
\end{align}
and therefore, 
\begin{align}
\langle (\partial_\nu C^L_{x,i}\partial_\nu C^L_{x',i'})   \rangle  = 0 . 
\end{align}

Let us now consider the case when $i=j$ and $i'>j$. Then from \eqref{eq:omega-xij-thm2-c3} and \eqref{eq:omega-xij-secb-lc-thm2} it follows that 
\begin{align}
     \Omega^{(x,i,j)} &= \id_{\tin} \otimes \dya{z^x_j},\\
             \Omega^{(x',i',j)} &= \Tr_{j+1,\dots,i'}[(  \id_{\tin} \otimes \ket{\vec{0}}\bra{\vec{0}}_{j+1, \dots i'})V_{j+1}\ad\ldots V_{i'}\ad (\id_{\tin,j+1, \dots, i'-1}\otimes\ket{z^x_{i'}}\bra{z^x_{i'}}) V_{i'}\ldots V_{j+1}]\otimes \id_j,
\end{align}
which further implies that $\Delta(\Omega^{(x,x',i,i',j)})$ in~\eqref{eq:del-psi-cx-thm2} is independent of the bitstrings $\vec{p}$ and $\vec{p}'$, and hence
\begin{align}
     \Delta(\Omega^{(x,x',i,i',j)}) &\leq \Tr[\widetilde{\Omega}^{(x',i',j)}]\Tr[ \dya{z^x_j}]\\
     & = \Tr[\widetilde{\Omega}^{(x',i',j)}],
\end{align}
where 
\begin{align}
\widetilde{\Omega}^{(x',i',j)} = \Tr_{j+1,\dots,i'}[(  \id_{\tin} \otimes \ket{\vec{0}}\bra{\vec{0}}_{j+1, \dots i'})V_{j+1}\ad\ldots V_{i'}\ad (\id_{\tin,j+1, \dots, i'-1}\otimes\ket{z^x_{i'}}\bra{z^x_{i'}}) V_{i'}\ldots V_{j+1}]. 
\end{align}

Finally, by using  arguments similar to those used in deriving \eqref{eq:qxijp-1-thm2-c2}, and by recursively invoking Lemma~\ref{lemma1}, the following bound holds:
\begin{align}
\langle (\partial_\nu C^L_{x,i}\partial_\nu C^L_{x',i'})   \rangle  &\leq \frac{2^{3n+1}}{(2^{2n+2}-1)^2}\label{eq:d69} \\
     & \leq h(n) \in \mathcal{O}(1/2^{n}). 
\end{align}

Let us finally consider the final case when both $i$ and $i'$ are greater than $j$ and $i<i'$. By following the proof in \ref{sec:case2-thm2}, we find that 
\begin{align}
    \langle (\partial_\nu C^L_{x,i}\partial_k  &C^L_{x',i'})   \rangle_{A_j^1, B_j^1}\nonumber \\
   &\leq \frac{2^{2n+1}}{(2^{2n+2}-1)^2} \left(\Tr[\Omega^{(x,i,j)}\Omega^{(x',i',j)}]\right)\left(\sum_{\vec{p}}q^{(x,i,j,\vec{p})}_{\vec{\hat{r}}^{(x, i, j, \vec{p})}\vec{\hat{r}}^{(x, i, j, \vec{p})}}\right)\left(\sum_{\vec{p}'} q^{(x',i',j,\vec{p}')}_{\vec{\hat{r}}^{(x', i', j, \vec{p}')}\vec{\hat{r}}^{(x', i', j, \vec{p}')}}\right)\\
       & = \frac{2^{2n+1}}{(2^{2n+2}-1)^2} \left(\Tr[\Omega^{(x,i,j)}\Omega^{(x',i',j)}]\right).
\end{align}

Then by following arguments as the one used in deriving \eqref{eq:d56} and \eqref{eq:d69}, we get
\begin{align}
    \langle (\partial_  k C^L_{x,i}\partial_\nu C^L_{x',i'})   \rangle 
     & \leq h(n) \in \mathcal{O}(1/2^{n}).
\end{align}
Therefore, by combining results from Sections \ref{sec:case1-thm2}--\ref{sec:case3-thm2}, it follows that 
\begin{align}
    \langle (\partial_\nu C^L)^2 \rangle \leq h(n) \in \mathcal{O}(1/2^n).
\end{align}
\end{proof}

%\color{red}
\section{DQNNs with unitaries acting on $n+m$ qubits}\label{sec:dqnnnm}

In this section, we generalize our results for the case when unitaries in a DQNN acts on $n+m$ qubits. Here, $n$ denotes the number of qubits in the layer $l$ and $m$ qubits are from the layer $l+1$. In Sections \ref{sec:proof-theo1} and \ref{sec:prooftheo2}, we proved our results for the case when $m=1$. For instance, in Fig.~\ref{fig:new1}, we show a DQNN with two layers where the initial layer has 2 nodes and the final layer has 4 nodes. Here, $n=m=2$, and the action of perceptrons can be described in terms of two unitaries, each acting on four qubits (2 in the input layer, and 2 in the output layer). Note that, we henceforth assume that each qubit in the $(l+1)$-th layers is acted upon by only one perceptron. 

Since our theorem statements for this case are generalizable from the existing proofs, we only provide a brief sketch of our proof. As discussed previously, Theorem~1 and Theorem~2 for two different methods of updating the perceptrons. Below, we guide readers to a derivation similar to that of Section~\ref{sec:gc-thm1} and then state a new theorem for the general case. 

Let us consider the case when the cost function is defined in terms of the global operator in \eqref{eq:globalopSM}. We consider a particular example where the output state is on $mn$ qubits. In general, it does not have to depend on $n$. 

Following the arguments similar to \eqref{eq:gxj-hxj}--\eqref{eq:Bpq-j}, we get 
\begin{align}\label{eq:pkqk-sectionE}
    p_k &= q_k = 0, \forall k \in \{mj+1, \dots, mn\}~, 
    p_k = q_k = z_k^x, \forall k \in \{1, \dots, m(j-1)\}~.
\end{align}

To write \eqref{eq:pkqk-sectionE} compactly, we define the following bitstring of length $m(n-1)$: 
\begin{align}
    \vec{r}^{(x, j)} \equiv (z_1^x, z_2^x, \dots, z^x_{m(j-1)}, 0, \dots, 0)~.
\end{align}

\begin{figure}[t]
    \centering
    \includegraphics[width=.7\columnwidth]{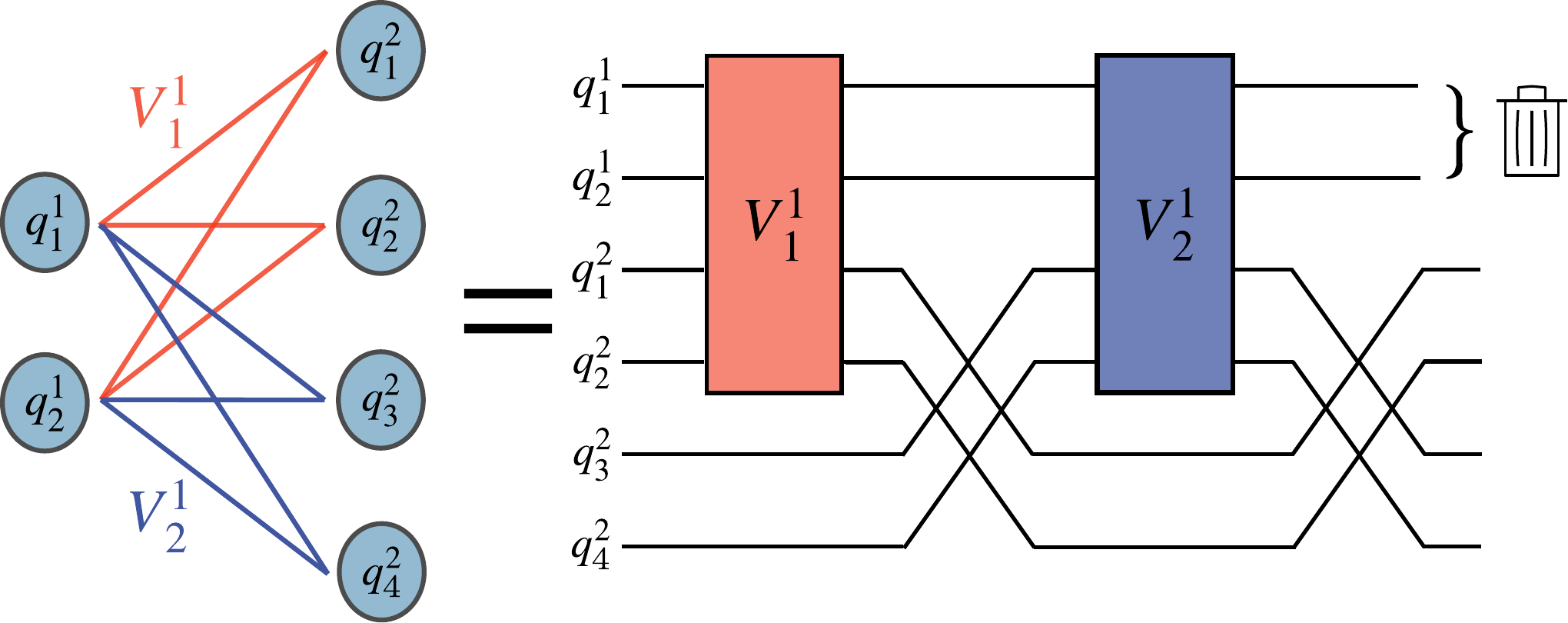}
    \caption{Schematic diagram of a DQNN where each perceptron acts non trivially on $n$ qubits in the $l$-th layer, and $m$ qubits in the $(l+1)$-th layer. Shown is the case when $n=m=2$.}
    \label{fig:new1}
\end{figure}

Using $\vec{r}^{(x,j)}$ we define $A^{(x,j)}_{\vec{r}^{(x, j)}\vec{r}^{(x, j)}}$ and $B^{(x,j)}_{\vec{r}^{(x, j)}\vec{r}^{(x, j)}}$, as in \eqref{eq:Aqp-j} and \eqref{eq:Bpq-j}, respectively. Then by invoking Lemma \ref{eq:lemma3} we get 
\begin{align}
\langle (\Tr[G^x_j H_j])^2\rangle_{V_j}  &=    \int d\mu(V_j)  \Tr[V_j A^{(x,j)}_{\vec{r}^{(x, j)}\vec{r}^{(x, j)}} V_j\ad B^{(x,j)}_{\vec{r}^{(x, j)}\vec{r}^{(x, j)}}]\Tr[V_j A^{(x,j)}_{\vec{r}^{(x, j)} \vec{r}^{(x, j)}} V_j\ad B^{(x,j)}_{\vec{r}^{(x, j)}\vec{r}^{(x, j)}}] \\
&= \frac{1}{2^{2(n+m)}-1}\left(\Tr[( A^{(x,j)}_{\vec{r}^{(x, j)}\vec{r}^{(x, j)}})^2] - \frac{1}{2^{n+m}} \Tr[ A^{(x,j)}_{\vec{r}^{(x, j)}\vec{r}^{(x, j)}}]^2\right)\Tr[(B^{(x,j)}_{\vec{r}^{(x, j)}\vec{r}^{(x, j)}})^2]\\
& \leq \frac{1}{2^{2(n+m)}-1}\Tr[ (A^{(x,j)}_{\vec{r}^{(x, j)}\vec{r}^{(x, j)}})^2]\Tr[(B^{(x,j)}_{\vec{r}^{(x, j)}\vec{r}^{(x, j)}})^2]\,, 
\end{align}
where for getting the inequality we used the fact that $\Tr[ A^{(x,j)}_{\vec{r}^{(x, j)}\vec{r}^{(x, j)}}]>0$ as $A^{(x,j)}_{\vec{r}^{(x, j)}\vec{r}^{(x, j)}}$ is a positive semidefinite operator.

We follow \eqref{eq:Abaverage} to compute the upper bound on the expectation value of $(\Tr[G^x_j H_j])^2$. Let us note that since $H_j$ only acts on all input qubits and on output qubits $m(j-1)+1, \dots mj$, i.e., $m$ output qubits in total, then   $B^{(x,j)}_{\vec{r}^{(x, j)}\vec{r}^{(x, j)}}$ can be expressed in the following compact form \begin{align}
B^{(x,j)}_{\vec{r}^{(x, j)}\vec{r}^{(x, j)}} = s_{\vec{r}^{(x, j)}}^{(x,j)} [\omega_{\vec{r}^{(x, j)}}^{(x,j)}, H_j],\label{eq:bppj}
\end{align}
where 
\begin{align}
\omega_{\vec{r}^{(x, j)}}^{(x,j)} &= \frac{1}{s_{\vec{r}^{(x, j)}}^{(x,j)}}\Tr_{\overline{j}}[( \id_{\tin, j}\otimes \ket{\vec{r}^{(x, j)}}\bra{\vec{r}^{(x, j)}} )V_{j+1}\ad\ldots V_n\ad\sigma_x^{\text{out}} V_n \ldots V_{j+1}],\\
s_{\vec{r}^{(x, j)}}^{(x,j)} &= \Tr[( \id_{\tin, j}\otimes \ket{\vec{r}^{(x, j)}}\bra{\vec{r}^{(x, j)}} )V_{j+1}\ad\ldots V_n\ad\sigma_x^{\text{out}} V_n \ldots V_{j+1}].
\end{align}

Here, the subscript $\overline{j}$ implies all output qubits besides $\{m(j-1)+1, m(j-1)+2, \dots, mj\}$. Similarly, $\id_{\tin, j}$ implies an identity acting on all qubits in the input layer and on qubits $\{m(j-1)+1, m(j-1)+2, \dots, mj\}$ in the output layer. We henceforth follow this notation.   
Moreover, note that $\omega_{\vec{r}^{(x, j)}}^{(x,j)}$ is a quantum state on all qubits in the input layer plus $j$-th block of $m$ qubits (i.e., $\{m(j-1)+1, m(j-1)+2, \dots, mj\}$ qubits) in the output layer. Then from arguments similar to those used in deriving \eqref{eq:ineqtauH-init}--\eqref{eq:ineqtauH}, we get 
\begin{align}
      \Tr[([\omega_{\vec{r}^{(x, j)}}^{(x,j)}, H_j])^2] \leq 2^{n+m+1}~,\label{eq:ineqtauH-new}
\end{align} 
where we used the assumption that $\Tr[(H_j)^2]\leq 2^{n+m}$. 

Finally, by combining Eqs.~\eqref{eq:bppj} and~\eqref{eq:ineqtauH-new}, we get that
\begin{align}
    \Tr[(B^{(j)}_{\vec{r}^{(x, j)}\vec{r}^{(x, j)}})^2] &= (s_{\vec{r}^{(x, j)}}^{(x,j)})^2\Tr[([\omega_{\vec{r}^{(x, j)}}^{(x,j)}, H_j])^2]\nonumber\\
    & \leq 2^{n+m+1} (s_{\vec{r}^{(x, j)}}^{(x,j)})^2.\label{eq:bound-bppj-new}
\end{align}

Let us now evaluate the term  $(s_{\vec{r}^{(x, j)}}^{(x,j)})^2$. By invoking Lemma~\ref{lemma1}, we get
\begin{align}\label{eq:sxj-new}
    s_{\vec{r}^{(x, j)}}^{(x,j)} &= \Tr[V_{j+1}( \id_{\tin, j}\otimes \ket{\vec{r}^{(x, j)}}\bra{\vec{r}^{(x, j)}})V_{j+1}\ad V_{j+2}\ad\cdots V_n\ad \sigma_x^{\tout}V_n\cdots V_{j+2}]\\
    & =\sum_{\vec{p}'\vec{q}'} \Tr[V_{j+1} C_{\vec{q}'\vec{p}'}^{(x,j+1)} V_{j+1}\ad D^{(x,j+1)}_{\vec{p}'\vec{q}'}].
\end{align}
Here the summation is over all  bitstrings $\vec{p}'$ and $\vec{q}'$  of length $m(n-1)$, and 
\begin{align}
    C_{\vec{q}'\vec{p}'}^{(x,j+1)} & = \Tr_{\overline{j+1}}[( \id_{\tin, j+1}\otimes \ket{\vec{p}'}\bra{\vec{q}'})( \id_{\tin, j}\otimes \ket{\vec{r}^{(x, j)}}\bra{\vec{r}^{(x, j)}} )], \\
     D^{(x,j+1)}_{\vec{p}'\vec{q}'} & = \Tr_{\overline{j+1}}[( \id_{\tin, j+1}\otimes \ket{\vec{q}'}\bra{\vec{p}'})V_{j+2}\ad \cdots V_n\ad \sigma_x^{\tout}V_n\cdots V_{j+2}]\,,
\end{align}
where $\Tr_{\overline{j+1}}$ indicates the trace over all qubits in the output layer except qubits $\{mj+1,\dots, m(j+1)\}$.

Then from arguments similar to those used to deriving \eqref{eq:pq-n} and~\eqref{eq:pq-1}, we find 
\begin{align}
    q'_k &= p'_k=z^x_k, \forall k \in \{m(j-1)+1, \dots, mj\}\\
    q'_k & = p'_k = r^{(x, j)}_k, \forall k \in \{1,2, \dots, m(j-1), m(j+1)+1,\dots, mn\}.
\end{align}
Let 
\begin{align}
\vec{r}^{(x,j+1)}\equiv (r^{(x, j)}_1, \dots, r^{(x, j)}_{m(j-1)},z^x_{m(j-1)+1},\dots, z^x_{mj},r^{(x, j)}_{m(j+1)+1}, \dots r^{(x, j)}_{n})~.
\end{align}
Then the average of $s_{\vec{r}^{(x, j)}}^{(x,j)}$ over $V_{j+1}$ can be upper bounded as follows: 
\begin{align}
    s_{\vec{r}^{(x, j)}}^{(x,j)} 
     \leq \frac{2^n (2^n+1/2)(s_{\vec{r}^{(x, j+1)}}^{(x,j+1)})^2}{2^{2(n+m)}-1},\label{eq:sxj-avg-new}
\end{align}
where we employed~\eqref{eq:lemma3} and used arguments similar to those used in deriving \eqref{eq:sxj-avg}. 

Here we remark that from   $s_{\vec{r}^{(x, j+1)}}^{(x,j+1)}$ we can always define an operator $s_{\vec{r}^{(x, j+2)}}^{(x,j+2)}$ according to Eqs.~\eqref{eq:sxj}--\eqref{eq;recursion}. Moreover, by using the assumption that all randomly initialized perceptrons form $2$-designs, we can recursively average over $V_{j+2}, \dots, V_n$. Therefore, from \eqref{eq:bound-bppj-new}, we get  
\begin{align}
    \langle\Tr[(B^{(x,j)}_{\vec{r}^{(x, j)}\vec{r}^{(x, j)}})^2] \rangle_{V_{j+1},\ldots V_{n}} & \leq (s^{(x, n)}_{\vec{r}^{(x,n)}})^22^{n+m+1}\left(\frac{2^n (2^n+1/2)}{2^{2(n+m)}-1}\right)^{n-j}\\
& = 2^{3n+m+1}\left(\frac{2^n (2^n+1/2)}{2^{2(n+m)}-1}\right)^{n-j},    
\end{align}
where we used that fact that $s^{(x, n)}_{\vec{r}^{(x,n)}} = \Tr[\sigma_x^\tout]= 2^{n}$. 

We now compute the average of $\Tr[ (A^{(x,j)}_{\vec{r}^{(x, j)}\vec{r}^{(x, j)}})^2]$ over $V_1, \dots, V_{j-1}$. By following a similar procedure to the one previously employed, and we get 
\begin{align}
\langle(\Tr[ (A^{(j) }_{\vec{r}^{(x, j)}\vec{r}^{(x, j)}})^2])^2 \rangle_{V_{1},\ldots V_{j-1}}  \leq \left(\frac{2^n (2^n+1/2) }{2^{2(n+m)}-1}\right)^{j-1}~. \label{eq:aj-avg-new}
\end{align}

Then from \eqref{eq:gxj-hxj-vj}, \eqref{eq:bj-avg}, and \eqref{eq:aj-avg}, it follows that 
\begin{align}
\langle (\Tr[G^x_j H_j])^2 \rangle
& \leq \frac{2^{3n+m+1}}{{2^{2(n+m)}-1}} \left(\frac{2^n (2^n+1/2) }{2^{2(n+m)}-1}\right)^{n-1}\\
& \leq f(n,m) \in \mathcal{O}(1/2^{(2m-1)n}).
\end{align}

Thus, using the aforementioned result, and from a similar analysis of Sections \ref{sec:fixedjdiffx-prev}--\ref{sec:lc-sec-c-thm1}, we find that for both global cost functions
\begin{align}
    \langle (\partial_s C^G)^2 \rangle  \leq f(n,m) \in \mathcal{O}(1/2^{(2m-1)n})~.
\end{align}

Similarly, one con generalize the proof for local cost functions.

\section{DQNNs with hidden layers}\label{sec:dqnn-hidden}

\begin{figure}[t]
    \centering
    \includegraphics[width=.4\columnwidth]{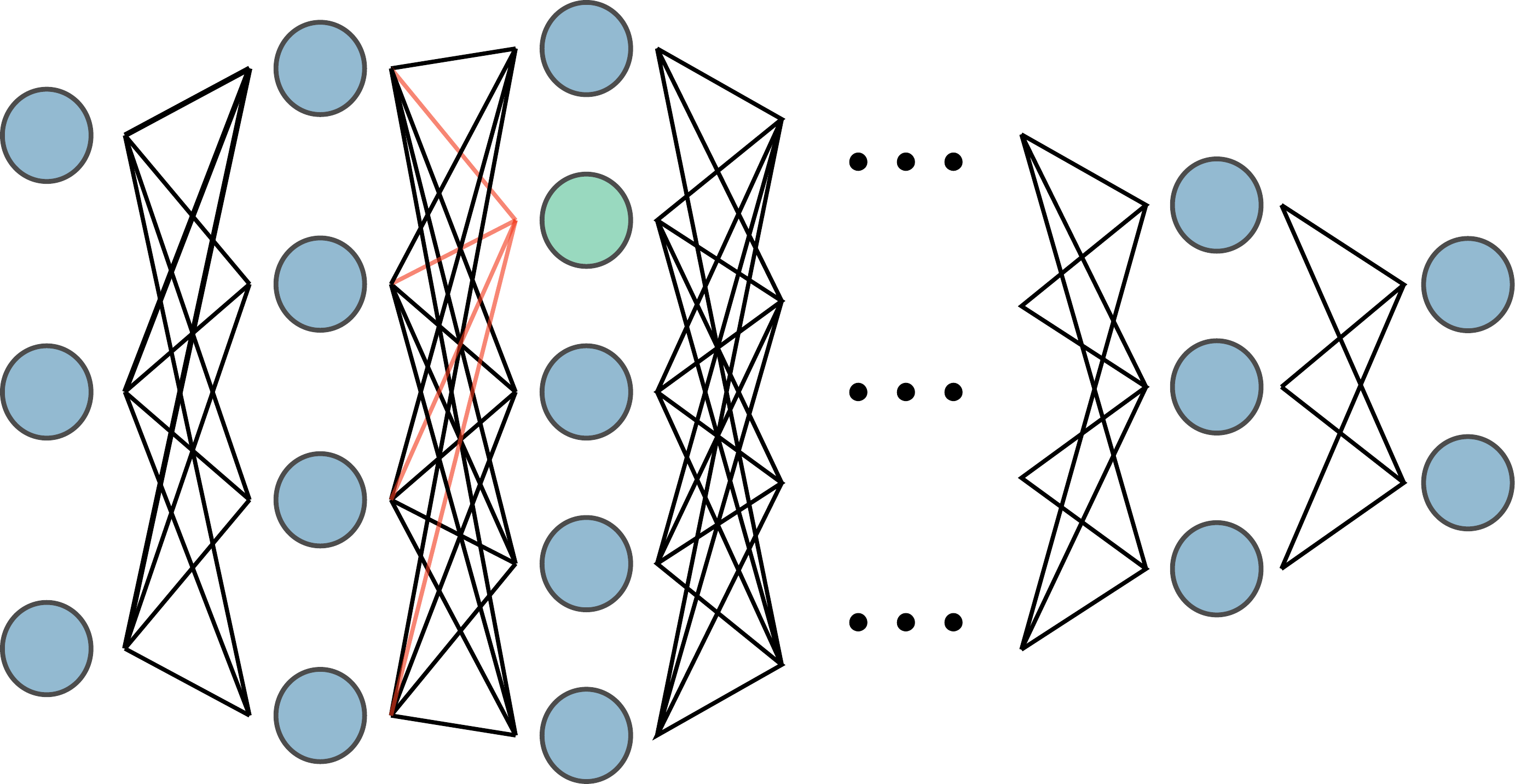}
    \caption{Schematic diagram for generalizing the results to a DQNN with an arbitrary number of layers. We now consider the case when the partial derivative is taken with respect to a parameter in $V_j^l$, i.e., in a perceptron of the $l$-th layer acting on the $j$-th qubit of the $(l+1)$-th layer.  }
    \label{fig:new}
\end{figure}

In this section we show how the results obtained in Sections \ref{sec:proof-theo1} and \ref{sec:prooftheo2} can be generalized to the case when the DQNN has $L$ hidden layers. 

First, let us follow the proof in Section~\ref{sec:prooftheo2} for a single input state labeled $x$, and note that here we can redefine the following quantities from Eq.~\eqref{eq:tilde-sigma-xin-thm2}:
\begin{align}
    \sigma_x^{\tin} &= \ketbra{\phi_x^{\tin}}{\phi_x^{\tin}}\otimes\ketbra{\vec{0}}{\vec{0}}_{\thid}\otimes  \ketbra{\vec{0}}{\vec{0}}_{\text{out}}\,,\label{eq:sigmas2} \\
    \sigma_x^{\tout}  &=  \id_{\tin}\otimes \id_{\thid}\otimes O_x\,. \label{eq:sigmas3}
\end{align}
Here, $\ket{\vec{0}}_{\thid}$ denotes the input state to the hidden layers, and $\id_{\thid}$ denotes the identity in the Hilbert space associated with the qubits in the hidden layers.

As shown in Fig.~\eqref{fig:new}, we now consider the case when the partial derivative is taken with respect to a parameter in $V_j^l$, i.e., in a perceptron of the $l$-th layer acting on the $j$-th qubit of the $(l+1)$-th layer. For convenience of notation, let us here define define the following sets of qubit indexes:
\begin{itemize}
    \item Set $\SC_1$: composed of all the indexes for qubits  in the input layer, and all those for qubits in the first $(l-2)$ hidden layers.
    \item Set $\SC_2$: composed of all the indexes for qubits  in the  $(l-1)$-th hidden layers.
    \item Set $\SC_3$: composed of all the indexes for qubits in the $l$-th layer with indexes smaller than j.
    \item Set $\SC_4$: composed of all the indexes for qubits in the $l$-th layer with indexes larger than j, and all those for the qubits in remaining hidden layers with indexes larger than $l$.
\end{itemize}
Therefore, the union of these sets along with the index of $j$-th qubit in the $l$-th layer describe all the qubits in the DQNN. 

Here we note that the action of the unitaries prior to $V_j^l$, make it so that the quantum state in Eq.~\eqref{eq:sigmas2} can be expressed as
\begin{align}
    \sigma_x^{(i,j-)} &= \dya{\phi_x^{(i,j)}}_{\SC_1,\SC_2}\otimes \ketbra{\vec{0}}{\vec{0}}_{j,\SC_3} \,,\label{eq:sigmasn1} 
\end{align}
where $\ket{\phi_x^{(i,j)}}_{\SC_1,\SC_2}$ is the joint state of all the qubits with indexes in the sets $\SC_1$ and $\SC_2$.

From the definition in~\eqref{eq:omega-qp} and from \eqref{eq:psi-pq}, it follows that
\begin{align}
    &p_k = q_k =0,  \quad \text{ for all qubits with index in $\SC_4$} \,.\label{eq:pq-n-n1}\\
    &p_k = q_k = z^{x}_k,   \quad \text{ for all qubits with index in $\SC_2$ and $\SC_3$}\,,  \label{eq:pq-1-n2}\\
    &p_k = q_k\,,  \quad \text{ for all qubits with index in $\SC_1$}\,.  \label{eq:pq-1-n3}
\end{align}
Then, let us recall the definition $\Omega_{\vec{q}\vec{p}}^{(x,j)}=\Tr_{\SC_1,\SC_3,\SC_4}\left[(\id_{\SC_2,j}\otimes \ket{\vec{p}}\bra{\vec{q}}_{\SC_1,\SC_3,\SC_4}) \widetilde{\sigma}^\tout_x \right]$. Here, since $\widetilde{\sigma}^\tout_x$ acts trivially on all qubits in $\SC_1$, we have that 
\begin{equation}
    \Tr_{\SC_1,\SC_3,\SC_4}\left[(\id_{\SC_2,j}\otimes \ket{\vec{p}}\bra{\vec{q}}_{\SC_1,\SC_3,\SC_4}) \widetilde{\sigma}^\tout_x \right]=\Tr_{\SC_3,\SC_4}\left[(\id_{\SC_2,j}\otimes \ket{\vec{p}}\bra{\vec{q}}_{\SC_3,\SC_4}) (\id_{\SC_2,j,\SC_3,\SC_4} \otimes O_x) \right]\,.
\end{equation}
and hence $\Omega_{\vec{q}\vec{p}}^{(x,j)}$ is independent of the $\vec{q}$ and $\vec{p}$ indexes in $\SC_1$. 

From Eq.~\eqref{eq:delk-cx}, we can now take the summation over $\vec{q}$ and $\vec{p}$ indexes in $\SC_1$ to note that
\begin{align}\label{eq:delk-cx}
   \sum_{\vec{p}\vec{q}\in\SC_1}\Psi_{\vec{q}\vec{p}}^{(x,j)}=\Tr_{\SC_1,\SC_3,\SC_4}\left[(\id_{\SC_2,j}\otimes \ket{\vec{p}}\bra{\vec{q}}_{\SC_1,\SC_3,\SC_4})\widetilde{\sigma}^\tin_x\right]=\Tr_{\SC_1,\SC_3,\SC_4}\left[(\id_{\SC_1,\SC_2,j}\otimes \ket{\vec{p}}\bra{\vec{q}}_{\SC_3,\SC_4})\widetilde{\sigma}^\tin_x\right]\,,
\end{align}   
where we have used the fact that $\sum_{\vec{p}\vec{q}\in\SC_1}\ket{\vec{p}}\bra{\vec{q}}_{\SC_1}=\sum_{\vec{p}\vec{q}\in\SC_1}\ket{\vec{p}}\bra{\vec{p}}_{\SC_1}=\id_{\SC_1}$ (here $p_k=q_k$ from Eq.~\eqref{eq:pq-1-n3}).

The latter allows us to get rid of all qubit indexes in $\SC_1$. Here, the remaining sets of equations
\begin{align}
    &p_k = q_k =0,  \quad \text{ for all qubits with index in $\SC_4$} \,.\label{eq:pq-n-n1}\\
    &p_k = q_k = z^{x}_k,   \quad \text{ for all qubits with index in $\SC_2$ and $\SC_3$}\,,  \label{eq:pq-1-n2}
\end{align}
are exactly like those in Eqs.~\eqref{eq:pqk-1} and  \eqref{eq:pqk-n} of Section~\ref{sec:prooftheo2}. One can now follow steps similar to those used in deriving Eq.~\eqref{eq:new-m1} and Eq.~\eqref{eq:srxj-thm2} to find that for a global cost function
\begin{align}
     \langle (\partial_\nu C_x)^2 \rangle  \leq g(n) \in \mathcal{O}(1/2^{2n}).
\end{align}

A similar analysis can be done for cross terms in $x$ to show that if $\partial_\nu C = (i/2N)\sum_{x=1}^N \partial_\nu C_x$, then 
\begin{align}
    \left\langle (\partial_\nu C)^2 \right\rangle \leq g(n) \in \mathcal{O}\left(\frac{1}{2^{2n}}\right). 
\end{align} 

Similarly, one con generalize the proof for local cost functions with no hidden layers, to the case when the DQNN has $L$ hidden layers.

\color{black}

\end{document}